\documentclass[12pt]{article}

\usepackage{newpxtext}

\usepackage[margin=1in]{geometry}
\usepackage{amsmath,amssymb,amsthm,amsbsy,amscd,mathtools}
\usepackage{thmtools}
\usepackage{bm}
\usepackage{bbm}
\usepackage{latexsym}
\allowdisplaybreaks
\numberwithin{equation}{section}
\usepackage{textcomp}
\usepackage{microtype}
\usepackage{framed}
\usepackage{comment}
\usepackage{booktabs}

\usepackage{tikz}
\usetikzlibrary{arrows.meta,decorations.pathreplacing,positioning,calc}

\usepackage[most]{tcolorbox}
\tcbset{breakable}
\usepackage{graphicx}
\usepackage{float}
\usepackage[width=.85\textwidth]{caption}
\usepackage{subcaption}
\usepackage{diagbox}
\usepackage{makecell}
\usepackage{arydshln} 
\usepackage{blkarray}
\usepackage{multirow}
\usepackage{rotating}
\usepackage{enumitem}
\usepackage{algorithmic}
\usepackage[ruled,noend]{algorithm2e}
\usepackage{tikz}
\usetikzlibrary{
  calc,
  positioning,
  fit,
  shapes,
  shapes.misc,
  shapes.geometric,
  decorations.pathmorphing,
  decorations.markings,
  arrows.meta,
  backgrounds,
  patterns
}
\usepackage[backend=bibtex,sorting=nyt,maxbibnames=99]{biblatex}
\usepackage{breakcites}
\usepackage{hyperref}
\hypersetup{
  colorlinks=true,
  citecolor=blue,
  linkcolor=blue
}
\usepackage{fancyhdr}
\setlist[itemize]{
  topsep=4pt,
  itemsep=2pt,
  parsep=0pt,
  leftmargin=1.2em,
  label=\textbullet
}
\setlist[enumerate]{
  topsep=4pt,
  itemsep=2pt,
  parsep=0pt,
  leftmargin=1.5em,
  label=\textbf{\arabic*.}
}

\normalsize
\def\abs#1{\left\lvert#1\right\rvert}

\def\Abs#1{\bigl\lvert#1\bigr\rvert}

\newcommand{\R}{\mathbb{R}}

\newcommand{\eps}{\varepsilon}
\newcommand{\ep}{\varepsilon}

\newcommand{\defeq}{:=}

\newcommand{\PNE}{{\mathbf{PNE}}}
\newcommand{\Po}{\bm{\mathrm{Poi}}}

\newcommand{\lp}{\left(}
\newcommand{\rp}{\right)}

\newcommand{\cC}{{\calC}}

\newcommand{\calE}{\mathcal E}
\newcommand{\calF}{\mathcal F}
\newcommand{\calC}{\mathcal C}

\newcommand{\calD}{\mathcal D_n}
\newcommand{\calP}{\mathcal P}

\newcommand{\calI}{\mathcal I}
\newcommand{\calJ}{\mathcal J}
\newcommand{\calT}{\mathcal T}

\newcommand{\calL}{\mathcal L}
\newcommand{\calR}{\mathcal R}

\newcommand{\calU}{\mathcal U}

\newcommand{\red}[1]{{\color{red}{#1}}}
\newcommand{\blue}[1]{{\color{blue}{#1}}}

\newcommand{\rui}[1]{{\bf\color{red}{#1}}}

\tikzset{
  onethirdarrow/.style={
    decoration={
      markings,
      mark= at position 0.4 with {\arrow{#1}}
    },
    postaction={decorate}
  }
}
\tikzset{
  twothirdarrow/.style={
    decoration={
      markings,
      mark= at position 0.75 with {\arrow{#1}}
    },
    postaction={decorate}
  }
}

\renewcommand{\ge}{\geqslant}
\renewcommand{\le}{\leqslant}
\renewcommand{\geq}{\geqslant}
\renewcommand{\leq}{\leqslant}
\newcommand{\E}[1]{\mathbb E\left[\,#1\,\right]}

\newcommand{\V}[1]{\mathbf{Var}\left[\,#1\,\right]}
\renewcommand{\P}[1]{\mathbb{P}\left(#1\right)}
\newcommand{\Exp}[1]{\exp\left(#1\right)}
\renewcommand{\O}[1]{O\left(#1\right)}

\newcommand{\Log}[1]{\log\left(#1\right)}

\newcommand{\floor}[1]{\left\lfloor\,#1\,\right\rfloor}
\newcommand{\ceil}[1]{\left\lceil\,#1\,\right\rceil}

\renewcommand{\r}{\mathbf r}
\renewcommand{\c}{\mathbf c}

\renewcommand{\i}{\mathbf i}
\renewcommand{\j}{\mathbf j}
\newcommand{\I}[1]{\mathbbm 1_{\left\{\,#1\,\,\right\}}}
\renewcommand{\I}[1]{\textbf{1}_{\{#1\}}}

\newcommand{\bR}[1]{\left(#1\right)}
\newcommand{\bC}[1]{\Big\{#1\Big\}}
\newcommand{\bS}[1]{\left[ #1 \right]}
\renewcommand{\dfrac}[2]{\lower0.12ex\hbox{\large$\textstyle\frac{#1}{#2}$}}
\newtheorem{theorem}{Theorem}[section]
\newtheorem{tm}[theorem]{Theorem}
\newtheorem{lemma}[theorem]{Lemma}

\newtheorem{claim}[theorem]{Claim}
\newtheorem{proposition}[theorem]{Proposition}

\newtheorem{definition}[theorem]{Definition}

\newtheorem{conj}[theorem]{Conjecture}

\usepackage[T1]{fontenc}
\renewcommand{\,}{\hspace{0.05em}}

\title{\LARGE{\textbf{Cycle structure of best-response dynamics in random bimatrix games}}}
\author{
\normalsize
Georgios Nikolaos Karelas\thanks{
Department of Economics and Business, 
Pompeu Fabra University; 
Barcelona School of Economics,\\
georgios.karelas@upf.edu.}
\and
\normalsize
G\'abor Lugosi\thanks{
Department of Economics and Business, 
Pompeu  Fabra University;
ICREA, Pg. Lluis Companys 23, 
08010 Barcelona, Spain; 
Barcelona School of Economics,
gabor.lugosi@gmail.com.}
\and
\normalsize
Rui-Ray Zhang\thanks{
Monash University,
rui.ray.zhang@hotmail.com.}
}
\date{}
\begin{document}
\maketitle

\begin{abstract}
We study the best-response dynamics in games with random payoffs. We consider two-player bimatrix games, with $n$ strategies each. We assume that payoffs come from a continuous distribution, and study the best-response dynamics of these games, after imposing some structure on their payoff matrices. In particular, we focus on symmetric games, antisymmetric games, common-interest games, zero-sum games, and games with independent payoffs. When two players take turns in playing the best response to the opponent's action, they eventually end up cycling through the same sequence of actions. The main focus of the paper is the cycle structure of large random games. In particular, we study the number of cycles, the number of cycles of a certain length, the length of the longest cycle, and related quantities. We show that the structure of the game has an important impact on the typical cycle structure. Our arguments are based on a representation of the best-response dynamics of these games as directed bipartite graphs, and on the cycle structure that arises in them.
\end{abstract}

\section{Introduction}
Best-response dynamics is arguably the most natural way of repeatedly playing a game in multi-agent systems. This dynamical process starts from an initial strategy profile, and at each step, players sequentially update their strategies by choosing the pure strategy that yields the highest payoff, given the other player's chosen strategy. We study best-response dynamics within the framework of two-player, $n$-action random games, assuming payoffs are drawn from the same continuous distribution. 

Because best-response dynamics frequently lead to indefinite cycling over a subset of strategies rather than settling at a pure Nash equilibrium, a deeper investigation into this non-converging behavior is of particular interest. 
These cycles, together with the pure Nash equilibria, act as the natural ``sink equilibria'' of the game. 

In this paper, we provide a rigorous analysis of the typical cycle structure of random games. We consider five types of random games, defined by the relation between the payoff matrices of both players. These games include: games with independent payoffs, symmetric, antisymmetric, zero-sum, and common-interest games. Using the properties of the games, we find the expected total number of cycles, the number of short cycles, as well as the length of the longest cycle and other related combinatorial quantities that characterize the typical geometry of the non-convergent behavior of the best-response dynamics.

\subsection{Random payoffs
and the best-response digraphs
}
\label{section:types of games}
Let $\bm A$ and $\bm B$ denote the two $n\times n$ payoff matrices of the two players. 
The game structure is determined by the relationship between these two matrices. In particular, we consider five types of games, each imposing a different structure on the players' payoff matrices.


The first type is the random game with independent payoffs, in which no structure is imposed. More precisely, the two $n\times n$ matrices $\bm A$ and $\bm B$ of the two players have i.i.d. entries, and they are independent of each other.  We also consider symmetric games, where the payoff matrix of each player is the transpose of their opponent's. More precisely, the matrix $ \bm A$ of the first player has i.i.d.\ entries, and $ \bm B= \bm A^{\mathsf{T}}$. The third type of game is an antisymmetric game, where $\bm B=-\bm A^{\mathsf{T}}$. We also study common-interest games, where players have the same payoff matrices, $\bm A=\bm B$. Lastly, we consider zero-sum games, in which the players' payoffs satisfy $\bm B=-\bm A$.

Note that
for all cases,
we first assume matrix  $\bm A$ has i.i.d. entries,
and then $\bm B$ is defined according to the game types,
the five classes are mutually exclusive almost surely,
as each entry follows the i.i.d. continuous distribution.

\begin{figure}[htb]
\centering
\begin{tikzpicture}[
    font=\sffamily\small,
    node distance=2.1cm and 2.2cm,
    box/.style={
        draw,
        rounded corners=4pt,
        thick,
        fill=white,
        inner sep=4pt,
        minimum width=3.3cm,
        minimum height=1.15cm,
        align=center,
        font=\sffamily\small
    },
    arrow/.style={-{Stealth[length=2.6mm,width=1.8mm]}, line width=1.2pt},
    labeltext/.style={font=\sffamily\footnotesize\itshape}
]
\node[box] (ind)
  {{\bfseries Independent}\\[1pt]{\footnotesize $\bm A$ and $\bm B$ are independent}};
\node[box, above=of ind] (sym)
  {{\bfseries Symmetric}\\[1pt]{\footnotesize $\bm B=\bm A^{\mathsf T}$}};
\node[box, below=of ind] (asym)
  {{\bfseries Antisymmetric}\\[1pt]{\footnotesize $\bm B=-\bm A^{\mathsf T}$}};
\node[box, left=of ind] (zero)
  {{\bfseries Zero-sum}\\[1pt]{\footnotesize $\bm B=-\bm A$}};
\node[box, right=of ind] (common)
  {{\bfseries Common-interest}\\[1pt]{\footnotesize $\bm B=\bm A$}};
\draw[arrow] (ind) -- (sym)
    node[midway, right=2pt, labeltext] {Symmetry};
\draw[arrow] (ind) -- (asym)
    node[midway, right=2pt, labeltext] {Anti-symmetry};
\draw[arrow] (ind) -- (zero)
    node[midway, above=1pt, labeltext] {Conflict};
\draw[arrow] (ind) -- (common)
    node[midway, above=1pt, labeltext] {Cooperation};
\end{tikzpicture}
\label{fig:types-of-games}
\caption{Relations among random games of different types,
which are exclusive almost surely.}
\end{figure}


By a \emph{digraph} (or directed graph), we mean
$D = (V,E)$ where $V(D)$ is a nonempty set of distinct elements called
\emph{vertices}; and $E(D)$ is a set of ordered pairs $(u,v)$ of distinct
vertices $u,v \in V$ called \emph{arcs}.
We use a natural representation of the best-response dynamics in a two-player game by bipartite graphs, which we call {\it best-response digraphs}.
Best-response digraphs allow us to extract all quantities of interest, such as the number of cycles, their length, etc.
We use $[n]$ for the set of integers from $1$ up to $n$, for any integer $n\ge1$. 
The set of all $k$-element subsets of the set $[n]$ is denoted by $\binom{[n]}{k}$.

\begin{tcolorbox}
\begin{definition}
Given a bimatrix game with $n\times n$ payoff matrices $\bm A = (a_{i,j})$ and $\bm B=(b_{i,j})$, 
we define a directed bipartite graph $\calD = \calD( \bm A, \bm B )$ as follows. $\calD$ has 
a vertex bipartition, $R$ and $C$,
where $R = \{r_1, \ldots, r_n\}$
and $C = \{c_1, \ldots, c_n\}$.
We call $R$ and $C$ the row and column parts, respectively. 
Digraph $\calD$ contains arc $(r_i, c_j)$ if and only if $ b_{i,j} = \max_{k \in [n]}b_{i,k}$;
and similarly,
$(c_j, r_i) \subset \calD$ if and only if $ a_{i,j} = \max_{k \in [n]}a_{k,j}$\,. 
\end{definition}
\end{tcolorbox}

\begin{figure}[htbp]
\centering
\begin{tikzpicture}[x=1.8cm,y=1.5cm,>=Latex,rotate=90]
  \tikzset{
    row/.style={ellipse,draw,thick,minimum width=0.1cm,minimum height=0.1cm},
    col/.style={ellipse,draw,thick,minimum width=0.1cm,minimum height=0.1cm},
    rededge/.style={->,very thick},
    blueedge/.style={->,very thick,dotted},
    greenedge/.style={->,very thick,bend left=15}
  }
  \foreach \i in {1,...,6}{
    \node[row] (r\i) at (0,-\i) {};
    \node[col] (c\i) at (1,-\i) {};
    \node[anchor=east] at (-0.3,-\i-0.22) {$c_{\i}$};
    \node[anchor=west] at (1.3,-\i+0.22) {$r_{\i}$};
  }
  \draw[rededge] (r1) -- (c3);
  \draw[rededge] (c3) -- (r2);
  \draw[rededge] (r2) -- (c2);
  \draw[rededge] (c2) -- (r1);
  \draw[blueedge] (r3) -- (c4);
  \draw[blueedge] (c4) -- (r4);
  \draw[blueedge] (r4) -- (c3);
  \draw[blueedge] (c3) -- (r3);
  \draw[greenedge] (r5) to[bend left=25] (c6);
  \draw[greenedge] (c6) to[bend left=25] (r5);
\end{tikzpicture}
\caption{An example of the best-response digraph. 
 The dotted cycle cannot appear in $\calD$, otherwise, strategy $r_3$ has 2 best responses, namely strategies $c_2$ and $c_3$. The pair of strategies $(r_6,c_5)$ is a pure Nash equilibrium.}
\label{fig:enter-label}
\end{figure}

The two sets of $n$ vertices correspond to the $n$ strategies available to each player. 
Two vertices (one from each set) are connected with a directed edge whenever the strategy at the head of the directed edge is a best response to the strategy at the tail of that edge. If a pair of strategies $\{r_i, c_j\}$ 
induces a directed $2$-cycle in $\calD$, 
then $(r_i, c_j)$ is a pure Nash equilibrium.

\begin{tcolorbox}[breakable]
    \begin{definition}
    Let $K_{n,n}$ be the complete bipartite directed graph with bipartition $(R,C)$, where $R=\{r_{1},\ldots,r_{n}\}$ and $C=\{c_{1},\ldots,c_{n}\}$.
For a fixed $\ell\ge1$, consider size-$\ell$ index sets $\calI=\{i_1<i_2<\ldots <i_\ell\}\in \binom{[n]}{\ell}$ and $\mathcal  J=\{j_1 < j_2 < \ldots < j_\ell\}\in
\binom{[n]}{\ell}$. 
Define $S_m$ as the set of permutations on $[m]$. The set of directed $2\ell$-cycles with vertex sets $\calI,\calJ$ is defined as
    \begin{align*}
        &\calC_{2\ell}(\calI,\calJ) 
= \Big\{ (r_{i_1}, c_{j_{\tau(1)}}, r_{i_{\sigma(2)}}, c_{j_{\tau(2)}}, \dots, r_{i_{\sigma(\ell)}}, c_{j_{\tau(\ell)}})
        ~~\big|~~\sigma\in S_{\ell-1},\tau\in S_\ell\Big\},
    \end{align*}
    where each element of $\calC_{2\ell}(\calI,\calJ)$  corresponds to the directed closed walk
    
    $$r_{i_1}\to c_{j_{\tau(1)}}\to  r_{i_{\sigma(2)}}\to  c_{j_{\tau(2)}}\to  \cdots\to  r_{i_{\sigma(\ell)}}\to  c_{j_{\tau(\ell)}}\to r_{i_1}\,.$$
    
    The set of all directed $2\ell$-cycles in $K_{n,n}$ is 
    $$\calC_{2\ell}
    =\bigcup_{\calI, \calJ \in \binom{[n]}{\ell}}\calC_{2\ell}(\calI,\calJ)\,.$$
For a cycle $C=(r_{i_1},c_{j_1},\ldots,r_{i_\ell},c_{j_\ell})\in\calC_{2\ell}$, the position
subscript $k$ in $i_k$ and $j_k$ is always modulo $\ell$; in particular $i_{\ell+1}= i_1$.    
For $C \in \cC_{2\ell}$, let $\calI(C) \subseteq [n]$ be the set of row indices
used by the left vertices of $C$, and let $\calJ(C) \subseteq [n]$ be the set of
column indices used by the right vertices of $C$. 
The \emph{index overlap} of $C$ is
\[
  Z(C) \defeq |\calI(C) \cap \calJ(C)|\,.
\]
    \end{definition}
\end{tcolorbox}
It is easy to check that the number of directed $2\ell$-cycles in $K_{n,n}$ is
\begin{align}
  \Abs{ \calC_{2\ell} }
      = \binom{n}{\ell}^2
\Abs{\calC_{2\ell}([\ell],[\ell])}
    = \binom{n}{\ell}^2
\frac{\ell!^2}{\ell}
    = \frac{(n)_\ell^2}{\ell}.
\label{cycleCount}
\end{align}
Since the payoffs follow a continuous distribution, ties in the payoffs occur with probability zero.
Hence, there is exactly one best response given the other player's strategy, and therefore, there is exactly one directed edge coming out of every node on both sides of the best-response digraph. 
This observation leads to the next lemma about the structural properties of the best-response digraphs, whose proof is a simple exercise.

\begin{tcolorbox}
\begin{lemma}
\label{vertex-disjoint lemma}
Consider a bimatrix game, 
where the best response is unique
given the other player's strategy.
Let $\calD$ be the corresponding best-response digraph.
\begin{enumerate}
\item The digraph $\calD$ is a 
directed $1$-pseudoforest,
that is, a directed graph, in which each vertex has outdegree exactly one;
\item Any two distinct cycles in the graph are vertex-disjoint in $\calD$;
\item 
If two paths in $\calD$ intersect, then from their first point of intersection onward, they coincide.
\item 
If the game is one of the five random models of Section~\ref{section:types of games},
then for any vertex $u$, 
the in-degree of $u$ converges, in distribution, to 
$\Po(1)$.
\end{enumerate}
\end{lemma}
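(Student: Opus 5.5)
The first three items are deterministic consequences of uniqueness of best responses. The fourth is a short probabilistic computation that turns out to be uniform over the five models. I would handle them in order.

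For item 1, fix a row vertex $r_i$. Since the maximizer of $k\mapsto b_{i,k}$ is unique, by the definition of $\calD$ there is exactly one $j$ with $(r_i,c_j)\in E(\calD)$. Symmetrically, each $c_j$ has exactly one out-arc, namely to the unique maximizer of $k\mapsto a_{k,j}$. So every vertex has out-degree one, and $\calD$ is a directed $1$-pseudoforest.

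Both item 2 and item 3 follow from one observation: in a digraph with out-degree exactly one, the walk leaving any vertex $v$ is uniquely determined. Write $f(v)$ for the unique out-neighbour, so the walk is $v, f(v), f^2(v),\dots$. For item 3, suppose two directed paths first meet at a vertex $w$. From $w$ onward, each path must follow $w, f(w), f^2(w),\dots$, so the two paths coincide from $w$ on, for as long as both continue. For item 2, suppose two cycles share a vertex $v$. Each cycle is the orbit $v,f(v),\dots$ up to its first return to $v$. That orbit is the same for both, so the two cycles are equal. Hence distinct cycles are vertex-disjoint.

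For item 4, consider first a column vertex $c_j$. Its in-degree is $\sum_{i=1}^n \I{\beta(i)=j}$, where $\beta(i)=\arg\max_k b_{i,k}$ is the row player's best response. In each model, $\beta(i)$ is a function of a single line of $\bm A$ or $\bm B$, and these lines are disjoint for distinct $i$:
\begin{itemize}
\item independent payoffs: row $i$ of $\bm B$;
\item common interest and zero-sum: $\arg\max$, respectively $\arg\min$, of row $i$ of $\bm A$;
\item symmetric and antisymmetric: $\arg\max$, respectively $\arg\min$, of column $i$ of $\bm A$.
\end{itemize}
Since the entries of $\bm A$ (and, in the independent case, of $\bm B$) are i.i.d.\ continuous, the variables $\beta(1),\dots,\beta(n)$ are independent and each is uniform on $[n]$ by exchangeability. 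The in-degree of $c_j$ is therefore exactly $\mathrm{Bin}(n,1/n)$.

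For a row vertex $r_i$, the in-degree is $\sum_j \I{\alpha(j)=i}$, where $\alpha(j)=\arg\max_k a_{k,j}$ depends only on column $j$ of $\bm A$. This gives the same conclusion in every model.

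Finally, $\mathrm{Bin}(n,1/n)\to\Po(1)$ in distribution by the classical Poisson limit theorem.

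The only step needing care is checking that, in all five models, the in-degree of a single vertex depends only on one player's best responses, and that those responses are determined by disjoint independent lines of $\bm A$. The coupling between $\bm A$ and $\bm B$ never enters the count, because it only relates the two different sides of $\calD$.
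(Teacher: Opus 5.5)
Your proof is correct and is the direct argument the paper has in mind; the paper calls the lemma a simple exercise and gives no proof. You use out-degree one from uniqueness, the successor-map structure for items 2 and 3, and for item 4 independent argmax/argmin indicators coming from disjoint rows or columns of $\bm A$ (or $\bm B$). Your check across all five models actually yields the exact law $\mathrm{Bin}(n,1/n)$ for every in-degree, which is slightly stronger than the stated convergence to $\Po(1)$.
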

\end{tcolorbox}

\subsection{Overview of results and related work}
\label{related work}

\begin{table}[H]
\centering
\small
\renewcommand{\arraystretch}{2}
\setlength{\aboverulesep}{0pt}
\setlength{\belowrulesep}{0pt}
\setlength{\tabcolsep}{6pt}

\begin{tabular}{@{}l|cccc}
\toprule
Games & Payoffs & $\E{N_\PNE}$ & $N_\PNE$ limit law  & Reference\\
\midrule
Independent     & $\bm A,\bm B$ independent  & $1$                           & $\Po(1)$                                                 & \cite{powers1990limiting}\\
Symmetric       & $\bm B=\bm A^{\mathsf T}$  & $2-\frac1n$                   & $2\,\Po(1/2)+\Po'(1)$                                  & \cite{stanford1996limit}\\
Antisymmetric   & $\bm B=-\bm A^{\mathsf T}$ & $1-\frac1n$                   & $\Po(1)$                                                 & Theorem \ref{anti-cycle}\\
Zero-sum        & $\bm B=-\bm A$             & $\frac{(n!)^2}{(2n-1)!}$      & $\P{N_\PNE=0}\to1$                                       & \cite{Goldman1957ThePO}\\
Common-interest & $\bm B=\bm A$              & $\frac{n^2}{2n-1}$            & $ (2N_\PNE-n) \sqrt{\frac{2}{n}} \to
\mathcal N(0,1)$ & \cite{stanford1999number,baldi1989normal}\\
\bottomrule
\end{tabular}
\caption{Pure Nash equilibria correspond to $2$-cycles in $\calD$.}
\label{tab:pne}
\end{table}

\begin{table}[htb]
\renewcommand{\arraystretch}{2}
\setlength{\aboverulesep}{0pt}
\setlength{\belowrulesep}{0pt}
\centering
\begin{tabular}{@{}l|ccc@{}}
\toprule
 & \makecell{Independent/ \\ 
 Antisymmetric/\\ Zero-sum} & Symmetric & Common-interest\\
\midrule
$\E{X_\ell}$
 & $\frac1\ell e^{-\ell^2/n}$
 & $\frac1\ell e^{-2\ell^2/n}+\frac1\ell e^{-\ell^2/(2n)}\I{\ell\text{ odd}}$
 & $0$\\
$X_\ell$ limit law
 & $\Po(1/\ell)$
  & $2\,\Po(1/(2\ell))+\Po'(1/\ell)\I{\ell\text{ odd}}$
 & $0$\\
$\E{\text{total cycle count}}$
 & $\frac12\log n$
 & $\frac34\log n$
 & $\frac{n^2}{2n-1}$\\
Longest cycle $2\ell_{\max}$
 & $\Theta_{\mathbb P}(\sqrt n)$
 & $\Theta_{\mathbb P}(\sqrt n)$
 & $2$\\
\bottomrule
\end{tabular}
\caption{The variable $X_{\ell}$ counts the best-response $2\ell$-cycles in $\calD$
for $\ell\ge2$. The first row shows the asymptotic values of $\E{X_\ell}$ that are valid when $\ell=o(n^{2/3})$. The third row is the asymptotic expected total cycle count.}
\label{tab:cycles}
\end{table}

The study of Nash equilibria and best-response dynamics in random games is a central research topic in game theory.
In tables \ref{tab:pne} and \ref{tab:cycles}, we give a summary of the results of the paper, together with some existing results in the literature.
\medskip \\
\noindent
\textbf{Independent payoff games.} Foundational work in this area includes Goldberg, Goldman and Newman \cite{goldberg1968probability}, who studied random two-player games with independent payoffs and found the probability of existence of a pure-strategy Nash equilibrium, as well as Dresher \cite{dresher1970probability}, who extended this result to multiplayer games. Powers \cite{powers1990limiting} and Stanford \cite{stanford1995note}, considered multiplayer random games and demonstrated that the number of pure-strategy Nash equilibria follows approximately a Poisson distribution with mean $1$. Rinott and Scarsini \cite{rinott2000number} extended this analysis by considering large random games with many players, providing asymptotic results for the distribution of pure Nash equilibria in games with independent payoffs. More recently, Amiet, Collevecchio, Scarsini and Zhong \cite{amiet2021pure} studied multiplayer random games with 2 strategies each, finding asymptotic results for the number of pure Nash equilibria and showing that best-response dynamics converge to a pure Nash equilibrium with high probability when the number of players grows and the probability of ties is small. Lastly, McLennan \cite{mclennan2005expected} studied the number of mixed-strategy Nash equilibria in random games with $n$ players and provides a formula for their mean. Note that the case of independent payoff and antisymmetric are closely related to random mappings, whose structure has been extensively studied \cite{flajolet1989random,harris1960probability,purdom1968cycle,kolchin1986random}. However, these results are not immediately applicable to these cases.
\medskip \\
\noindent
\textbf{Symmetric games.} Stanford \cite{stanford1996limit} analyzed the asymptotic distribution of pure Nash equilibria in symmetric random games. The symmetric structure introduces dependencies that significantly affect the equilibrium properties compared to the independent case.
\medskip \\
\noindent
\textbf{Zero-sum games.} Goldman \cite{Goldman1957ThePO} was among the first to study the probability of existence of pure Nash equilibria in random zero-sum games. Rinott and Scarsini \cite{rinott2000number} also covered zero-sum games in their comprehensive analysis of different random game structures. 
\medskip \\
\noindent
\textbf{Common-interest games.} Stanford \cite{stanford1999number} studied the number of pure-strategy Nash equilibria in random finite common payoff games. This structure leads to fundamentally different dynamics, as players' interests are aligned, resulting in best response graphs with sink cycles of length 2,
which correspond to pure Nash equilibria. Common-interest games are also known as identical interest games,
or exact potential games
in the literature.
They
are special cases of potential games,
which have been studied extensively,
and are known to have 
a pure Nash equilibrium,
which can be obtained through the best-response dynamics. Another example of potential games is 2-player random ordinal potential games, where players have the same number of strategies, for which Collevecchio, Mimun, Quattropani, and Scarsini \cite{collevecchio2024basins} find the size of the basins of attraction of the pure Nash equilibria of the game after running the best-response dynamics.
\medskip \\
\noindent
\textbf{General correlated payoff structures.} 
Recent work has begun to explore games with more general correlation structures. Mimun, Quattropani, and Scarsini \cite{mimun2024best} considered 2-player random games where players' payoffs are equal with probability $p$ and independent and identically distributed with probability $1-p$. They studied the number of pure Nash equilibria and showed that best-response dynamics converges to a pure Nash equilibrium with high probability for any non-zero $p$. 
\medskip \\
\noindent
\textbf{Convergence and dynamics.} Beyond counting the number of equilibria, several works have focused on the convergence properties of best-response dynamics. Heinrich \textit{et al.} \cite{heinrich2023best} studied the convergence of best-response dynamics to pure Nash equilibria in random games with many players and strategies, examining how the order of play affects convergence. They proved that when the playing sequence is random, best-response dynamics reaches a pure Nash equilibrium (if it exists) with high probability as the number of players grows to infinity, while clockwise playing order rarely reaches a pure Nash equilibrium even if one exists. Pangallo, Heinrich, and Farmer \cite{pangallo2019best} took a different approach by studying best-response dynamics in random games with payoffs constrained to better represent real-life biological settings, using machine learning algorithms to analyze the dynamics. Amiet, Collevecchio, and Hamza \cite{amiet2021better}  prove that for i.i.d. payoffs, with high probability, the better-response dynamics  reaches a pure Nash equilibrium
(if one exists), while the best-response dynamics fails to do so. Further, Johnston \textit{et al.} \cite{johnston2023game} show that in all but an exponentially small fraction of games that have a pure Nash equilibrium and have more players than the maximum number of strategies $k$ available for each one, every action profile that is not a pure Nash equilibrium can reach every pure Nash equilibrium by the best-response dynamics; while in their later work \cite{johnston2026game} they also study the large-$k$ regime. Swenson, Murray, and Kar \cite{swenson2018best} studied the best-response dynamics in potential games, showing that it converges to a pure Nash equilibrium at an exponential rate for almost every game and every initial condition. Regarding a different game dynamic, Biggar and Papadimitriou \cite{biggar2025sink} study the long-run outcomes of the replicator dynamic by exploiting their relationship to strongly connected components of a game’s preference
graph.  Lastly, Wiese and Heinrich \cite{wiese2022frequency} use an $n$-partite graph to describe games with $n$ players and $m$ strategies, and after generating payoff matrices at random, show that under the best-response dynamics, the fraction of games that converge to a pure Nash equilibrium goes to zero as $n$ or $m$ goes to infinity.

Our work contributes to this literature by analyzing different game types, independent, symmetric, antisymmetric, common-interest, and zero-sum games, and by studying the complete cycle structure of the resulting best-response digraphs.

\subsection{Probabilistic tools and asymptotic notation}

Throughout the paper, we use the following notation.
We write BR($r_i$) = $c_j$ if strategy $c_j$ of the column player is the best response to strategy $r_i$ of the row player, 
and BR($c_i$) = $r_j$ 
is similarly defined.
The $k$-th falling factorial of $x$ is denoted by
$(x)_k \defeq x(x-1)(x-2)\cdots(x-k+1)$ for $k\ge1$.
Further, we write $f(n)\sim g(n)$ if $\lim_{n\to \infty}(f(n)/g(n))=1$, and use $\I{\cdot}$ to denote the indicator function. 
We write
$
f_n=\Theta_{\mathbb P}(g_n)
$
if for every $\varepsilon>0$, there exist constants
$0<c_{\ep}<C_{\ep}<\infty$ such that
\[
\P{ c_{\ep} g_n \le f_n \le C_{\ep} g_n } 
\ge 1-\varepsilon
\]
for all sufficiently large $n$.

We make use of a version of the first and second moment methods 
to bound the probability that certain substructures do not appear in $\calD$.

\begin{tcolorbox}
[title=The first and second moment methods]
\begin{lemma}
\label{2rd-moment}
Let  
$\{ Y_t: t \in \mathcal T\, \}$ be a finite set of random indicators, and define
\[
Z = \sum_{t \in \mathcal T} Y_t,
\quad
\mu=\sum_{t\in\calT} \E{Y_t},
\quad\mbox{and}\quad
\Delta
=\sum_{ s\ne t\in\calT}
\E{ Y_s Y_t }.
\]
For $\mu>0$, if
$\Delta
\le \alpha \mu+(1+\beta) \mu^2$
for some $\alpha, \beta\ge0$, then
\begin{align}
\P{ Z \ge 1 } \le \mu,
\quad \text{and} \quad
\P{ Z = 0 } 
\le \frac{1+\alpha}
{\mu}+\beta.
\label{probX=0}
\end{align}
Moreover, if $\calT \subseteq
\cup_{\ell \in [n]}\calC_{2\ell}$
and let $Y_C = \I{ C \subset \calD }$ for $C \in \calT$, then
\[
\Delta
=
\sum_{
\substack {C_i\ne C_j\in\calT \\
C_i \cap C_j=\emptyset}
}
\E{ Y_{C_i}Y_{C_j} }.
\]
\end{lemma}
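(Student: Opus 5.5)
The first bound is Markov's inequality applied to the nonnegative integer variable $Z$: $\P{Z\ge1}\le\E{Z}=\mu$.

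For the second bound I will use the second moment method via Chebyshev's inequality. Since the $Y_t$ are indicators, $Y_t^2=Y_t$, so
\[
\E{Z^2}=\sum_{t}\E{Y_t}+\sum_{s\ne t}\E{Y_sY_t}=\mu+\Delta .
\]
Hence $\V{Z}=\mu+\Delta-\mu^2\le(1+\alpha)\mu+\beta\mu^2$. Chebyshev then gives
\[
\P{Z=0}\le\P{\abs{Z-\mu}\ge\mu}\le\frac{\V{Z}}{\mu^2}\le\frac{1+\alpha}{\mu}+\beta ,
\]
which is \eqref{probX=0}. The hypothesis $\mu>0$ is exactly what is needed to divide by $\mu^2$.

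For the final statement, take $C_i\ne C_j$ in $\calT$ with $C_i\cap C_j\ne\emptyset$, meaning the two cycles share a vertex. Suppose both $C_i\subset\calD$ and $C_j\subset\calD$. Then they are two distinct cycles of $\calD$ that are not vertex-disjoint, which contradicts Lemma~\ref{vertex-disjoint lemma}(2). That part of the lemma applies because ties occur with probability zero, so almost surely every vertex has a unique best response.

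Some care is needed if two elements of $\calT$ describe the same cycle with different labelings. This cannot happen: the parametrization in the definition of $\calC_{2\ell}(\calI,\calJ)$ fixes the starting vertex $r_{i_1}$ (the smallest row index), so each cycle of $K_{n,n}$ appears exactly once. Distinct elements of $\calT$ are therefore distinct cycles.

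It follows that $Y_{C_i}Y_{C_j}=0$ almost surely for every such intersecting pair, so $\E{Y_{C_i}Y_{C_j}}=0$. Dropping these pairs from the sum defining $\Delta$ leaves only the vertex-disjoint pairs, which is the claimed formula.

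The only point requiring any care is this last step. It rests on "distinct" meaning distinct as cycles, not merely as labeled walks, and the canonical form in the definition of $\calC_{2\ell}(\calI,\calJ)$ guarantees that.
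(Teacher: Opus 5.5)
Your proof is correct and follows the paper's argument, which the paper only sketches as ``a simple application of the first and second moment methods'' plus Lemma~\ref{vertex-disjoint lemma}. You use Markov's inequality for the first bound, Chebyshev with $\E{Z^2}=\mu+\Delta$ for the second, and Lemma~\ref{vertex-disjoint lemma}(2) to discard the intersecting pairs from $\Delta$. Your added check, that the canonical parametrization of $\calC_{2\ell}(\calI,\calJ)$ lists each directed cycle exactly once, is a detail the paper leaves implicit.
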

\end{tcolorbox}
Note that the summations above are over ordered pairs in the definition of $\Delta$,
and for cycles, it is over vertex-disjoint pairs. The proof is a simple application of the first and second moment methods, where the last part follows
from Lemma \ref{vertex-disjoint lemma},
as two cycles must be vertex-disjoint to appear together in $\calD$.

In view of \eqref{cycleCount},
it is useful to define, for $1\le\ell\le n$, 
\begin{align}
    \mu_{\ell}
    =\mu_{\ell}(n)
    \defeq 
    \sum_{C \in \calC_{2\ell}} n^{-2\ell}
    =\frac{1}{\ell} \prod_{i=1}^{\ell-1}
    \bR{1-\frac{i}{n}}^2,
\label{muDef}
\mbox{ and }
\quad
    \hat\mu_{\ell}
    =\hat\mu_{\ell}(n)
    \defeq \frac{1}{\ell}e^{-\ell^2/n}.
\end{align}

The following is a simple observation:

\begin{lemma}
\label{lem:muEquiv}
If $\ell = o(n^{2/3})$, then
$\mu_{\ell} \sim \hat\mu_{\ell}$ as $n\to\infty$.
Moreover, for all $2 \le \ell \le n$,
\begin{align}
    \mu_{\ell}
    \le
    \frac{1}{\ell}\,
    \Exp{-\frac{(\ell-1)^2}{n}}.
    \label{mubound}
\end{align}
\end{lemma}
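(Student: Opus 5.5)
The plan is to take logarithms of the product defining $\mu_\ell$ in \eqref{muDef}. Write
\[
\log(\ell\mu_\ell)=2\sum_{i=1}^{\ell-1}\log\Bigl(1-\frac{i}{n}\Bigr),
\]
and use the elementary inequalities
\[
-x-\frac{x^2}{1-x}\le \log(1-x)\le -x, \qquad 0\le x<1 .
\]
(The lower bound follows from $\log(1-x)=-\sum_k x^k/k\ge -x-\sum_{k\ge2}x^k$.) For $\ell\le n$ every $x=i/n$ with $i\le \ell-1$ satisfies $x<1$, so both inequalities apply term by term.

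\textbf{The upper bound \eqref{mubound}.} The right-hand inequality gives
\[
\log(\ell\mu_\ell)\le -\frac{2}{n}\sum_{i=1}^{\ell-1} i=-\frac{\ell(\ell-1)}{n}.
\]
Since $\ell(\ell-1)\ge(\ell-1)^2$, this yields $\mu_\ell\le \frac1\ell\exp(-(\ell-1)^2/n)$ for every $2\le\ell\le n$. This is the easy half.

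\textbf{The asymptotic equivalence.} Here I need
\[
\log(\ell\mu_\ell)+\frac{\ell^2}{n}\to 0 \quad\text{when } \ell=o(n^{2/3}).
\]
From the upper bound, the left side is at most $-\ell(\ell-1)/n+\ell^2/n=\ell/n$, which tends to $0$ because $\ell=o(n)$.

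For the lower bound, note that $\ell=o(n)$ forces $i/n\le 1/2$ for all $i<\ell$ once $n$ is large. Then the error terms satisfy $x^2/(1-x)\le 2x^2$, so
\[
\log(\ell\mu_\ell)\ge -\frac{\ell(\ell-1)}{n}-\frac{4}{n^2}\sum_{i<\ell} i^2\ge -\frac{\ell^2}{n}-\frac{4\ell^3}{3n^2}.
\]
The correction $\ell^3/n^2$ tends to $0$ exactly when $\ell=o(n^{2/3})$. This is where the hypothesis is used, and it is the only delicate point of the argument.

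Combining the two directions, $\ell\mu_\ell e^{\ell^2/n}\to1$, that is, $\mu_\ell\sim\hat\mu_\ell$. The equivalence should be read uniformly over sequences $\ell=\ell(n)$, including bounded $\ell$; in that case both sides tend to $1/\ell$ trivially.
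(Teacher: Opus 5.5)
Your proof is correct and complete. Both halves of $-x-\frac{x^2}{1-x}\le\log(1-x)\le -x$ are applied correctly, and the hypothesis $\ell=o(n^{2/3})$ enters exactly through the $\ell^3/n^2$ error term. The paper gives no proof of this lemma; it calls it a simple observation. Your logarithmic expansion is the routine computation left implicit there, and it is the same kind of expansion the paper uses for $P_k$ in the common-interest section. As a minor point, your upper bound actually gives the slightly sharper $\mu_\ell\le\frac1\ell e^{-\ell(\ell-1)/n}$.
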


\section{Random independent games}
In this section, we study 
random games with independent payoffs, where both payoff matrices $\bm A,\bm B$ have i.i.d. entries and $\bm A$ and $\bm B$ are independent. 
We calculate the expected number of cycles of a fixed length and their asymptotic distribution, the expected total number of cycles, and the typical length of the longest cycle.
\subsection{Cycle count}
We first determine the expected value and asymptotic distribution of the short-cycle count in the best-response digraph.
\begin{tcolorbox}
\begin{tm}
\label{2lcyclesInd}
Let $\calD$ be the  
best-response digraph
of an
$n \times n$
random independent bimatrix game.
Let $X_\ell$ count the number of $2\ell$-cycles in $\calD$.
Then for any integer $\ell\in [n]$,
$$
\E{X_\ell}= \mu_{\ell}.
$$
Moreover, if $\ell = o(n^{2/3})$, then
$\E{ X_\ell }
\sim \hat\mu_{\ell}$.
For any fixed $\ell$, the number $X_\ell$ of $2\ell$-cycles in $\calD$, satisfies
\[
X_\ell\overset{d}{\to} 
\Po\lp1/\ell\rp.
\]
\end{tm}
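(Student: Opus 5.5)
We write $X_\ell=\sum_{C\in\calC_{2\ell}} Y_C$ with $Y_C=\I{C\subset\calD}$ and compute each term exactly. Fix $C=(r_{i_1},c_{j_1},\ldots,r_{i_\ell},c_{j_\ell})\in\calC_{2\ell}$. The event $C\subset\calD$ is the intersection of $2\ell$ events: $\mathrm{BR}(r_{i_k})=c_{j_k}$ for $k\in[\ell]$, and $\mathrm{BR}(c_{j_k})=r_{i_{k+1}}$ for $k\in[\ell]$. The first group depends only on the rows $i_1,\ldots,i_\ell$ of $\bm B$. These rows are distinct, so the events are independent, and each has probability $1/n$ by exchangeability of a row of i.i.d.\ continuous entries. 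Similarly, the second group depends only on the distinct columns $j_1,\ldots,j_\ell$ of $\bm A$, which are independent of $\bm B$, and each event has probability $1/n$. Hence $\P{C\subset\calD}=n^{-2\ell}$, and by \eqref{cycleCount} and \eqref{muDef},
\[
\E{X_\ell}=\sum_{C\in\calC_{2\ell}} n^{-2\ell}=\mu_\ell .
\]
The asymptotic $\E{X_\ell}\sim\hat\mu_\ell$ for $\ell=o(n^{2/3})$ is then exactly Lemma~\ref{lem:muEquiv}.

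For the limit law with $\ell$ fixed, we use the method of factorial moments. We show that for each fixed $m\ge1$,
\[
\E{(X_\ell)_m}\to \ell^{-m},
\]
which characterizes $\Po(1/\ell)$, since the Poisson law is determined by its moments. Write $\E{(X_\ell)_m}$ as the sum over ordered $m$-tuples of distinct cycles $(C_1,\ldots,C_m)$ of $\P{C_1,\ldots,C_m\subset\calD}$. By Lemma~\ref{vertex-disjoint lemma}, only pairwise vertex-disjoint tuples contribute, exactly as in the last part of Lemma~\ref{2rd-moment}. For a vertex-disjoint tuple, all $m\ell$ row vertices are distinct and all $m\ell$ column vertices are distinct. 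The independence argument above then applies verbatim to the $2m\ell$ best-response events, giving probability $n^{-2m\ell}$.

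The number of such tuples is $(n)_{m\ell}^2/\ell^m$. To see this, choose the ordered sequence of $m\ell$ distinct row indices and the ordered sequence of $m\ell$ distinct column indices, then cut both sequences into $m$ consecutive blocks of length $\ell$; each cycle is counted $\ell$ times through its cyclic rotations. Thus
\[
\E{(X_\ell)_m}=\frac{(n)_{m\ell}^2}{\ell^m n^{2m\ell}}\to \ell^{-m}.
\]

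The main point to check carefully is this counting of disjoint tuples and the claim that vertex-disjointness yields full independence. Vertex-disjointness means that distinct rows of $\bm B$ and distinct columns of $\bm A$ are involved; these are independent in the independent model. Note that a row index may coincide with a column index, but this causes no dependence here, since $\bm A$ and $\bm B$ are independent. This is precisely the point that would fail in the symmetric models, where the index overlap $Z(C)$ matters. Once this is checked, the remaining computation is routine.
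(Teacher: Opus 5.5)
Your proof is correct and follows the paper's argument essentially step for step. It obtains $\P{C\subset\calD}=n^{-2\ell}$ from the independence of distinct rows of $\bm B$ and distinct columns of $\bm A$, reduces the factorial moments to vertex-disjoint tuples via Lemma~\ref{vertex-disjoint lemma}, and counts $(n)_{m\ell}^2/\ell^m$ such tuples to get $\E{(X_\ell)_m}\to\ell^{-m}$. You also spell out two points the paper leaves implicit: why the tuple count holds, and why row/column index coincidences cause no dependence in the independent model.
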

\end{tcolorbox}
\begin{proof}
Let 
\[
C = (r_{i_1}, c_{j_1}, \ldots,
r_{i_k}, c_{j_k}, \ldots,
r_{i_\ell}, c_{j_\ell}, r_{i_1})
\]
be a $2\ell$-cycle in $\calC_{2\ell}$. 
We have $\{ C \subset \calD \}$
if
the following $2\ell$ events hold
\[
\bC{
b_{i_k,j_k}
=\textstyle\displaystyle\max_{m\in[n]}b_{i_k,m}
}_{k=1}^{\ell}\quad\text{and}\quad
\bC{
a_{i_{k+1},j_k}
=\textstyle\displaystyle\max_{m\in[n]}a_{m,j_k}
}_{k=1}^{\ell}.
\]
These events involve $\ell$ distinct rows of $\bm B$ and $\ell$ distinct columns of $\bm A$. These events are mutually independent, and each holds with probability $1/n$, so
$\P{ C \subset\calD }=n^{-2\ell}$.
Thus,
\begin{align}\label{alternative writing of mu}
\E{ X_\ell }
= \Abs{ \calC_{2\ell} }
\cdot \P{ C \subset\calD }
=\frac{(n)_\ell^2}{\ell}n^{-2\ell}
=\mu_{\ell},
\end{align}
where the last step follows from the definition in \eqref{muDef}.

Next we obtain the asymptotic distribution of the short cycle count.
Note that,
\begin{eqnarray*}
\E{X_\ell}
=\mu_{\ell}
\sim    1/\ell.
\end{eqnarray*}
To prove the second part, we use the method of moments (see, e.g., \cite[Theorem 6.2]{janson2011random}).
Let $C_1,\ldots,C_k\in \calC_{2\ell}$
be $k$ distinct cycles.
By Lemma \ref{vertex-disjoint lemma},
to calculate the $k$-th factorial moments,
it suffices to consider the case where all of these $k$ cycles are pairwise disjoint, since otherwise, we have
$
    \P{ \prod_{i\in[k]} Y_{C_i} 
    =1 }=0.
$
Hence, the $k$-th factorial moment of $X_\ell$ is
\begin{align*}
\E{ (X_\ell)_k }
=\sum_{\text{disjoint } C_1,\ldots,C_k\in \calC_{2\ell} }
\E{ \prod_{i\in[k]} Y_{C_i} } 
&=\frac{1}{\ell^kn^{2k\ell}}\prod_{i=0}^{k\ell-1}(n-i)^2 \\
&=\frac{1}{\ell^k}\prod_{i=0}^{k\ell-1}\left(1-\frac{i}{n}\right)^2
\sim
\lp1/\ell\rp^k.
\end{align*}
By the method of factorial moments, we have $X_\ell\overset{d}{\to} \Po\lp1/\ell\rp$.
\end{proof}

\subsection{Longest cycle length}

Next we show that
the length $2\ell_{\max}$ of the longest cycle is of order $\sqrt n$, in probability.
\begin{tcolorbox}
\begin{theorem}
\label{thm:longest}
Let $\calD$ be the best-response digraph
of an $n \times n$ random independent bimatrix game.
Then 
\[
\ell_{\max}=\Theta_\mathbb{P}\lp\sqrt{n}\rp\,.
\]
\end{theorem}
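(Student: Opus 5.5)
We prove the two bounds separately, both from the exact formula $\E{X_\ell}=\mu_\ell$ of Theorem~\ref{2lcyclesInd} and the estimate \eqref{mubound} of Lemma~\ref{lem:muEquiv}.

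\textbf{Upper bound.} Fix $\ep>0$ and let $C_\ep$ be a large constant. By a union bound and the first moment bound in Lemma~\ref{2rd-moment},
\[
\P{\ell_{\max}\ge C_\ep\sqrt n}\le\sum_{\ell\ge C_\ep\sqrt n}\E{X_\ell}=\sum_{\ell\ge C_\ep\sqrt n}\mu_\ell\le\sum_{\ell\ge C_\ep\sqrt n}\frac1\ell e^{-(\ell-1)^2/n},
\]
using \eqref{mubound}. Comparing the sum with an integral and substituting $\ell=x\sqrt n$ bounds it by $(1+o(1))\int_{C_\ep}^\infty x^{-1}e^{-x^2}\,dx$. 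This is below $\ep$ once $C_\ep$ is large. This step is routine.

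\textbf{Lower bound.} We need $\P{\ell_{\max}<c_\ep\sqrt n}\le\ep$ for a small constant $c_\ep$. For a single length $\ell$ this cannot work, since $\mu_\ell\approx 1/\ell\to0$. Instead, let
\[
Z=\sum_{\ell\in[c\sqrt n,\,\sqrt n]}X_\ell .
\]
Its mean is
\[
\mu=\sum_{c\sqrt n\le\ell\le\sqrt n}\mu_\ell\sim\int_c^1 x^{-1}e^{-x^2}\,dx\ge e^{-1}\log(1/c),
\]
which can be made arbitrarily large by taking $c$ small. On the event $Z\ge1$ we have $\ell_{\max}\ge c\sqrt n$. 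We apply the second moment bound \eqref{probX=0} with $\calT=\bigcup_{c\sqrt n\le\ell\le\sqrt n}\calC_{2\ell}$. By Lemma~\ref{2rd-moment}, $\Delta$ only involves vertex-disjoint pairs $(C_1,C_2)$ of lengths $2\ell_1,2\ell_2$.

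For such a pair, the $2(\ell_1+\ell_2)$ defining events involve distinct rows of $\bm B$ and distinct columns of $\bm A$. They are therefore independent, so $\E{Y_{C_1}Y_{C_2}}=n^{-2(\ell_1+\ell_2)}$. The number of disjoint ordered pairs is
\[
\frac{(n)_{\ell_1+\ell_2}^2}{\ell_1\ell_2},
\]
exactly as in the factorial-moment computation in the proof of Theorem~\ref{2lcyclesInd}. Hence the contribution of lengths $(\ell_1,\ell_2)$ to $\Delta$ is
\[
\frac{1}{\ell_1\ell_2}\prod_{i<\ell_1+\ell_2}\Bigl(1-\frac in\Bigr)^2\le\frac{1}{\ell_1\ell_2}\prod_{i<\ell_1}\Bigl(1-\frac in\Bigr)^2\prod_{i<\ell_2}\Bigl(1-\frac in\Bigr)^2=\mu_{\ell_1}\mu_{\ell_2}.
\]
The inequality holds because $1-\frac{\ell_1+i}{n}\le 1-\frac in$, so the tail factors of the long product are dominated by the initial factors of the $\ell_2$-product.

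Summing over all lengths gives $\Delta\le\mu^2$. So Lemma~\ref{2rd-moment} applies with $\alpha=\beta=0$, and $\P{Z=0}\le1/\mu\le e/\log(1/c)$. This is below $\ep$ for $c=c_\ep$ small enough.

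The main obstacle is this lower bound. A fixed $\ell$ only ever has probability about $1/\ell$ of being present, so one must aggregate over a window of lengths of order $\sqrt n$. One then checks that the disjoint-pair correlation is nonpositive, which is exactly the product-domination inequality above.
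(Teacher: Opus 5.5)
Your proposal is correct and follows essentially the same route as the paper: a first-moment bound via \eqref{mubound} for the upper tail, and the second moment method (Lemma~\ref{2rd-moment} with $\alpha=\beta=0$) over the window $c\sqrt n\le\ell\le\sqrt n$, where $\mu\ge(1+o(1))e^{-1}\log(1/c)$. The only difference is that the paper gets $\Delta\le\mu^2$ more directly, by using exact independence of vertex-disjoint cycles and then extending the sum to all pairs, so your explicit counting and product-domination step is valid but not needed.
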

\end{tcolorbox}

\begin{proof}
Let $0<\ep<1$ be fixed.
Define 
\begin{align}
\label{cCdef}
c= e^{-2e/\ep}
\quad\mbox{and}\quad 
C= \sqrt{\log ( 1+\ep^{-1} )}.
\end{align}
Note that $0<c<C$.
It suffices to show that,
as $n\to\infty$,
\begin{align}
\label{max}
\P{ c\sqrt n \le \ell_{\max} < C\sqrt n } \ge 1-2\ep.
\end{align}

By the first moment method,
the probability that a cycle of length greater than $C\sqrt{n}$ exists is
\begin{align*}
    \P{ \sum_{\ell\ge C\sqrt{n}} X_\ell \ge 1}
\le \sum_{\ell\ge \lceil C\sqrt{n}\, \rceil} \E{X_\ell}
&\le \sum_{i\ge0}\sum_{\ell= 2^i\lceil C\sqrt{n}\, \rceil}^{2^{i+1}\lceil C\sqrt{n}\, \rceil-1} \frac1\ell e^{-\frac{(\ell-1)^2}{n}}\\
&\le \sum_{i\ge0}e^{-4^iC^2} 
\le \sum_{i\ge1}e^{-i C^2}
= \frac{e^{-C^2}}{1 - e^{-C^2}}.
\end{align*}
In view of the definition of $C$, we have
\begin{align}
\label{max-1}
\P{ \ell_{\max} \ge C\sqrt n} 
= \P{ \sum_{\ell\ge C\sqrt{n}} X_\ell \ge 1}
\le \ep.
\end{align}

Next, we use the second moment method to show
\begin{align}
\label{max-2}
\P{ \ell_{\max} < c\sqrt n } 
\le
\P{ \sum_{C\in\calT} Y_C
= 0}
\le \ep,
\end{align}
\mbox{where}
\[
\calT
=
\bC{
C\in \calC_{2\ell}:
\ceil{ c\sqrt n }
\le\ell\le
\ceil{ \sqrt n }
}.
\]
To use Lemma \ref{2rd-moment},
noting that 
any pair of cycles must be vertex-disjoint to appear in $\calD$,
\[
\Delta
=
\sum_{
C_i \cap C_j=\emptyset }
\E{ Y_{C_i}Y_{C_j} }
\le
\sum_{C_i, C_j\in\calT}
\E{ Y_{C_i} } \E{ Y_{C_j} }
= \mu^2.
\]
Choosing $\alpha=0$ and $\beta=0$
in \eqref{probX=0} yields
\[
\P{ \sum_{C\in\calT} Y_C
= 0}
\le 1/\mu.
\]
Since $c<1$ and $\ell=o(n^{2/3})$, 

\begin{align*}
\mu
=\sum_{C\in\calT} \E{ Y_C}
\sim 
\sum_{\ell= \ceil{ c\sqrt n }}^{
\ceil{ \sqrt n }}
\frac1\ell e^{-\ell^2/n}
\ge
(1+o(1))
e^{-1}
\sum_{\ell= \ceil{ c\sqrt n }}^{
\ceil{ \sqrt n }}
\frac1\ell
&\sim e^{-1} \log(1/c),
\end{align*}
and consequently,
\eqref{max-2} follows by the definition of $c$ in \eqref{cCdef}.
By the union bound, \eqref{max} follows from
\eqref{max-1} and \eqref{max-2}.
The proof is complete.
\end{proof}

\subsection{Total cycle count}
\begin{tcolorbox}
\begin{theorem}
\label{number of cycles independent case}
Let $\calD$ be the  
best-response digraph
of an
$n \times n$
random independent bimatrix game,
and $T_n$ denote the total number of directed cycles in $\calD$.
Then
\[
    \E{ T_n }
    =\lp\frac{1}{2}+o(1)\rp \log n\,.
\]
\end{theorem}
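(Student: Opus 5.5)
By linearity of expectation and Theorem~\ref{2lcyclesInd}, $T_n=\sum_{\ell=1}^n X_\ell$, so
\[
\E{T_n}=\sum_{\ell=1}^n \mu_\ell=\sum_{\ell=1}^n \frac1\ell\prod_{i=1}^{\ell-1}\Bigl(1-\frac in\Bigr)^2 .
\]
Here $\ell=1$ counts pure Nash equilibria, with $\mu_1=1$; this contributes only $O(1)$. The task is therefore purely analytic: show that this sum equals $(\tfrac12+o(1))\log n$. The heuristic is that $\mu_\ell\approx \hat\mu_\ell=\frac1\ell e^{-\ell^2/n}$. This behaves like $1/\ell$ up to $\ell\approx\sqrt n$ and is negligible beyond, giving $\sum_{\ell\le\sqrt n}1/\ell\approx\frac12\log n$.

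\textbf{Upper bound.} I will use \eqref{mubound} from Lemma~\ref{lem:muEquiv}, which holds for all $2\le\ell\le n$. Split the sum at $L=\lceil\sqrt n\,\rceil$. For $\ell\le L$, bound $\mu_\ell\le 1/\ell$, which gives at most $\log\sqrt n+O(1)=\frac12\log n+O(1)$. For $\ell>L$, reuse the dyadic-block computation from the proof of Theorem~\ref{thm:longest} with $C=1$:
\[
\sum_{\ell> L}\frac1\ell e^{-(\ell-1)^2/n}\le\sum_{i\ge0}\,\sum_{\ell=2^iL}^{2^{i+1}L-1}\frac1\ell e^{-(\ell-1)^2/n}=O(1),
\]
since each block has harmonic weight at most $\log 2+o(1)$ and exponential factor at most $e^{-4^i(1+o(1))}$. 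Hence $\E{T_n}\le \frac12\log n+O(1)$.

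\textbf{Lower bound.} Fix $\delta\in(0,1)$ and restrict to $\ell\le \delta\sqrt n$. There $\ell=o(n^{2/3})$, so Lemma~\ref{lem:muEquiv} applies. To avoid uniformity issues in ``$\sim$'', I will use the elementary bound $1-x\ge e^{-x-x^2}$ for $x\le1/2$. This gives
\[
\prod_{i<\ell}\Bigl(1-\frac in\Bigr)^2\ge \exp\Bigl(-\frac{\ell^2}{n}-O\Bigl(\frac{\ell^3}{n^2}\Bigr)\Bigr)\ge e^{-\delta^2}(1-o(1))
\]
uniformly in $\ell\le\delta\sqrt n$. Therefore
\[
\E{T_n}\ge e^{-\delta^2}(1-o(1))\sum_{\ell\le\delta\sqrt n}\frac1\ell=e^{-\delta^2}(1-o(1))\Bigl(\tfrac12\log n+\log\delta\Bigr).
\]
Dividing by $\log n$ and letting $n\to\infty$ and then $\delta\to0$ gives $\liminf \E{T_n}/\log n\ge\frac12$. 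Combined with the upper bound, this proves the claim.

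\textbf{Main obstacle.} The only delicate point is making the approximation $\mu_\ell\approx 1/\ell$ uniform over the whole range $\ell\le\delta\sqrt n$, rather than pointwise as Lemma~\ref{lem:muEquiv} states it. The explicit inequality $1-x\ge e^{-x-x^2}$ handles this directly. The tail $\ell>\sqrt n$ is harmless thanks to the Gaussian decay in \eqref{mubound}.
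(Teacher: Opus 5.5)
Your proof is correct and follows the same strategy as the paper. You write $\E{T_n}=\sum_{\ell}\mu_\ell$, compare it with the harmonic sum up to order $\sqrt n$, and control the tail with the Gaussian bound \eqref{mubound}. The differences are only bookkeeping. The paper splits at $\sqrt{n/\log n}$ and $\sqrt{n\log n}$, pays $\log\log n$ on the middle range, and uses $\mu_\ell\ge 1/\ell-\ell/n$ on the first range. Your split at $\lceil\sqrt n\,\rceil$ with a dyadic tail gives the sharper upper bound $\frac12\log n+O(1)$, and your lower bound via $1-x\ge e^{-x-x^2}$ on $\ell\le\delta\sqrt n$, followed by $\delta\to0$, is equally valid.
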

    
\end{tcolorbox}
\begin{proof}
Define the sets of integers,
\begin{align*}
I_1 \defeq 
\bS{ 1, \sqrt{ \dfrac{n}{\log n}} }
\cap \mathbb Z,
\quad
I_2 \defeq \Big( 
\sqrt{ \dfrac{n}{\log n}}, 
\sqrt{ n\log n} \Big]
\cap \mathbb Z,
\quad
I_3 \defeq \Big( \sqrt{n\log n}, n\Big]
\cap \mathbb Z.
\end{align*}
We show that the summation over $I_1$ is the dominating contribution; that is,
\begin{align*}
\E{ T_n }
= \sum_{\ell \in (I_1 \cup I_2 \cup I_3) } \E{ X_\ell }
=
\sum_{\ell\in I_1} 
\mu_{\ell}
+ \sum_{\ell\in I_2}
\mu_{\ell}
+ \sum_{\ell\in I_3}\mu_{\ell}
\sim 
\sum_{\ell \in I_1} 
\mu_{\ell} \sim \dfrac12\log n\,.
\end{align*}


\noindent
\textbf{Case 1: $\ell\in I_1$}. The
Weierstrass product inequality
gives 
$\prod_{i=1}^{\ell-1}
    \bR{1-\frac{i}{n}}^2
\ge 1-2\sum_{i=1}^{\ell-1}\frac{i}{n}$, 
so we have, in view of \eqref{muDef}, that
\begin{align*}
      \mu_\ell
\ge\frac{1}{\ell}\lp1-2\sum_{i=1}^{\ell-1}\frac{i}{n}\rp
\ge\frac{1}{\ell}-\frac{\ell}{n}.
\end{align*}
Moreover, 
\begin{align*}
\sum_{\ell\in I_1}\mu_{\ell}
\ge \log\lp\sqrt{\frac{n}{\log n}}\rp
-\frac{(\sqrt{n/\log n}+1)^2}{2n} 
\sim\frac{1}{2} \log n.
\end{align*}

For a matching upper bound, using \eqref{mubound} and noting that $\ell^2=o(n)$, we have
\begin{align*}
    \sum_{\ell\in I_1}\mu_{\ell}&\leq \sum_{\ell\in I_1}\frac1{\ell} \Exp{-\frac{(\ell-1)^2}{n}}
\le
\sum_{\ell\in I_1}
\frac{1}{\ell}
\sim\frac12\log n.
\end{align*}

\noindent
\textbf{Case 2: $\ell\in I_2$}.
By using the upper bound in \eqref{mubound}, we obtain
\begin{align*}
\sum_{\ell\in I_2}\mu_{\ell}
\leq 
\sum_{\ell\in I_2}\frac1
\ell \Exp{-\frac{(\ell-1)^2}{n}}
\le
\sum_{\ell\in I_2} 
\frac{1}{\ell}
\sim 
\log\log n.
\end{align*}

\noindent
\textbf{Case 3: $\ell\in I_3$}.
Similar to the above case, noting that 
$\ell \ge \sqrt{n\log n}$, we get
\begin{align*}
\sum_{\ell\in I_3} \mu_{\ell}
\le
\sum_{\ell\in I_3} \frac{1}{\ell}
\Exp{ -\frac{(\ell-1)^2}{n} }
\le \frac{(1+o(1))n}{ \sqrt{n\log n} }
e^{ - \log n }
\sim  \frac{1}{ \sqrt{n\log n} } = o(1).
\end{align*}
Combining the above estimates finishes the proof.
\end{proof}

\section{Random symmetric games}
In this section, we study random symmetric games, as defined in Section \ref{section:types of games}. 
We obtain the expected number of cycles of a fixed length, the expected total number of cycles, and the typical length of the longest cycle.

\subsection{Counts of two types of symmetric cycles}
We start with a simple observation that in symmetric games, whenever BR($r_i$)=$c_j$, then BR($c_i$)=$r_j$. 
This gives us pairs of cycles that are symmetric with respect to the diagonal of the payoff matrix. In particular, the 
off-diagonal pure Nash equilibria come in pairs; that is, if $(r_i,c_j)$ is an equilibrium, then $(r_j,c_i)$ is also an equilibrium.

\begin{figure}[htb]
\centering
\begin{tikzpicture}[x=2.5cm,y=3cm,>=Latex,rotate=90]
  \tikzset{
    left/.style={ellipse,draw,thick,minimum width=0.1cm,minimum height=0.1cm},
    right/.style={ellipse,draw,thick,minimum width=0.1cm,minimum height=0.1cm},
    blueedge/.style={->,very thick,dotted},
    rededge/.style={->,very thick}
  }
  \foreach \i in {1,...,4}{
    \node[left]  (a\i) at (0,-\i/1.3) {};
    \node[right] (b\i) at (1,-\i/1.3) {};
    \node[anchor=east] at (-0.2,-\i/1.3-0.12) {$c_\i$};
    \node[anchor=west] at (1.2,-\i/1.3+0.12) {$r_\i$};
  }
  \draw[blueedge] (b2) to[out=-139,in=41]  (a3);
  \draw[blueedge] (a3) to[out=75,in=-126]  (b1);
  \draw[blueedge] (b1) to[out=-99,in=81]   (a4);
  \draw[blueedge] (a4) to[out=54,in=-106]  (b2);
  \draw[rededge] (a1) to[out=-74,in=106]  (b4);
  \draw[rededge] (b4) to[out=136,in=-64]  (a2);
  \draw[rededge] (a2) to[out=-25,in=155]  (b3);
  \draw[rededge] (b3) to[out=116,in=-44]  (a1);
\end{tikzpicture}
\caption{If the cycle $C$ with solid arcs appears in $\calD$, then its mirror image $C^*$, the dotted cycle is forced in $\calD$.}
\label{fig:auto-creation of cycles}
\end{figure}

A cycle of length larger than $2$ containing arc $(r_i, c_i)$
cannot appear in $\calD$
because the symmetry forces a shorter cycle, 
that is a pure Nash equilibrium.
It is useful to define the mirror image of $C$ as $C^*$, that is,
given a cycle
\[
C = (r_{i_1}, c_{j_1}, \ldots,
r_{i_k}, c_{j_k}, \ldots,
r_{i_\ell}, c_{j_\ell}, r_{i_1})
\in \calC_{2\ell},
\]
we define
\[
C^* 
=
(c_{i_1}, r_{j_1}, \ldots,
c_{i_k}, r_{j_k}, \ldots,
c_{i_\ell}, r_{j_\ell}, c_{i_1}).
\]
Note that in symmetric games,
$\{ C \subset \calD \}
= \{ C^* \subset \calD \}$.
If $C=C^*$, we call $C$ a self-mirror cycle.
\begin{tcolorbox}
\begin{lemma}
\label{sym-cycle-prob}
Let $\calD$ be the  
best-response digraph
of an
$n \times n$
random symmetric bimatrix game.
Then for any $C \in \calC_{2\ell}$,
\[
\P{ C \subset \calD }
=
\begin{cases}
n^{-2\ell}, & \text{if } \calI(C) \cap \calJ(C) = \emptyset, \\
n^{-\ell},  & \text{if } C = C^{*}, \\
0,          & \text{otherwise}.
\end{cases}
\]
\end{lemma}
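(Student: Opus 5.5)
The plan is to rewrite every arc of $\calD$ as an event about a single column of $\bm A$. Since $\bm B=\bm A^{\mathsf T}$, we have $b_{i,j}=a_{j,i}$. Hence $(r_i,c_j)\subset\calD$ iff $a_{j,i}=\max_k a_{k,i}$. Likewise $(c_j,r_i)\subset\calD$ iff $a_{i,j}=\max_k a_{k,j}$.

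Define $f:[n]\to[n]$ by letting $f(x)$ be the (a.s.\ unique) row index at which column $x$ of $\bm A$ attains its maximum. Then $(r_i,c_j)\subset\calD$ iff $f(i)=j$, and $(c_j,r_i)\subset\calD$ iff $f(j)=i$. The columns of $\bm A$ are independent with i.i.d.\ continuous entries. So the values $f(1),\dots,f(n)$ are independent and uniform on $[n]$. Writing $C=(r_{i_1},c_{j_1},\dots,r_{i_\ell},c_{j_\ell})$ as the index sequence $x_0,x_1,\dots,x_{2\ell-1}=i_1,j_1,\dots,i_\ell,j_\ell$, we get
\[
\{C\subset\calD\}=\{f(x_t)=x_{t+1} \text{ for all } t \pmod{2\ell}\}.
\]
This is a system of $2\ell$ constraints on the values of $f$ at the points of $\calI(C)\cup\calJ(C)$.

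\textbf{Case $\calI(C)\cap\calJ(C)=\emptyset$.} All $2\ell$ indices $x_t$ are distinct. The constraints concern $2\ell$ distinct independent uniform values, so the probability is $n^{-2\ell}$, exactly as in Theorem~\ref{2lcyclesInd}.

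\textbf{Case $Z(C)\ge1$.} Suppose the event has positive probability, so the constraints are consistent. Then $(x_t)$ is an orbit of the deterministic map $f$ that closes up after $2\ell$ steps. It is therefore purely periodic with some minimal period $p\mid 2\ell$. The even positions are the distinct indices $i_1,\dots,i_\ell$, and the odd positions are the distinct $j_1,\dots,j_\ell$.
\begin{itemize}
\item If $p<2\ell$ were even, then $x_0=x_p$ would repeat a row index, which is impossible. So either $p=2\ell$ or $p$ is odd.
\item If $p=2\ell$, all $2\ell$ entries are distinct, which contradicts $Z(C)\ge1$.
\item Hence $p$ is odd, and $x_{2p}=x_0$ with $2p$ even. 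Distinctness of row indices forces $2p\ge 2\ell$. With $p\mid 2\ell$ and $p$ odd, this gives $p=\ell$ with $\ell$ odd (the case $\ell=1$, i.e.\ a diagonal equilibrium $f(i)=i$, is included).
\end{itemize}
Then $x_{t+\ell}=x_t$ shifts odd positions to even ones. Thus $\calJ(C)=\calI(C)$, and shifting the cyclic sequence by the odd amount $\ell$ shows that $C$ and $C^*$ are the same cycle. Conversely, if $C=C^*$, equality of the cyclic sequences up to rotation means $x_{t+s}=x_t$ for some odd shift $s$. By the same argument, the $2\ell$ constraints then reduce to exactly $\ell$ consistent constraints $f(x_t)=x_{t+1}$, $t=0,\dots,\ell-1$, on $\ell$ distinct points. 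This yields probability $n^{-\ell}$.

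In every remaining situation ($Z(C)\ge1$ but $C\ne C^*$), the consistency argument above fails. Two constraints then assign different values to the same $f(x)$, so the probability is $0$.

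The main obstacle is the combinatorial step: showing that consistency together with $Z(C)\ge1$ forces exactly the self-mirror structure. This includes identifying ``$C=C^*$'' in the cyclic-sequence sense of $\calC_{2\ell}$ with an odd rotational period, while keeping track of the normalization $r_{i_1}$ first. I would handle it via the minimal-period argument above. The probability computations themselves are routine once the problem is reduced to independent uniform values of $f$.
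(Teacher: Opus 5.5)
Your proof is correct, but it handles the two overlapping cases by a different route than the paper. The paper works directly with the payoff events. For $C=C^*$, it notes that each arc $(r_i,c_j)$ of $C$ and its mirror arc $(c_i,r_j)$, which also lies on $C$, impose the same event $\{a_{j,i}=\max_m a_{m,i}\}$. Only the $\ell$ column events indexed by $\calJ(C)$ then remain. For $Z(C)\ge1$, $C\ne C^*$, it combines $\{C\subset\calD\}=\{C^*\subset\calD\}$ with Lemma~\ref{vertex-disjoint lemma}: a shared index $u$ puts $r_u$ on the two distinct cycles $C$ and $C^*$, which cannot both lie in $\calD$.

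You instead recognize $\calD$ as the bipartite double cover of the uniform random mapping $f$ (the column-argmax map of $\bm A$), and classify consistent closed walks by their minimal period. The paper's zero-probability argument is shorter. Yours is self-contained, since it never appeals to the pseudoforest lemma. It also proves along the way that a self-mirror cycle has $\calI(C)=\calJ(C)$ and $\ell$ odd, which is the parity part of Lemma~\ref{mirror-count}, proved separately in the paper. Moreover, it exposes that $U_\ell$ is the number of $\ell$-cycles of $f$ for odd $\ell$, and $V_\ell/2$ is the number of $2\ell$-cycles of $f$. The expectations and Poisson limits of this section could therefore be read off from classical random-mapping results.

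Two steps you only asserted are easy to close. In the converse, $x_{t+2s}=x_t$ with $2s$ even forces $2s\equiv 0 \pmod{2\ell}$ by distinctness of the row indices, hence $s=\ell$ and $\ell$ is odd. And $x_0,\dots,x_{\ell-1}$ are distinct because, by $\ell$-periodicity, they exhaust $\calI(C)\cup\calJ(C)=\calI(C)$, a set of size $\ell$.
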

\end{tcolorbox}

\begin{proof}
\textbf{Case 1: $\calI(C) \cap \calJ(C) = \emptyset$.}
Similarly to the independent case,
a $2\ell$-cycle 
$
C = (r_{i_1}, c_{j_1}, \ldots,
r_{i_k}, c_{j_k}, \ldots,
r_{i_\ell}, c_{j_\ell}, r_{i_1})
$
appears in $\calD$
if
the following $2\ell$ events hold
\begin{align}
\label{symEvents}
\bC{
b_{i_k,j_k}
=\textstyle\displaystyle\max_{m\in[n]}b_{i_k,m}
}_{i_k \in \calI}\quad\text{and}\quad
\bC{
a_{i_{k+1},j_k}
=\textstyle\displaystyle\max_{m\in[n]}a_{m,j_k}
}_{j_k \in \calJ}.
\end{align}
Since $\bm B= \bm A^{\mathsf T}$,
the above events are equivalent to
\[
\bC{
a_{j_k, i_k}
=\textstyle\displaystyle
\max_{m\in[n]}a_{m, i_k}
}_{i_k \in \calI}
\quad\text{and}\quad
\bC{
a_{i_{k+1},j_k}
=\textstyle\displaystyle\max_{m\in[n]}a_{m,j_k}
}_{j_k \in \calJ}.
\]
These events involve columns in $\calI$ and $\calJ$
of $\bm A$ respectively, and because 
$\calI \cap \calJ =\emptyset$
and $\bm A$
has i.i.d.\ continuous entries,
the events are mutually independent, and each holds with probability $1/n$, so
$\P{ C \subset\calD }=n^{-2\ell}$.

\textbf{Case 2: $C^*=C$.}
We show that
the first $\ell$ events in \eqref{symEvents}
imply the other $\ell$ events.
To see this, we claim that for each
arc $(r_{i}, c_{j}) \in C$ such that
$b_{i,j}
=\max_{m\in[n]}b_{i,m}$,
there exists a mirror pair 
$(c_{i}, r_{j}) \in C^*$
such that
$a_{j, i}
= \max_{m\in[n]}a_{m, i}$,
and because $\bm B= \bm A^{\mathsf T}$,
these are the same events,
and therefore,
\[
\{ C \subset \calD \}
=\{ C^* \subset \calD \}
= 
\bigcap_{j_k \in \calJ}\bC{
a_{i_{k+1},j_k}
=\max_{m\in[n]}a_{m,j_k}
}.
\]
These $\ell$ events involve distinct columns of $\bm A$, so they are mutually independent
and each holds with probability $1/n$, so $\P{C \subset \calD}=n^{-\ell}$.
This completes the second case.

\textbf{Case 3:} $\calI(C) \cap \calJ(C) \ne \emptyset$ and $C^*\ne C$.
Let $u \in \calI(C) \cap \calJ(C)$.
Then we have 
$u \in \calI(C^*)$.
So vertex $r_u$ is on both $C$ and $C^*$.
By Lemma \ref{vertex-disjoint lemma},
we get $\P{C\subset \calD}=0$.
\end{proof}

\begin{lemma}
\label{mirror-count}
For $\ell\ge1$,
\[
\Abs{ \{ C \in \calC_{2\ell}: C=C^*\}}
=
\frac{(n)_\ell}{\ell} \, \I{ \ell \text{ is odd} }.
\]
\end{lemma}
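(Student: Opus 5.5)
The plan is to understand the mirror map as an operation on closed walks and count its fixed points directly. Write a cycle as the cyclic sequence of vertices $(r_{i_1}, c_{j_1}, \ldots, r_{i_\ell}, c_{j_\ell})$, regarded up to rotation by an even number of positions (the normalization in the definition of $\calC_{2\ell}$ is just a choice of starting row vertex). The mirror $C^*$ swaps the letter types $r\leftrightarrow c$ while keeping the index sequence $i_1,j_1,i_2,j_2,\ldots,i_\ell,j_\ell$. As a cycle starting at a row vertex, $C^*$ is the rotation beginning at $r_{j_1}$: $(r_{j_1}, c_{i_2}, r_{j_2}, \ldots, r_{j_\ell}, c_{i_1})$. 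So, writing $s=(s_1,\ldots,s_{2\ell})=(i_1,j_1,\ldots,i_\ell,j_\ell)$ for the cyclic index word, $C=C^*$ means that $s$ shifted by an \emph{odd} amount $2t+1$ equals $s$ up to an even rotation. In other words, $s$ is invariant under some odd cyclic shift.

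The next step is to analyze which odd shifts are possible. The indices $i_1,\ldots,i_\ell$ are distinct, as are $j_1,\ldots,j_\ell$, and each value appears at most twice in $s$: once in an odd position and once in an even position. If $s$ is invariant under a shift $d$, then $s_p=s_{p+d}=s_{p+2d}=\cdots$, and the orbit of $p$ under shift by $d$ in $\mathbb Z_{2\ell}$ has size $2\ell/\gcd(d,2\ell)$. Since each value occurs at most twice, this orbit size must be at most $2$. It cannot be $1$, because $d$ is odd and so $d\ne 0$. Hence the orbit size is exactly $2$, which forces $d=\ell$ with $\ell$ odd. Consequently, for even $\ell$ there are no self-mirror cycles, which gives the indicator. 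For odd $\ell$, the condition becomes $s_{p+\ell}=s_p$ for all $p$; since $\ell$ is odd, this pairs each odd position with an even position. Concretely, $j_k=i_{k+(\ell+1)/2}$ (indices mod $\ell$), and then $\calJ=\calI$. Conversely, any $s$ with this half-period property is self-mirror.

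Finally, the count. A self-mirror $2\ell$-cycle is determined by the first half $s_1,\ldots,s_\ell$ of its cyclic word. Because $\ell$ is odd, these $\ell$ positions contain exactly one copy of each of the $\ell$ distinct values of $\calI$, and every sequence of $\ell$ distinct elements of $[n]$ arises. Counting words up to even rotation:
\begin{itemize}
\item there are $(n)_\ell$ choices of the full word $s$ (determined by $s_1,\ldots,s_\ell$ distinct);
\item the $\ell$ even rotations give distinct words, since all $i_k$ are distinct, and each such rotation again has the half-period property.
\end{itemize}
This yields $(n)_\ell/\ell$ self-mirror cycles, matching the normalization of \eqref{cycleCount}, where one fixes the starting row $r_{i_1}$ as the smallest index.

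The main obstacle is the first step: making precise that $C^*$, as an element of $\calC_{2\ell}$, corresponds to an odd shift of the index word. This needs care with the normalization starting at $r_{i_1}$ (smallest $\calI$-index) and with the labeling of $\calC_{2\ell}(\calI,\calJ)$ by $(\sigma,\tau)$. I would handle it by working throughout with equivalence classes of closed walks, where this identification is transparent.
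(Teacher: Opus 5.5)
Your proposal is correct and follows essentially the paper's route: on a self-mirror cycle the mirror map acts as a rotation by an odd number of steps, this rotation must be the half-turn by $\ell$ (so $\ell$ is odd), and the self-mirror cycles are then counted by their first $\ell$ indices, giving $(n)_\ell/\ell$. The differences are minor. You obtain the half-turn from the multiplicity bound (each index appears at most twice in the word, so shift-orbits have size at most $2$), whereas the paper uses that mirroring is an involution, so the arc-distance $x$ from $e$ to $e^*$ satisfies $2x=2\ell$. Your counting step, which checks that every sequence of $\ell$ distinct indices yields a valid self-mirror cycle and that its $\ell$ even rotations are distinct, is more explicit than the paper's.
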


\begin{proof}
Fix any arc $e$ of a self-mirror cycle $C \in \calC_{2\ell}$, and let $x$ be the minimum
number of steps along $C$ required to reach its mirror arc $e^*$ in cycle $C$.
By symmetry, mirroring this path gives a path of the same length $x$
from $e^*$ back to $e$. These two paths partition the
$2\ell$-cycle $C$,
because $x$ is the minimum
number of steps to reach $e^*$.
Hence, $x=\ell$. Thus, every arc and its mirror are separated by
exactly $\ell$ steps.
Since every step along $C$ switches sides
and the initial vertices of $e$ and $e^*$ are on opposite sides
of the vertex bipartition, $\ell$ must be odd.
See Figure \ref{fig:self-mirror} for an example.
\begin{figure}[htbp]
\centering
\begin{tikzpicture}[
scale=0.85,
  rv/.style={rectangle,draw,fill=white,minimum size=7mm,inner sep=1pt,font=\small},
  cv/.style={circle,draw,fill=white,minimum size=7.5mm,inner sep=0.5pt,font=\small},
  rvb/.style={rv,line width=1.4pt},
  cvb/.style={cv,line width=1.4pt},
  arr/.style={-{Stealth[length=2.2mm]},semithick,black!45},
  hot/.style={-{Stealth[length=2.6mm]},line width=1.5pt},
  mir/.style={densely dashed,black!30},
  hotmir/.style={densely dashed,line width=1.2pt,black!75}
]
\def\R{2.6}
\foreach \t/\sty/\lab in {0/rvb/{r_1},1/cvb/{c_4},2/rv/{r_2},3/cv/{c_5},4/rv/{r_3},
                          5/cvb/{c_1},6/rvb/{r_4},7/cv/{c_2},8/rv/{r_5},9/cv/{c_3}}
  \node[\sty] (v\t) at ({90-36*\t}:\R) {$\lab$};
\foreach \t [evaluate=\t as \n using {int(mod(\t+1,10))}] in {1,2,3,4,6,7,8,9}
  \draw[arr] (v\t) -- (v\n);
\foreach \a/\b in {2/7,4/9,8/3}
  \draw[mir] (v\a) -- (v\b);
\draw[hotmir] (v0) -- (v5);
\draw[hotmir] (v1) -- (v6);
\draw[hot] (v0) -- (v1);
\draw[hot] (v5) -- (v6);
\end{tikzpicture}
\caption{A self-mirror directed cycle $C=C^*
\in\calC_{2\ell}$ with $\ell=5$.
After $5$ steps along $C$, arc $e=(r_1, c_4)$ moves to its mirror arc $e^*=(c_1, r_4)$.}
\label{fig:self-mirror}
\end{figure}

Note that
a $2\ell$-cycle corresponds to a cyclic permutation and
there are $(n)_\ell$ ordered choices,
and each cyclic ordering is counted $\ell$ times according to its starting point.
This completes the proof.
\end{proof}

By Lemma \ref{sym-cycle-prob}, 
to count cycles in $\calD$,
it suffices to consider 
self-mirror $2\ell$-cycles
$\{ C \in \calC_{2\ell}: C=C^* \}$
and cycles $C \in \calC_{2\ell}^{0}$
with disjoint index sets 
$\calI(C) \cap \calJ(C) = \emptyset$
of strategies,
where
\begin{align}
    \label{C0def}
\calC_{2\ell}^{0}
:= \{ C \in \calC_{2\ell} : 
Z(C) = 0 
\}.
\end{align}
Then we have
\begin{align}
\label{sym-Xell}
X_\ell = U_\ell \cdot \I{\ell\text{ is odd}} + V_\ell,
\quad\mbox{where}\quad
U_\ell = \sum_{C \in \calC_{2\ell}: C = C^* } \I{C \subset \calD},
\quad
V_\ell = \sum_{C \in \calC_{2\ell}^0 } \I{C \subset \calD}.
\end{align}

\begin{tcolorbox}
    \begin{theorem}
\label{symExpected}
       Let $\calD$ be the  
best-response digraph
of an
$n \times n$
random symmetric bimatrix game.
Let $X_\ell$ count the number of $2\ell$-cycles in $\calD$.
Then for $1\le\ell\le n$,
$$
\E{X_\ell}
= 
\frac{1}{\ell}\prod_{i=0}^{\ell-1}\lp 1-\frac{i}{n}\rp
\I{\ell \text{ is odd}}
+ 
\frac{1}{\ell}\prod_{i=0}^{2\ell-1}\lp1-\frac{i}{n}\rp.
$$
Moreover, if $\ell = o(n^{2/3})$, then
\begin{align}
\label{sym-exp-asympt}
\E{ X_\ell }
\sim
\frac{1}{\ell}\exp\lp-\frac{\ell^2}{2n}\rp
\I{\ell \text{ is odd}}.
+
\frac{1}{\ell}\exp\lp-\frac{2\ell^2}{n}\rp.
\end{align}
    \end{theorem}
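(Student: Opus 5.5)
We use the decomposition \eqref{sym-Xell}, $X_\ell = U_\ell\I{\ell\text{ odd}} + V_\ell$, together with linearity of expectation and Lemma~\ref{sym-cycle-prob}. By that lemma, only self-mirror cycles and cycles in $\calC_{2\ell}^0$ have positive probability of appearing. Moreover, the two classes are disjoint for $\ell\ge1$: a self-mirror cycle satisfies $\calI(C)=\calJ(C)$, since $C^*$ swaps the row and column index sets. Hence
\[
\E{X_\ell}=n^{-\ell}\,\Abs{\{C\in\calC_{2\ell}:C=C^*\}}+n^{-2\ell}\,\Abs{\calC_{2\ell}^0}.
\]
The first term is handled directly by Lemma~\ref{mirror-count}. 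It gives $\frac{(n)_\ell}{\ell n^\ell}\I{\ell\text{ odd}}=\frac1\ell\prod_{i=0}^{\ell-1}(1-\frac in)\I{\ell\text{ odd}}$.

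For the second term we count $\calC_{2\ell}^0$ by mimicking \eqref{cycleCount}. We choose the row index set $\calI$ in $\binom{n}{\ell}$ ways and the column set $\calJ\subseteq[n]\setminus\calI$ in $\binom{n-\ell}{\ell}$ ways. Each such pair supports $\ell!^2/\ell$ directed cycles, exactly as in \eqref{cycleCount}. The total is therefore
\[
\binom{n}{\ell}\binom{n-\ell}{\ell}\frac{\ell!^2}{\ell}=\frac{(n)_{2\ell}}{\ell}.
\]
Multiplying by $n^{-2\ell}$ gives $\frac1\ell\prod_{i=0}^{2\ell-1}(1-\frac in)$, and this proves the exact formula. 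If $2\ell>n$, the falling factorial vanishes, which is consistent with $\calC_{2\ell}^0$ being empty.

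For the asymptotics with $\ell=o(n^{2/3})$, we argue as in Lemma~\ref{lem:muEquiv}. We write $\log\prod_{i=0}^{m-1}(1-\frac in)=-\sum_{i<m}\big(\frac in+O(\frac{i^2}{n^2})\big)=-\frac{m^2}{2n}+O(\frac mn)+O(\frac{m^3}{n^2})$. With $m=\ell$ or $m=2\ell$, both error terms are $o(1)$ precisely because $\ell^3=o(n^2)$. This yields $e^{-\ell^2/(2n)}$ and $e^{-2\ell^2/n}$ respectively, and each product is equivalent to its exponential.

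The one delicate point is the expansion for $m=2\ell$, since $i/n$ need not be small uniformly. This is fine because $2\ell=o(n)$, so $\log(1-x)=-x+O(x^2)$ applies uniformly for $x=i/n\le 2\ell/n\to0$. Since both summands are nonnegative, the termwise equivalences combine to give \eqref{sym-exp-asympt}. There is no real obstacle beyond checking the disjointness of the two cycle classes and the count of $\calC_{2\ell}^0$.
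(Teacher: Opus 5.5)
Your proposal is correct and follows the paper's proof essentially step for step. You use the same decomposition \eqref{sym-Xell}, Lemmas~\ref{sym-cycle-prob} and~\ref{mirror-count} for the self-mirror term, and the count $|\calC_{2\ell}^0|=(n)_{2\ell}/\ell$, which you write as $\binom{n}{\ell}\binom{n-\ell}{\ell}\ell!^2/\ell$ where the paper writes $\binom{n}{2\ell}\binom{2\ell}{\ell}\ell!^2/\ell$; the asymptotics are then handled as in Lemma~\ref{lem:muEquiv}. Your explicit check that a self-mirror cycle has $\calI(C)=\calJ(C)$, and so lies outside $\calC_{2\ell}^0$ (preventing double counting), is a small point the paper leaves implicit.
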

\end{tcolorbox}

\begin{proof}
First note that
\[
\abs{\calC_{2\ell}^{0}}
= \binom{n}{2\ell}\binom{2\ell}{\ell}\frac{\ell!^2}{\ell}
= \frac{ (n)_{2\ell} }{\ell},
\]
as to count the cycles on disjoint sets of strategies, 
there are $\binom{n}{2\ell}$ choices for the 2 disjoint sets of strategies, $\binom{2\ell}{\ell}$ choices for the one cycle, and we order both sets and divide by $\ell$ for the overcounting.
Then using Lemma \ref{sym-cycle-prob} for $C \in \calC_{2\ell}^{0}$, we have
\[
\E{ V_\ell }
= \abs{\calC_{2\ell}^{0}}
n^{-2\ell}
= \frac{ (n)_{2\ell} }{\ell}
n^{-2\ell}.
\]
By Lemma \ref{mirror-count} and Lemma \ref{sym-cycle-prob},
for self-mirror
cycles,
\[
\E{ U_\ell } 
= \Abs{ \{ C \in \calC_{2\ell}: C=C^*\}}
n^{-\ell} \I{\ell\text{ is odd}}
= \frac{(n)_\ell}{\ell}
n^{-\ell} \I{\ell\text{ is odd}}.
\]
In view of \eqref{sym-Xell}, combining the above yields the first part.
The derivation of the asymptotics for $\ell=o(n^{2/3})$ is similar to that in 
Lemma \ref{lem:muEquiv}.
\end{proof}

Next we determine the asymptotic distribution of the number of $2\ell$-cycles. The limiting distribution depends on whether $\ell$ is even or odd.

\begin{tcolorbox}
\begin{theorem}
Let $\calD$ be the  
best-response digraph
of an
$n \times n$
random symmetric bimatrix game.
For $\ell\ge2$, 
let $U_\ell$ and $V_\ell$ be defined by \eqref{sym-Xell}.
Then
\[
(U_\ell, V_\ell/2)
\stackrel{d}{\to} 
\bR{W_1 \cdot \I{\ell\text{ is odd}}, W_2}
\quad\mbox{where}\quad
W_1 \sim \Po(1/\ell)
\quad\mbox{and}\quad
W_2 \sim \Po(1/(2\ell))
\]
are independent random variables.
Let 
$X_\ell$ count the number of $2\ell$-cycles in $\calD$. Then
\[
X_{\ell} 
\stackrel{d}{\to} 
W_1  \cdot \I{\ell\text{ is odd}} + 2 W_2.
\]

\end{theorem}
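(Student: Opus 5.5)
We prove joint convergence of $(U_\ell, V_\ell/2)$ by the method of joint factorial moments (e.g.\ \cite[Theorem 6.10]{janson2011random}), for fixed $\ell\ge2$. The claim for $X_\ell$ then follows from \eqref{sym-Xell} by the continuous mapping theorem.

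First we reorganize $V_\ell$. For $C\in\calC_{2\ell}^0$ we have $\calI(C^*)=\calJ(C)$ and $\calJ(C^*)=\calI(C)$. Hence $C^*\in\calC_{2\ell}^0$, and $C^*\ne C$ because $Z(C)=0$ with $\ell\ge1$. So $\calC^0_{2\ell}$ splits into unordered mirror pairs $\{C,C^*\}$, and there are $(n)_{2\ell}/(2\ell)$ of them. Since $\{C\subset\calD\}=\{C^*\subset\calD\}$, we can write $V_\ell/2=\sum_{\{C,C^*\}}\I{C\subset\calD}$, a sum of indicators over pairs. By Theorem \ref{symExpected}, its mean tends to $1/(2\ell)$. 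When $\ell$ is odd, $\E{U_\ell}\to1/\ell$; when $\ell$ is even, $U_\ell$ does not enter $X_\ell$ at all, so only the odd case needs the joint statement.

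Fix $s,t\ge0$ and compute $\E{(U_\ell)_s (V_\ell/2)_t}$. This is the sum, over ordered tuples of $s$ distinct self-mirror cycles $S_1,\dots,S_s$ and $t$ distinct mirror pairs $P_1,\dots,P_t$, of the probability that all of them lie in $\calD$. By Lemma \ref{vertex-disjoint lemma}, only tuples whose underlying cycles (including both members of each pair) are pairwise vertex-disjoint contribute.

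For such a disjoint configuration we need the joint probability. A self-mirror cycle uses row and column index set $\calI=\calJ$, which we call its index set $K_i$. A pair $\{C,C^*\}$ with $C$ on $(\calI,\calJ)$ occupies rows and columns indexed by $\calI\cup\calJ$. Vertex-disjointness of all the cycles means these index sets, namely the $K_i$ and the sets $\calI_j\cup\calJ_j$, are pairwise disjoint subsets of $[n]$. Following the proof of Lemma \ref{sym-cycle-prob}, the event reduces to events $\{a_{\cdot,j}=\max_m a_{m,j}\}$: one event for each column $j\in K_i$, and, for each pair, one event for each column in $\calI_j\cup\calJ_j$ (one representative cycle's $2\ell$ events). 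These involve distinct columns of $\bm A$, so they are independent and the probability is $n^{-s\ell-2t\ell}$.

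The number of disjoint configurations is $\frac{(n)_{s\ell+2t\ell}}{\ell^s(2\ell)^t}$, obtained by choosing ordered disjoint index sets and cyclic structures as in Lemma \ref{mirror-count} and the count of $\calC^0_{2\ell}$. Therefore
\[
\E{(U_\ell)_s(V_\ell/2)_t}=\frac{(n)_{(s+2t)\ell}}{n^{(s+2t)\ell}}\,\ell^{-s}(2\ell)^{-t}\to (1/\ell)^s(1/(2\ell))^t,
\]
which is the joint factorial moment of independent $\Po(1/\ell)$ and $\Po(1/(2\ell))$ variables.

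The main obstacle is the disjointness bookkeeping. We must check that vertex-disjointness in $\calD$ indeed forces all column index sets to be disjoint, including the mixed case of a self-mirror cycle against a mirror pair (a shared index $u$ would put $r_u$ or $c_u$ on both). We must also check that the independence argument of Lemma \ref{sym-cycle-prob} extends to several cycles simultaneously once this disjointness holds.
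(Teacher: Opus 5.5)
Your proposal is correct and follows essentially the same route as the paper. Both compute joint factorial moments (the paper uses binomial moments) over mirror-closed configurations of self-mirror cycles and mirror pairs with pairwise disjoint index sets, use independence across distinct columns of $\bm A$, and count configurations as $(n)_{(s+2t)\ell}/(\ell^s(2\ell)^t)$; the paper handles even $\ell$ separately. For even $\ell$, state explicitly that $U_\ell\equiv 0$ by Lemma \ref{mirror-count}, so the joint claim $(U_\ell,V_\ell/2)\stackrel{d}{\to}(0,W_2)$ reduces to your $s=0$ computation.
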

\end{tcolorbox}

Note that if $\ell$ is odd,
$U_\ell$ and $V_\ell$ count
the number of $2\ell$-cycles with the same and disjoint sets of strategy indices, respectively.
Since every cycle $C$ counted by $V_\ell$ appears together with its distinct mirror $C^*$, 
then $V_\ell/2$ is the mirror (cycle) pair count.

\begin{proof}
First, assume $\ell\ge2$ is even. 
By Lemma \ref{mirror-count},
we have the self-mirror cycle count $U_\ell=0$.
Hence it suffices to consider the cycles in $\calC_{2\ell}^{0}$,
recalling its definition in \eqref{C0def}.
Let $\calP_{2\ell}^{0}$ be the set of mirror pairs $P=\{C,C^*\}$ for $C\in\calC_{2\ell}^{0}$, and define the mirror pair count
$Y_P$ by
\[
\calP_{2\ell}^{0}
= \bC{ \{C,C^*\}:  C \in \calC_{2\ell}^{0} }
\quad\mbox{and}\quad
Y_P := Y_C=Y_{C^*}.
\]
Then
\[
V_\ell/2 = \sum_{ P\in\calP_{2\ell}^{0} } Y_P,
\quad\mbox{and}\quad
\abs{\calP_{2\ell}^{0}}=\frac12\abs{\calC_{2\ell}^{0}}.
\]
We consider all of $k$ mirror pairs to use pairwise disjoint sets of strategies, since otherwise the
probability that $\prod_{j\in[k]} Y_{P_j} = 1$ is zero. 
Thus, the random variables $\{ Y_{P_j}: j \in [k] \}$ are independent and 
\begin{align*}
\E{ (V_\ell/2)_k}
=\sum_
{ (P_1,\ldots,P_k )\text{ disjoint}}
\E{ \prod_{j\in[k]}Y_{P_j}}
\,&=\prod_{i=0}^{k-1}\lp\frac12 \binom{n-2i\ell}{2\ell}\binom{2\ell}{\ell}\frac{\ell!^2}{\ell}n^{-2\ell}\rp\\
    &=\frac{1}{(2\ell)^k}\prod_{i=0}^{2k\ell-1}\left(1-\frac{i}{n}\right)\to \lp1/2\ell\rp^k
\end{align*}
as $n\to\infty$. 
Hence 
$V_\ell/2
\stackrel{d}{\to} W_2$ and $(U_\ell,V_\ell/2)=(0,V_\ell/2)
\stackrel{d}{\to} (0,W_2)$.

Next, fix an odd $\ell\ge3$.
Similarly, by the method of moments, it suffices to show
\begin{align*}
\E{\binom{U_\ell }{r}\binom{V_\ell/2}{s}}
&=\frac{1}{n^{\ell(r+2s)}}
\prod_{i=0}^{r-1}
\binom{n-i\ell}{\ell}
\lp\frac{\ell!}{\ell}\rp^r\frac{1}{r!}
\cdot
\prod_{i=0}^{2s-1}
\binom{n-(r+i)\ell}{\ell}
\lp\frac{\ell!^2}{2\ell}\rp^s\frac{1}{s!}\\
&=\prod_{i=0}^{\ell(r+2s)-1}\lp1-\frac{i}{n}\rp\frac{1}{s!r!}\frac{1}{\ell^{r}(2\ell)^s}
    \to \frac{1}{\ell^r r!}\frac{1}{(2\ell)^s s!},
\end{align*}
as $n\to\infty$.
To see this, the choice of $r$ self-mirror cycles 
and $s$ mirror pairs with pairwise disjoint index sets occurs with probability $n^{-\ell r-2\ell s}$. 
To count such choices, take an ordered list of $\ell(r+2s)$ distinct indices. Each self-mirror cycle is counted $\ell$ times, each mirror pair $2\ell$ times, and ordering the objects contributes factors $r!$ and $s!$.

Since $X_\ell = U_\ell + 2(V_\ell/2)$ for all $\ell\ge2$ by Lemma \ref{sym-cycle-prob}, we get $X_\ell \stackrel{d}{\to} W_1\cdot\I{\ell\text{ is odd}} + 2W_2$. 
\end{proof}

\subsection{Longest cycle length}
The length of the longest cycle in a random symmetric game is similar to that in the independent case.
\begin{tcolorbox}
\begin{theorem}
Let $\calD$ be the  
best-response digraph
of an
$n \times n$
random symmetric bimatrix game.
The length of the longest cycle is 
\[
2\ell_{\max}=\Theta_\mathbb{P}\lp\sqrt{n}\rp\,.
\]
\end{theorem}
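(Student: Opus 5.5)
We follow the proof of Theorem~\ref{thm:longest}, with the decomposition $X_\ell = U_\ell\,\I{\ell\text{ is odd}} + V_\ell$ of \eqref{sym-Xell}. Fix $0<\ep<1$. We look for constants $0<c<C$, depending only on $\ep$, such that for all large $n$
\[
\P{ c\sqrt n \le \ell_{\max} < C\sqrt n } \ge 1-2\ep .
\]

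\textbf{Upper bound (first moment).} Theorem~\ref{symExpected} gives
\[
\E{X_\ell} \le \frac1\ell\prod_{i=0}^{\ell-1}\lp1-\frac in\rp + \frac1\ell\prod_{i=0}^{2\ell-1}\lp1-\frac in\rp
\le \frac2\ell\,\Exp{-\frac{(\ell-1)^2}{2n}} ,
\]
using $1-x\le e^{-x}$. We sum over $\ell\ge \lceil C\sqrt n\,\rceil$ in dyadic blocks, exactly as in the proof of Theorem~\ref{thm:longest}. The only change is that the exponent is halved. This yields
\[
\P{\ell_{\max}\ge C\sqrt n} \le \frac{2e^{-C^2/2}}{1-e^{-C^2/2}} \le \ep
\]
once $C=C(\ep)$ is chosen large enough.

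\textbf{Lower bound (second moment).} We work only with the disjoint-index cycles, because they behave exactly as in the independent case. Let
\[
\calT=\bC{C\in\calC_{2\ell}^0:\ \lceil c\sqrt n\,\rceil\le \ell\le\lceil\sqrt n\,\rceil}.
\]
By Lemma~\ref{sym-cycle-prob}, each $C\in\calT$ has $\P{C\subset\calD}=n^{-2\ell}$. Hence Theorem~\ref{symExpected} gives
\[
\mu=\sum_{\ell}\E{V_\ell}\sim\sum_{\ell=\lceil c\sqrt n\rceil}^{\lceil \sqrt n\rceil}\frac1\ell e^{-2\ell^2/n}\ge (1+o(1))\,e^{-2}\log(1/c).
\]

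The obstacle is $\Delta$. Events for distinct cycles are no longer independent, since each cycle is forced to appear together with its mirror. So we bound $\E{Y_{C_1}Y_{C_2}}$ for vertex-disjoint $C_1\ne C_2\in\calT$ directly, in two cases.

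\emph{Case $C_2=C_1^*$.} Then $\E{Y_{C_1}Y_{C_2}}=\E{Y_{C_1}}$, and summing over $C_1$ contributes at most $\mu$. So this case is absorbed by taking $\alpha=1$.

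\emph{Case $C_2\notin\{C_1,C_1^*\}$.} The event $\{C_1\subset\calD\}\cap\{C_2\subset\calD\}$ equals $\{C_1,C_1^*,C_2,C_2^*\subset\calD\}$. If two of these four cycles share a vertex the probability is $0$, by Lemma~\ref{vertex-disjoint lemma}. Otherwise the index sets $\calI(C_1),\calJ(C_1),\calI(C_2),\calJ(C_2)$ are pairwise disjoint. The defining events then concern distinct columns of $\bm A$, as in Case~1 of Lemma~\ref{sym-cycle-prob}, so they are independent and
\[
\E{Y_{C_1}Y_{C_2}}=n^{-2\ell_1-2\ell_2}=\E{Y_{C_1}}\E{Y_{C_2}}.
\]
Summing over pairs gives at most $\mu^2$.

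Therefore $\Delta\le \mu+\mu^2$. Lemma~\ref{2rd-moment} with $\alpha=1$, $\beta=0$ yields
\[
\P{\ell_{\max}<c\sqrt n}\le \P{\textstyle\sum_{C\in\calT}Y_C=0}\le \frac2\mu\le (1+o(1))\frac{2e^2}{\log(1/c)} .
\]
Choosing $c=e^{-4e^2/\ep}$ makes this at most $\ep$ for large $n$. A union bound with the upper-bound estimate completes the argument.

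The main point to verify carefully is the independence claim in the second case of the $\Delta$ bound, namely that pairwise disjointness of the four index sets makes all defining events depend on distinct columns of $\bm A$. This follows from the argument of Lemma~\ref{sym-cycle-prob}, Case~1.
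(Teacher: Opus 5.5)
Your proposal is correct and follows the paper's proof essentially step for step. It uses the same first-moment bound $\E{X_\ell}\le \frac{2}{\ell}\exp(-(\ell-1)^2/(2n))$ with dyadic summation, and restricts the second-moment family to $\calC_{2\ell}^0$ with $\lceil c\sqrt n\rceil\le\ell\le\lceil\sqrt n\rceil$. It also makes the same choice $\alpha=1$, $\beta=0$ with the mirror pairs absorbed into $\alpha\mu$, and even the same constant $c=e^{-4e^2/\ep}$.

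Your treatment of non-mirror pairs is slightly more careful than the paper's one-line independence claim. You show the joint probability is either $0$ or exactly $n^{-2\ell_1-2\ell_2}$, according to whether $\calI(C_1),\calJ(C_1),\calI(C_2),\calJ(C_2)$ are pairwise disjoint. The paper's claim glosses over the case where $C_2$ is vertex-disjoint from $C_1$ but meets $C_1^*$.
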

\end{tcolorbox}
\begin{proof}
Let $0<\ep<1$ be fixed,
and define
\[
c=e^{-4e^2/\ep}
\quad\mbox{and}\quad
C= \sqrt{2\log ( 1+2/\ep )}.
\]
The first moment calculation is similar to the proof of Theorem \ref{thm:longest}.
By Theorem \ref{symExpected}, 
the expected cycle count is
\[
\E{X_\ell}
\le 
\frac{2}{\ell}\prod_{i=0}^{\ell-1}\lp 1-\frac{i}{n}\rp
\le 
\frac{2}{\ell}\Exp{-\frac{(\ell-1)^2}{2n}},
\]
and bounding the summation of $\E{X_\ell}$ over $\ell \ge \ceil{C\sqrt n}$ yields 
$\P{ \ell_{\max} \ge C\sqrt{n} }\le\ep$.

For the lower bound, 
similarly to the proof of Theorem \ref{thm:longest}, define
\[
\calT
=
\bC{
C\in \calC_{2\ell}^0:
\ceil{ c\sqrt n }
\le\ell\le
\ceil{ \sqrt n }
}
\]
to be a set of cycles with disjoint index sets.
Then we have
\[
\P{ \ell_{\max} < c\sqrt{n}}
\le
\P{ \sum_{C\in\calT}Y_C=0} \le 
2/\mu
\quad\mbox{ with }\quad
\mu=\sum_{C\in\calT} \E{Y_C},
\]
by choosing $\alpha=1$ and $\beta=0$
in Lemma \ref{2rd-moment}.
To justify the choice of $\alpha$ and $\beta$, 
note that
\[
\Delta
=
\sum_{
\substack {C_i\ne C_j\in\calT \\
C_i \cap C_j=\emptyset}
}
\E{ Y_{C_i}Y_{C_j} }
\le
\sum_{C\in\calT}\E{Y_C}
+\sum_{\substack{C\ne C'\in\calT\\ 
C'\ne C^*, C'\cap C=\emptyset }}\E{Y_C}\E{Y_{C'}}
\le \mu+\mu^2,
\]
where we split the summation by
separating the case of mirror pairs,
and noting that if $C=C^*$,
\[
\E{Y_C Y_{C^*} }=\E{ Y_C }.
\]
Noting that if $C \ne C' \ne C^*$ 
are all distinct
and $C \cap C'=\emptyset$, then 
$\{ C \in \calD \}$
and $\{ C' \in \calD \}$ are independent events,
as they involve disjoint sets of entries of the payoff matrix $\bm A$.
In view of the asymptotics of cycle count in
\eqref{sym-exp-asympt},
keeping only the even $\ell$ 
as a lower bound on the cycle count,
we have
\[
\mu
\ge 
(1+o(1))\sum_{\ell=\ceil{ c\sqrt n }}
^{\ceil{ \sqrt n }} e^{-2\ell^2/n} \frac{1}{\ell}
\ge (1+o(1)) e^{-2}\log(1/c)
\ge (1+o(1)) 4/\ep,
\]
and therefore,
$
\P{ \ell_{\max} < c\sqrt{n}}
\le
\ep$. The proof is complete.
\end{proof}

\subsection{Total cycle count}

We close this section with an estimate of the total number of cycles in a random symmetric bimatrix game.
The proof is similar to that of Theorem \ref{number of cycles independent case}.

\begin{tcolorbox}
    \begin{theorem}
\label{number of cycles-symmetric}
Let $\calD$ be the  
best-response digraph
of an
$n \times n$
random symmetric bimatrix game,
and $T_n$ denote the total number of directed cycles in $\calD$.
Then
\[
    \E{ T_n } =\lp\frac{3}{4}+o(1)\rp  \log n\,.
\]
\end{theorem}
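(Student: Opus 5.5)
By linearity and Theorem~\ref{symExpected}, the total count splits into a self-mirror part and a disjoint-index part:
\[
\E{T_n}=\sum_{\ell=1}^n \E{X_\ell}=\sum_{\ell\ \mathrm{odd}}\frac1\ell\prod_{i=0}^{\ell-1}\lp1-\frac in\rp+\sum_{\ell=1}^{n}\frac1\ell\prod_{i=0}^{2\ell-1}\lp1-\frac in\rp.
\]
Here $\ell=1$ is included, since the formula covers the $2$-cycles, i.e.\ the pure Nash equilibria. I would show that the first sum is $(\frac14+o(1))\log n$ and the second is $(\frac12+o(1))\log n$. Heuristically, the odd-only harmonic sum up to a cutoff $L$ is $\frac12\log L$, and $L$ should be $\sqrt n$ up to logarithmic factors; this gives $\frac14\log n$. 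The second sum is $\log L\approx\frac12\log n$.

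I would follow the three-range split $I_1,I_2,I_3$ from the proof of Theorem~\ref{number of cycles independent case}, applied to each sum separately.

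\textbf{Range $I_1$, lower bound.} For $\ell\le\sqrt{n/\log n}$, the Weierstrass product inequality gives
\[
\prod_{i<m}\lp1-\frac in\rp\ge 1-\frac{m^2}{2n},
\]
with $m=\ell$ or $m=2\ell$. Each summand is therefore at least $\frac1\ell-O(\ell/n)$, and the error summed over $I_1$ is $O(1/\log n)$. So
\[
\sum_{\ell\in I_1}\frac1\ell=\frac12\log n-\frac12\log\log n+O(1),
\qquad
\sum_{\ell\in I_1,\ \ell\ \mathrm{odd}}\frac1\ell=\frac12\sum_{\ell\in I_1}\frac1\ell+O(1),
\]
which yields the lower bounds $(\frac14+o(1))\log n$ and $(\frac12+o(1))\log n$.

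\textbf{Upper bounds.} I would use $1-x\le e^{-x}$ to get
\[
\prod_{i<m}\lp1-\frac in\rp\le \exp\lp-\frac{(m-1)^2}{2n}\rp.
\]
On $I_1$ I simply bound each product by $1$. On $I_2$ the same trivial bound gives $\sum_{\ell\in I_2}1/\ell=O(\log\log n)=o(\log n)$. On $I_3$ ($\ell>\sqrt{n\log n}$) the exponential bound is at most $e^{-(\ell-1)^2/(2n)}\le n^{-1/2+o(1)}$. Summed against $1/\ell$ over at most $n$ terms, this is $n^{-1/2+o(1)}\log n=o(1)$. The disjoint-index sum decays even faster, because $(2\ell-1)^2/(2n)$ is larger, and it vanishes identically for $2\ell>n$ since the falling factorial $(n)_{2\ell}$ is then zero.

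\textbf{Main obstacle.} The only point needing care is the self-mirror sum on $I_3$. Its exponent carries $1/(2n)$ rather than $1/n$, so the decay there is only $n^{-1/2}$ rather than $n^{-1}$. I would check explicitly that this still gives $o(1)$, which it does as computed above. If a cleaner bound were wanted, I would move the cutoff to $\sqrt{3n\log n}$; this changes only the $O(\log\log n)$ contribution from $I_2$. Combining the three ranges for both sums gives $\E{T_n}=(\frac34+o(1))\log n$.
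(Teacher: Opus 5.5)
Your proposal is correct and follows essentially the same route as the paper. Both use the parity split from Theorem~\ref{symExpected} into the self-mirror (odd $\ell$) sum and the disjoint-index sum, treat the latter as in the independent case, and truncate the self-mirror sum at $\sqrt{n\log n}$, where the weaker $e^{-\ell^2/(2n)}$ decay still gives an $o(1)$ tail. The only difference is in execution: the paper replaces the product by $e^{-\ell^2/(2n)}$ and rewrites the odd sum via $\ell=2k-1$, whereas you bound the odd harmonic sum directly with the Weierstrass inequality and $1-x\le e^{-x}$ on $I_1,I_2,I_3$, which is if anything slightly more careful.
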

\end{tcolorbox}

\begin{proof}
We split the summation according to the parity of $\ell$,
\begin{align*}
\E{ T_n }
= \sum_{\ell\text{ odd}} 
\E{X_\ell}
+ \sum_{\ell\text{ even}} \E{X_\ell}
=\sum_{\text{odd } \ell\ge1} 
\frac{1}{\ell}\prod_{i=0}^{\ell-1}\lp1-\frac{i}{n}\rp
+ \sum_{\ell\ge1} 
\frac{1}{\ell}\prod_{i=0}^{2\ell-1}\lp1-\frac{i}{n}\rp\,.
\end{align*}
The second sum, following a similar analysis to that in the proof of Theorem \ref{number of cycles independent case}, is asymptotic to $\log n/2$. 
The first sum can be estimated as follows.
For $\ell>\sqrt{n\log n}$, we have
\begin{align*}
&\sum_{\text{odd } \ell=
\floor{\sqrt{n\log n}}}^n 
\dfrac{1}{\ell}\prod_{i=0}^{\ell-1}\lp1-\frac{i}{n}\rp \\
&\qquad\le
\sum_{\ell=\floor{\sqrt{n\log n}}}^n \dfrac{1}{\ell}
\Exp{ -\dfrac{\ell(\ell-1)}{2n} }
\le (1+o(1)) \frac{n e^{ - \log n /2}}{ \sqrt{n\log n} }
\sim \log^{-1/2} n.
\end{align*}
For the rest of the values of $\ell$, we have
\begin{align*}
\sum_{\text{odd } \ell} \dfrac{1}{\ell}\prod_{i=0}^{\ell-1}\lp1-\frac{i}{n}\rp
&\sim
\sum_{\text{odd } \ell} \dfrac{1}{\ell}
\Exp{ -\dfrac{\ell^2}{2n} }\\
&=\sum_{k\ge1} \frac{1}{2k-1}
\Exp{ -\dfrac{(2k-1)^2}{2n} }
\sim \frac12 \sum_{k\ge1} \frac{1}{k}
\Exp{ - \frac{2k^2}{n} }
\sim \frac14\log n\,.
\end{align*}
Combining the above estimates finishes the proof.
\end{proof}

\section{Random antisymmetric games}
In this section, we study antisymmetric games, where 
the payoff matrix $\bm A$ of the first player is an $n \times n$ random matrix, whose entries are independent, drawn from the same nonatomic distribution,
and
the payoff matrix of the second player is $\bm B=-\bm A^{\mathsf{T}}$.

We estimate the expected number of cycles of a fixed length, the expected total number of cycles, and the typical length of the longest cycle.

\subsection{Pure Nash equilibria}
We start with $2$-cycles that correspond to PNE.
\begin{tcolorbox}
    \begin{theorem}
    \label{anti-cycle}
For an
$n \times n$
random antisymmetric bimatrix game,
we have
\[
\E{N_\PNE} = 1-1/n
\quad\mbox{and}\quad
N_\PNE \overset{d}{\to} \Po(1)\,.
\]
    \end{theorem}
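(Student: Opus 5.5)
Our approach is to compute the probability that a fixed pair $(r_i,c_j)$ is a pure Nash equilibrium by translating both best-response conditions into conditions on entries of $\bm A$. Then we handle the limit law by the factorial-moment method, exactly as in Theorem~\ref{2lcyclesInd}.

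With $\bm B=-\bm A^{\mathsf T}$ we have $b_{i,j}=-a_{j,i}$. The arc $(r_i,c_j)$ is present iff $a_{j,i}=\min_{m}a_{m,i}$, that is, $a_{j,i}$ is the minimum of column $i$ of $\bm A$. The arc $(c_j,r_i)$ is present iff $a_{i,j}=\max_m a_{m,j}$, the maximum of column $j$.

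\textbf{Case $i=j$.} The two conditions require $a_{i,i}$ to be both the minimum and the maximum of column $i$. For $n\ge2$ this has probability zero, so diagonal pairs never give an equilibrium.

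\textbf{Case $i\ne j$.} The conditions involve two distinct columns $i$ and $j$ of $\bm A$, which are independent, and each condition has probability $1/n$. Hence $\P{(r_i,c_j)\text{ is PNE}}=n^{-2}$. Summing over the $n(n-1)$ ordered pairs gives
\[
\E{N_\PNE}=n(n-1)n^{-2}=1-1/n .
\]

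For the Poisson limit, we compute $\E{(N_\PNE)_k}$ for fixed $k$ as a sum over ordered $k$-tuples of distinct off-diagonal pairs $(i_1,j_1),\dots,(i_k,j_k)$. By Lemma~\ref{vertex-disjoint lemma}, two equilibria (which are $2$-cycles) must be vertex-disjoint. So the $i_t$ are distinct and the $j_t$ are distinct.

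The key point, which is the only genuinely new step, is the extra constraint across pairs. The pair $(i_t,j_t)$ uses column $i_t$ (via its minimum) and column $j_t$ (via its maximum). If some $i_s=j_t$ with $s\ne t$, then column $i_s$ would need its minimum in row $j_s$ and its maximum in row $i_t$. Since $j_s\ne i_t$ by distinctness, this is not immediately impossible. So I will handle the configurations as follows.

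\textbf{Dominant configurations.} Call a $k$-tuple generic if all $2k$ indices are distinct. Then it involves $2k$ distinct columns, the $2k$ events are independent, and the probability is $n^{-2k}$. The number of such tuples is $(n)_{2k}$, which contributes $(n)_{2k}n^{-2k}\to 1$.

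\textbf{Remaining configurations.} A non-generic tuple has some index shared between an $i$-set and a $j$-set, so it uses at most $2k-1$ distinct indices. There are at most $O(n^{2k-1})$ such tuples. Each event still has probability at most $n^{-2k}$: when $i_s=j_t$, column $i_s$ must have its minimum at a prescribed row and its maximum at another prescribed row, which has probability $1/(n(n-1))$. Columns remain independent across distinct indices, so the probability is $O(n^{-2k})$, and the total contribution is $O(1/n)$.

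Hence $\E{(N_\PNE)_k}\to1$ for every $k$, and the method of moments (as cited before Theorem~\ref{2lcyclesInd}) gives $N_\PNE\overset{d}{\to}\Po(1)$.

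The main obstacle is this bookkeeping of overlapping indices. The plan is to bound its probability uniformly by $O(n^{-2k})$, using the fact that each column contributes at most a prescribed min-position and a prescribed max-position, rather than computing it exactly.
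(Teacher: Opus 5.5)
Your proof is correct and follows essentially the same route as the paper. You translate both best-response conditions into column extrema of $\bm A$ to get probability $n^{-2}$ per off-diagonal pair, then compute factorial moments by separating the $(n)_{2k}$ tuples with all indices distinct from the $O(n^{2k-1})$ overlapping ones of probability $O(n^{-2k})$. One small slip: a doubly constrained column contributes $1/(n(n-1))>n^{-2}$, so an overlapping tuple has probability $n^{-2k}\bigl(n/(n-1)\bigr)^{Z}$, with $Z$ the number of shared indices. Your phrase ``at most $n^{-2k}$'' should therefore read $O(n^{-2k})$, which is all you actually use.
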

\end{tcolorbox}

\begin{proof}
Fix a strategy $r_i$ of the row player and a strategy $c_j$ of the column player. 
By definition, the pair $(r_i,c_j)$ is a pure Nash equilibrium if 
$a_{i,j}$ is the maximum in column $j$ and $a_{j,i}$ is the minimum in column $i$. 
Thus,
$\P{(r_i,c_j)\text{ is a PNE} }
=1/n^2$.

Note that we cannot have pure Nash equilibria with $i=j$, and whenever $(r_i,c_j)$ is a pure Nash equilibrium, $(r_j,c_i)$ cannot be a pure Nash equilibrium.
Let
\[
N_{\PNE} 
= \sum_{i \ne j} \I{(r_i, c_j) \text{ is a PNE}}.
\]
Then
$$
    \E{N_\PNE}=1-\frac{1}{n}\,.
$$

To calculate the $k$-th binomial moment, we sum over $k$ pure Nash equilibria that share no rows or columns. The number of ways to choose the first point $(i_1,j_1)$ is $n(n-1)$, since it is off the diagonal. The second point $(i_2,j_2)$ has to be in a different row and column than the first one, so we have $(n-2)(n-3)$ choices for it, and so on. We also divide by $k!$ to account for the ordering of the $k$ points.

The non-disjoint terms (where some index is shared) contribute a negligible amount. In fact, if two PNE $(r_{i_1}, c_{j_1})$ and $(r_{i_2},c_{j_2})$ share an index, they impose conditions on a common column of $\bm A$, and the number of such tuples is $O(n^{2k-1})$, each with probability $O(n^{-2k})$, giving a total contribution of $O(1/n)$. 
So, for any fixed $k$,
\begin{align*}
\E{\binom{X_1}{k}}
&=\frac{1}{k!} \sum_{\text{disjoint } (i_1,j_1),\ldots,(i_k,j_k)}
\E{
\prod_{r \in [k]}  Y_{i_r,j_r} }
+ \O{1/n}\\
&=\frac{1}{k!} \prod_{i=0}^{2k-1} 
(n-i)\lp\frac{1}{n^2}\rp^k
+ \O{1/n}
=\frac{1}{k!} 
\prod_{i=1}^{2k-1} 
\bR{ 1-\frac{i}{n} } + \O{1/n}
\sim\frac{1}{k!}.
\end{align*}
By the method of moments, we have that $X_1\overset{d}{\to} \Po\lp1\rp$. 
The proof is complete.
\end{proof}

\subsection{Cycle count}
For an antisymmetric game, the best-response digraph
has a special structure due to the payoff relationship between players. When one player has the best response to a strategy, the other player's corresponding best response must be different because of the anti-symmetry and the lack of ties. That is, for $n\ge2$, whenever BR($r_i$)=$c_j$, then BR($c_i$)$\neq r_j$. 

\begin{tcolorbox}
    \begin{definition}\label{def:sympair}
In the bipartite digraph on vertex bipartition $(R, C)$, a \emph{symmetric pair} is a pair of
directed edges of the form
\[
  \bigl\{(r_{i}, c_{j}),\; (c_{i}, r_{j})\bigr\}.
\]
A directed cycle $C \in \cC_{2\ell}$ contains a symmetric pair if both
arcs of it belong to $C$.
\end{definition}
\end{tcolorbox}

Recall that we use $\calI=\calI(C)$ and $\calJ=\calJ(C)$ for
the row and column indices of cycle $C \in \calC_{2\ell}$, respectively,
and $Z = |\calI \cap \calJ|$.
\begin{tcolorbox}
\begin{lemma}\label{lem:overlap}
Let $\ell\ge2$ and $C$ be a uniformly random cycle in $\cC_{2\ell}$ and set $Z = Z(C)$. Then
\[
  \E{\left(\frac{n}{n-1}\right)^{Z}}
  = 
  1 + O\!\left(\frac{\ell^{2}}{n^{2}}\right).
\]
Let $S(C)$ be the number of symmetric
pairs contained in $C$. Then
\[
  \E{S(C)}
  = 
  \frac{\ell(\ell - 2)}{n(n - 1)}
  = 
  O\!\left(\frac{\ell^{2}}{n^{2}}\right).
\]
\end{lemma}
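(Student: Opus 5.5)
The plan is to represent a uniformly random cycle $C\in\calC_{2\ell}$ by its index sequences. By \eqref{cycleCount}, every cycle corresponds to exactly $\ell$ pairs of injective sequences $(i_1,\dots,i_\ell)$, $(j_1,\dots,j_\ell)$ in $[n]$, one for each rotation of the starting point. So a uniform cycle can be generated by drawing two independent, uniformly random injective sequences (ordered samples without replacement) and reading off the cycle $r_{i_1}\to c_{j_1}\to r_{i_2}\to\cdots\to c_{j_\ell}\to r_{i_1}$. Two consequences follow. First, $\calI(C)$ and $\calJ(C)$ are independent uniform $\ell$-subsets of $[n]$, so $Z$ is hypergeometric with parameters $(n,\ell,\ell)$. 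Second, for any fixed distinct positions, the corresponding entries of each sequence form a uniform ordered tuple of distinct indices, and the row and column sequences are independent.

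For the first claim I expand $\bigl(\tfrac{n}{n-1}\bigr)^Z=(1+t)^Z$ with $t=1/(n-1)$, using binomial moments:
\[
\E{(1+t)^Z}=\sum_{k\ge0}t^k\,\E{\binom{Z}{k}},\qquad \E{\binom{Z}{k}}=\binom{\ell}{k}\frac{(\ell)_k}{(n)_k}.
\]
The second formula holds because $\binom{Z}{k}$ counts the $k$-subsets of $\calI$ that lie inside $\calJ$, and each fixed $k$-subset of $\calI$ lies in $\calJ$ with probability $(\ell)_k/(n)_k$. Since $(\ell)_k/(n)_k\le(\ell/n)^k$, the $k$-th term is at most $\frac{1}{k!}\bigl(\tfrac{\ell^2}{n(n-1)}\bigr)^k$. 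Writing $x=\ell^2/(n(n-1))$, which is bounded because $\ell\le n$, the sum minus $1$ is at most $e^{x}-1\le x e^{x}=O(\ell^2/n^2)$. This gives $1+O(\ell^2/n^2)$.

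For the second claim I use linearity. Every symmetric pair in $C$ consists of one row-arc $(r_{i_a},c_{j_a})$ and one column-arc $(c_{j_b},r_{i_{b+1}})$. Hence
\[
S(C)=\sum_{a,b\in[\ell]}\I{j_b=i_a,\ i_{b+1}=j_a},
\]
which gives $\ell^2$ position pairs $(a,b)$. Two of them contribute nothing, namely the column-arcs adjacent to the row-arc:
\begin{itemize}
\item For $b=a$ the condition forces $j_a=i_a$ and $i_{a+1}=j_a$, hence $i_{a+1}=i_a$. This is impossible since $\ell\ge2$ makes the row indices distinct.
\item For $b=a-1$ the condition forces $i_a=j_a$ and $j_{a-1}=i_a$, hence $j_{a-1}=j_a$. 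This is likewise impossible.
\end{itemize}
For each of the remaining $\ell(\ell-2)$ pairs, the positions $a\ne b+1$ in the row sequence and $a\ne b$ in the column sequence are distinct. Thus $(i_a,i_{b+1})$ is a uniform ordered pair of distinct indices, independent of the uniform ordered distinct pair $(j_a,j_b)$. Conditioning on $(i_a,i_{b+1})=(x,y)$, the event requires $(j_a,j_b)=(y,x)$, which has probability $1/(n(n-1))$. Summing over the valid pairs gives $\E{S(C)}=\ell(\ell-2)/(n(n-1))$, and this is $O(\ell^2/n^2)$.

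I expect the main obstacle to be justifying cleanly that the rotation-quotiented cycle model yields independent uniform injective sequences, so that the pairwise uniformity holds. The other delicate point is identifying exactly which of the $\ell^2$ position pairs are impossible. A wrong count there, for instance $\ell(\ell-1)$, would break the exact constant in $\E{S(C)}$.
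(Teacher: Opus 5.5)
Your proof is correct. For the second claim it is the paper's argument: linearity of expectation over the same $\ell(\ell-2)$ admissible position pairs, with the two column-arcs adjacent to the row-arc excluded for the same reason. The paper conditions on $Z$, gets $Z(Z-1)/(\ell(\ell-1))^2$ per pair, and then uses $\E{Z(Z-1)}=(\ell)_2^2/(n)_2$. You instead use directly that $(i_a,i_{b+1})$ and $(j_a,j_b)$ are independent uniform ordered pairs of distinct indices, which gives $1/(n(n-1))$ per pair in one step.

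Your explicit $\ell$-to-one rotation model also makes precise the paper's assertion that the index sequences are uniform orderings. Under the canonical labeling $i_1=\min\calI$ used to define $\calC_{2\ell}(\calI,\calJ)$, that assertion is true only after randomizing the starting point.

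For the first claim your route differs. The paper bounds the hypergeometric moment generating function by that of $\mathrm{Bin}(\ell,\ell/n)$, a comparison it imports from the literature, to get $(1+\ell/(n(n-1)))^{\ell}\le e^{\ell^2/(n(n-1))}$. You expand $(1+t)^Z$ in binomial moments, compute $\E{\binom{Z}{k}}=\binom{\ell}{k}(\ell)_k/(n)_k$ exactly, and bound term by term. You arrive at the same bound $e^{x}$ with $x=\ell^2/(n(n-1))$, but your argument is fully self-contained, and the matching lower bound $\E{(1+t)^Z}\ge 1$ is just the $k=0$ term.
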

\end{tcolorbox}

\begin{proof}
Under the uniform measure on $\cC_{2\ell}$, 
the sets $\calI(C)$ and $\calJ(C)$ are
independent uniformly random $\ell$-subsets of $[n]$, so
$Z = |\calI(C) \cap \calJ(C)|$ follows the hypergeometric distribution with parameters $(n, \ell, \ell)$. 
Then
\[
  \E{\left(\frac{n}{n-1}\right)^{Z}}
  =
  \E{ \bR{1 - \frac{1}{n}}^{-Z}}
  \le
  \left(1 + \frac{\ell}{n(n-1)}\right)^{\!\ell}\le e^{\frac{\ell^2}{n(n-1)}}=1+ O\!\left(\frac{\ell^{2}}{n^{2}}\right),
\]
where the first inequality is by bounding the moment generating function of the hypergeometric variable $Z$ by that of a binomial variable
$\mathrm{Bin}(\ell, \ell/n)$, 
see \cite[Proof of Theorem 2.10]{janson2011random}.
Noting $Z\ge0$
implies
$\E{ (n/(n-1))^{Z}}\ge1$.




For the second part,
let
$C=(r_{i_1}, c_{j_1}, r_{i_2}, c_{j_2},\ldots, r_{i_\ell}, c_{j_\ell}) \in \calC_{2\ell}$.
A symmetric pair forms if event
\[
  \calE_{s,t}\defeq
  \{\,  i_s = j_t, \mbox{ and } i_{t+1} = j_s \,\}
\]
holds for some $s, t$. Thus, $S(C)=\sum_{s,t\in[\ell]}\I{\calE_{s,t}}$,
where the number of admissible pairs $(s,t)$ is
$\ell(\ell - 2)$,
as
there are $\ell$ choices of the arc $r_{i_s}\to c_{j_s}$,
and given $s$, there are $\ell-2$ ways to reach $t$, as
$t \in [\ell] \setminus \{ s-1, s \}$.
See Figure \ref{fig:sym-pair} for an illustration.

\begin{figure}[htbp]
\centering
\begin{tikzpicture}[
  x=1.6cm,y=1cm,
  rv/.style={rectangle,draw,minimum size=7.2mm,inner sep=1pt,font=\small},
  cv/.style={circle,draw,minimum size=7.8mm,inner sep=0.5pt,font=\small},
  arr/.style={-{Stealth[length=2.2mm]},semithick,black!45},
  hot/.style={-{Stealth[length=2.6mm]},line width=1.5pt},
  rvb/.style={rv,line width=1.4pt},
  cvb/.style={cv,line width=1.4pt},
  bad/.style={-{Stealth[length=2.2mm]},semithick,densely dashed,black!45}
]
\foreach \k/\lab/\sty in {0/{r_{i_1}}/rv,1/{c_{j_1}}/cv,2/{r_{i_2}}/rvb,
                          3/{c_{j_2}}/cvb,4/{r_{i_3}}/rv,5/{c_{j_3}}/cv,6/{r_{i_4}}/rv,
                          7/{c_{j_4}}/cvb,8/{r_{i_5}}/rvb,9/{c_{j_5}}/cv}
  \node[\sty] (v\k) at (\k,0) {$\lab$};
\foreach \a/\b in {0/1,4/5,5/6,6/7,8/9} \draw[arr] (v\a) -- (v\b);
\draw[bad] (v1) -- (v2);
\draw[bad] (v3) -- (v4);
\draw[hot] (v2) -- node[below=10pt,font=\scriptsize]{$s=2$} (v3);
\draw[hot] (v7) -- node[below=10pt,font=\scriptsize]{$t=4$} (v8);
\foreach \k/\v in {2/x,3/y,7/x,8/y} \node[font=\small] at (\k,0.8) {$\v$};
\end{tikzpicture}
\caption{A cycle $C\in\mathcal C_{2\ell}$ with $\ell=5$, $s=2$, and $t=4$.
Arcs $r_{i_s}\to c_{j_s}$ and $c_{j_t}\to r_{i_{t+1}}$ form a symmetric pair if both share the head index $x$ and tail index $y$.
Note that $\calE_{s,t}$ cannot happen if $t\in\{s-1, s\}$,
so there are $\ell-2$ ways to choose $t$ for each given $s$.
}
\label{fig:sym-pair}
\end{figure}

Conditioning on the row and column indices $\calI(C)$ and $\calJ(C)$,
the sequences $(i_1,\ldots,i_\ell)$ and $(j_1,\ldots,j_\ell)$ are 
independent uniformly random orderings of $\calI$ and $\calJ$, respectively.
Fix an admissible pair $(s,t)$.
There are $\ell^2(\ell-1)^2$
ways to choose
the four entries $i_s, i_{t+1}, j_t, j_s$.
The event $\calE_{s,t}$ holds if
the two arcs $r_{i_s}\to c_{j_s}$ and $c_{j_t}\to r_{i_{t+1}}$
have a common tail index $y=i_s=j_t$ and a common head index $x=j_s=i_{t+1}$.
Moreover, there are $Z(Z-1)$ ways to choose distinct $x,  y\in\calI\cap\calJ$ 
as $i_s\ne i_{t+1}$.
Hence
\[
  \P{\calE_{s,t} \mid Z}
  =
  \frac{Z(Z - 1)}{\ell^{2}(\ell - 1)^{2}}\,.
\]
Summing over the $\ell(\ell - 2)$ admissible position pairs, we have
\[
  \E{S(C) }
  = \E{ \E{S(C) \mid Z} }
  = 
  \frac{(\ell - 2)}{\ell\,(\ell - 1)^{2}} \E{ Z(Z - 1) },
\]
where $Z$ follows the hypergeometric distribution with parameters $(n, \ell, \ell)$. Since
$
  \E{Z(Z - 1)}
  = 
  (\ell)_{2}^2/ (n)_{2},
$
 the proof is complete.
\end{proof}

\begin{tcolorbox}
\begin{lemma}\label{lem:cycle_probability}
Let $n\ge2$.
For a fixed cycle $C\in\cC_{2\ell}$ with overlap size $Z=Z(C)$ and symmetric-pair count $S=S(C)$, we have
\[
\P{C\subset\calD}
=
\frac{1}{n^{2\ell}}\!\bR{\frac{n}{n-1}}^{Z} \I{ S=0}.
\]
\end{lemma}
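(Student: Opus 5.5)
We rewrite all $2\ell$ events defining $\{C\subset\calD\}$ in terms of the single matrix $\bm A$, using $\bm B=-\bm A^{\mathsf T}$. Write $C=(r_{i_1},c_{j_1},\ldots,r_{i_\ell},c_{j_\ell})$. The row arc $r_{i_k}\to c_{j_k}$ means $b_{i_k,j_k}=\max_m b_{i_k,m}$, i.e.\ $a_{j_k,i_k}=\min_m a_{m,i_k}$. So it is a \emph{minimum} event in column $i_k$ of $\bm A$, located at row $j_k$. The column arc $c_{j_k}\to r_{i_{k+1}}$ means $a_{i_{k+1},j_k}=\max_m a_{m,j_k}$, a \emph{maximum} event in column $j_k$ at row $i_{k+1}$. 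Distinct columns of $\bm A$ are independent. Each column $u\in[n]$ therefore carries either no event, exactly one event (probability $1/n$), or, when $u\in\calI\cap\calJ$, both a min event and a max event.

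First, I will handle the case $S\ge1$. A symmetric pair $(r_i,c_j),(c_i,r_j)$ in $C$ gives, in column $i$, both "min at row $j$" and "max at row $j$". These cannot hold simultaneously for $n\ge2$ without ties. Hence $\P{C\subset\calD}=0$, which is consistent with the factor $\I{S=0}$.

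Next, assume $S=0$. By independence across columns,
\[
\P{C\subset\calD}=n^{-(2\ell-2Z)}\prod_{u\in\calI\cap\calJ}p_u ,
\]
where $p_u$ is the probability that column $u$ has its minimum at some row $x$ and its maximum at some row $y$. Here $x=j_k$ for the $k$ with $i_k=u$, and $y=i_{k'+1}$ for the $k'$ with $j_{k'}=u$. If $x=y$ then $u=i_k$, $j_k=x$, $u=j_{k'}$ and $i_{k'+1}=x$. This is exactly a symmetric pair $\{(r_u,c_x),(c_u,r_x)\}$, which is excluded when $S=0$. So $x\ne y$. For a column of $n$ i.i.d.\ continuous entries, the pair (argmin, argmax) is uniform over the $n(n-1)$ ordered pairs of distinct rows, giving $p_u=1/(n(n-1))$.

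Finally, combining the two cases,
\[
\P{C\subset\calD}=n^{-2\ell+2Z}\,\bigl(n(n-1)\bigr)^{-Z}=n^{-2\ell}\bigl(n/(n-1)\bigr)^{Z}.
\]
The main point needing care is the bookkeeping: each column index is hit by at most one min event (since the $i_k$ are distinct) and at most one max event (since the $j_k$ are distinct), and the coincidence $x=y$ is equivalent to a symmetric pair. Once that correspondence is checked, the rest is routine.
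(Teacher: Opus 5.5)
Your proof is correct and follows the paper's route. Like the paper, you rewrite each arc as a min or max event in a single column of $\bm A$ and split columns into $2(\ell-Z)$ singly-constrained ones (probability $1/n$) and $Z$ doubly-constrained ones (probability $1/(n(n-1))$), using independence across columns. Your explicit check fills in a step the paper only asserts: a min and a max falling in the same row of a shared column is exactly a symmetric pair, so the probability is $0$ when $S\ge1$.
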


\end{tcolorbox}

\begin{proof}
For an anti-symmetric game,
an arc $r_{i} \to c_{j}$ appears in $\calD$ if $a_{ji}$ is the minimum of
column $i$, 
and an arc $c_{i} \to r_{j}$ appears in $\calD$ if $a_{ji}$ is the maximum
of column $i$. 
Hence,
\begin{itemize}
  \item each column index in $\calI \setminus \calJ$ corresponds to one minimum
        constraint;
  \item each column index in $\calJ \setminus \calI$ corresponds to one maximum
        constraint;
  \item each column index in $\calI \cap \calJ$ corresponds to one minimum and one
        maximum constraint on different rows.
\end{itemize}

\begin{figure}[htbp]
\centering
\tikzset{
  brmax/.style={draw,line width=0.8pt,fill=white},
  brmin/.style={fill=black},
  brv/.style={circle,draw,line width=0.8pt,inner sep=0pt,minimum size=6.4mm},
  brvoff/.style={circle,draw=black!30,text=black!40,line width=0.6pt,inner sep=0pt,minimum size=6.4mm},
  rarr/.style={-{Stealth[length=2.6mm]},line width=0.8pt,shorten >=0.5pt,shorten <=0.5pt},
  carr/.style={-{Stealth[length=2.6mm]},line width=0.8pt,densely dashed,shorten >=0.5pt,shorten <=0.5pt}
}
\begin{subfigure}[b]{0.48\textwidth}
\centering
\begin{tikzpicture}[x=1cm,y=1cm,line width=0.8pt]
\path (0,0.75) (0,-4.75);
\draw[dashed,line width=0.4pt,black!55] (0.80,0) -- (0.80,-4.00);
\draw[dashed,line width=0.4pt,black!55] (0,-0.80) -- (4.00,-0.80);
\draw[dashed,line width=0.4pt,black!55] (1.60,0) -- (1.60,-4.00);
\draw[dashed,line width=0.4pt,black!55] (0,-1.60) -- (4.00,-1.60);
\draw[dashed,line width=0.4pt,black!55] (2.40,0) -- (2.40,-4.00);
\draw[dashed,line width=0.4pt,black!55] (0,-2.40) -- (4.00,-2.40);
\draw[dashed,line width=0.4pt,black!55] (3.20,0) -- (3.20,-4.00);
\draw[dashed,line width=0.4pt,black!55] (0,-3.20) -- (4.00,-3.20);
\draw (0,0) rectangle (4.00,-4.00);
\node[font=\scriptsize] at (-0.3,-0.40) {$1$}; \node[font=\scriptsize] at (0.40,0.3) {$1$};
\node[font=\scriptsize] at (-0.3,-1.20) {$2$}; \node[font=\scriptsize] at (1.20,0.3) {$2$};
\node[font=\scriptsize] at (-0.3,-2.00) {$3$}; \node[font=\scriptsize] at (2.00,0.3) {$3$};
\node[font=\scriptsize] at (-0.3,-2.80) {$4$}; \node[font=\scriptsize] at (2.80,0.3) {$4$};
\node[font=\scriptsize] at (-0.3,-3.60) {$5$}; \node[font=\scriptsize] at (3.60,0.3) {$5$};
\fill[brmin] (0.40,-2.00) circle (3.5pt);
\fill[brmin] (2.80,-0.40) circle (3.5pt);
\draw[brmax] (0.40,-0.40) circle (3.5pt);
\draw[brmax] (2.00,-2.80) circle (3.5pt);
\end{tikzpicture}
\end{subfigure}
\begin{subfigure}[b]{0.48\textwidth}
\centering
\begin{tikzpicture}[x=1cm,y=1cm]
\path (0,0.75) (0,-4.75);
\node[brv] (r1) at (0,-0.40) {$r_{1}$};
\node[brv] (c1) at (3.6,-0.40) {$c_{1}$};
\node[brv] (r2) at (0,-1.20) {$r_{2}$};
\node[brv] (c2) at (3.6,-1.20) {$c_{2}$};
\node[brv] (r3) at (0,-2.00) {$r_{3}$};
\node[brv] (c3) at (3.6,-2.00) {$c_{3}$};
\node[brv] (r4) at (0,-2.80) {$r_{4}$};
\node[brv] (c4) at (3.6,-2.80) {$c_{4}$};
\node[brv] (r5) at (0,-3.60) {$r_{5}$};
\node[brv] (c5) at (3.6,-3.60) {$c_{5}$};
\draw[rarr] (r1) -- (c3);
\draw[carr] (c3) -- (r4);
\draw[rarr] (r4) -- (c1);
\draw[carr] (c1) -- (r1);
\end{tikzpicture}
\end{subfigure}
\caption{
The payoff matrix $\bm A$ of an antisymmetric game with $n=5$. Since $\bm B=-\bm A^{\mathsf T}$, the arc $r_i\to c_j$ is present if and only if $a_{j,i}$ is the minimum of column $i$, and the arc $c_j\to r_i$ is present if and only if  $a_{i,j}$ is the maximum of column $j$. For cycle $C=(r_1,c_3,r_4,c_1)$ with $\ell=2$, $\calI=\{1,4\}$, $\calJ=\{1,3\}$, 
the shared index is $1$, so $Z(C)=1$.
Column $1$ corresponds to a maximum and a minimum condition with probability $1/(n(n-1))$, 
and columns $3$ and $4$ correspond to one condition each.
Hollow circles are maxima and solid dots are minima of the corresponding columns.}
\label{fig:anti-cycle}
\end{figure}

See Figure \ref{fig:anti-cycle} for an example.
There are $2(\ell - Z)$ singly-constrained columns and $Z$ doubly-constrained columns.
A singly-constrained column contributes probability $1/n$. 
For a doubly-constrained column, 
when $S=0$,
the maximum and minimum occur in
different rows, so it contributes probability $1/(n(n - 1))$. 
Therefore,
\[
  \P{C \subset \calD}
  = 
  \left(\frac{1}{n}\right)^{\!2(\ell - Z)}
  \!\left(\frac{1}{n(n - 1)}\right)^{\!Z}
  \I{S=0}
  = 
  \frac{1}{n^{2\ell}}\!\left(\frac{n}{n - 1}\right)^{\!Z}
  \I{S=0}.\qedhere
\]
\end{proof}

\begin{tcolorbox}
\begin{theorem}\label{thm:anti_cycle_expectation}
Let $X_{\ell}$ denote the number of directed $2\ell$-cycles in the
best-response digraph of an $n \times n$ random antisymmetric bimatrix game.
If $2 \le \ell \le n$, 
\begin{align}
\label{anti-bound}
    \E{X_{\ell}}
    \le 4 \mu_{\ell}
\end{align}
Moreover, if $\ell=o(n)$,
\[
  \E{X_{\ell}}
  = 
  \left(1 + O\!\left( \ell^2/n^2 \right)\right) \mu_{\ell}\,.
\]
For any fixed $\ell$,
\[
X_\ell\overset{d}{\to} 
\Po\lp1/\ell\rp.
\]
\end{theorem}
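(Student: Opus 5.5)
The plan is to express $\E{X_\ell}$ as an average over a uniformly random cycle and then feed in Lemma~\ref{lem:cycle_probability} and Lemma~\ref{lem:overlap}. Summing the formula of Lemma~\ref{lem:cycle_probability} over $\calC_{2\ell}$ and using $|\calC_{2\ell}|n^{-2\ell}=\mu_\ell$ from \eqref{cycleCount} and \eqref{muDef} gives
\[
\E{X_\ell}=\mu_\ell\,\E{\bR{\frac{n}{n-1}}^{Z}\I{S=0}},
\]
where $C$ is uniform on $\calC_{2\ell}$.

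For the upper bound \eqref{anti-bound}, drop the indicator and bound $(n/(n-1))^Z$. The first route is the hypergeometric estimate from Lemma~\ref{lem:overlap}. It gives $\E{(n/(n-1))^Z}\le e^{\ell^2/(n(n-1))}$, which is at most $e^{n/(n-1)}$ for all $\ell\le n$; this is slightly too large when $n$ is small. A cleaner route uses $Z\le\ell\le n$, so that $(n/(n-1))^Z\le(1-1/n)^{-n}\le 4$ for $n\ge2$. This gives $\E{X_\ell}\le 4\mu_\ell$ directly, and is the version I would use.

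For $\ell=o(n)$, the upper bound $\E{X_\ell}\le\mu_\ell(1+O(\ell^2/n^2))$ follows from the first part of Lemma~\ref{lem:overlap}. For the lower bound, use $(n/(n-1))^Z\ge1$ and $\I{S=0}\ge 1-S$, which gives
\[
\E{(n/(n-1))^Z\I{S=0}}\ge 1-\E{S}=1-O(\ell^2/n^2)
\]
by the second part of Lemma~\ref{lem:overlap}. The two bounds together give the stated asymptotics.

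For the Poisson limit with fixed $\ell$, I would use factorial moments, as in Theorem~\ref{2lcyclesInd}. By Lemma~\ref{vertex-disjoint lemma}, only $k$-tuples of vertex-disjoint cycles contribute. The main step is to compute $\P{C_1\cup\dots\cup C_k\subset\calD}$ for vertex-disjoint cycles. Viewing the union as a set of $2k\ell$ arcs, the argument in the proof of Lemma~\ref{lem:cycle_probability} applies verbatim. Each column of $\bm A$ receives at most one maximum constraint (from arcs $c_j\to r_i$) and at most one minimum constraint (from arcs $r_i\to c_j$), because the vertices are distinct. The probability is therefore $n^{-2k\ell}(n/(n-1))^{Z'}\I{S'=0}$, where $Z'$ and $S'$ are the overlap and symmetric-pair counts of the union, and it lies between $0$ and $n^{-2k\ell}(n/(n-1))^{2k\ell}=n^{-2k\ell}(1+O(1/n))$.

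The number of ordered disjoint $k$-tuples is $\prod_{i<k\ell}(n-i)^2/\ell^k=n^{2k\ell}\ell^{-k}(1+O(1/n))$. Moreover, the tuples with $S'\ge1$ form a vanishing fraction: forcing a symmetric pair imposes two index coincidences among at most $O(1)$ positions, so there are $O(n^{2k\ell-2})$ such tuples. Hence $\E{(X_\ell)_k}\to\ell^{-k}$, and the method of moments gives $X_\ell\overset{d}{\to}\Po(1/\ell)$.

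I expect the main obstacle to be justifying the joint probability formula for unions of disjoint cycles, namely the independence across columns and the rule that doubly constrained columns contribute $1/(n(n-1))$. The remaining steps are routine counting.
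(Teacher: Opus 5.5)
Your proposal is correct and follows the paper's proof essentially step for step. You use the reduction $\E{X_\ell}=\mu_\ell\,\E{(n/(n-1))^{Z}\I{S=0}}$, the bound $Z\le n$ together with $(n/(n-1))^{n}\le 4$, Lemma~\ref{lem:overlap} for the regime $\ell=o(n)$, and factorial moments over vertex-disjoint $k$-tuples with the column-by-column joint probability $n^{-2k\ell}(n/(n-1))^{Z'}\I{S'=0}$. The only differences are minor and slightly in your favour. Your lower bound $(n/(n-1))^{Z}\I{S=0}\ge 1-S$ avoids the paper's subtraction argument. Your explicit count of $O(n^{2k\ell-2})$ disjoint tuples with $S'\ge1$ spells out a step the paper leaves implicit, namely passing from the pointwise lower bound $n^{-2k\ell}\I{Z=0}$ to $\E{(X_\ell)_k}\to\ell^{-k}$.
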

\end{tcolorbox}

\begin{proof}[Proof of Theorem \ref{thm:anti_cycle_expectation}]
Let $C$ be uniform on $\cC_{2\ell}$ and write $Z = Z(C)$ and $S = S(C)$.
Lemma~\ref{lem:cycle_probability} implies
\[
  \E{X_{\ell}}
  = 
  \Abs{ \calC_{2\ell} } \cdot
  \E{
    \frac{1}{n^{2\ell}}\!\left(\frac{n}{n - 1}\right)^{\!Z}\!
    \I{S = 0}
  }
  = \mu_{\ell} \cdot \E{\left(\frac{n}{n-1}\right)^{\!Z} \I{S = 0}},
\]
recalling $\mu_{\ell} = \Abs{ \calC_{2\ell} } / n^{2\ell}$.
Using Lemma~\ref{lem:overlap}, we have
\[
\E{\left(\frac{n}{n-1}\right)^{\!Z} \I{S = 0}}
\le \E{\left(\frac{n}{n-1}\right)^{\!Z} }
=1+\O{\ell^2/n^2},
\]
and the upper bound in \eqref{anti-bound} follows by $Z\le n$
and $(1+1/(n-1))^n\le 4$.

For the lower bound,
since $Z \le \ell $, we have
\[
  \left(\frac{n}{n-1}\right)^{\!Z}
  \le
  \left(\frac{n}{n-1}\right)^{\!\ell}
  \le
  \exp\!\left(\frac{\ell}{n - 1}\right)
  = 
  O(1),
\]
and then
\begin{align*}
\E{\left(\frac{n}{n-1}\right)^{\!Z} \I{S = 0}}
&=
  \E{\left(\frac{n}{n-1}\right)^{\!Z}}
-
  \E{\left(\frac{n}{n-1}\right)^{\!Z} \I{S \ge 1}} \\
&\ge
1+\O{\ell^2/n^2}
- \O{ \E{S} }
= 1+\O{\ell^2/n^2},
\end{align*}
where we use Lemma~\ref{lem:overlap}
and $\E{\I{S\ge1}}=\P{S\ge1}\le\E{S}$.
Combining the two bounds yields the asymptotics of the expected cycle count.

For the limiting distribution,
note that for vertex-disjoint cycles $C_1, \ldots, C_k \in\calC_{2\ell}$, 
a similar argument to that in Lemma \ref{lem:cycle_probability} gives
\begin{align}
\label{anti-cycles}
\P{ \bigcap_{i \in [k]} \{ C_i \subset\calD \}}
=
\frac{1}{n^{2k\ell}}\!\bR{\frac{n}{n-1}}^{Z } 
\I{ S=0 },
\end{align}
where
the common indices (between $\{ \calI(C_i): i \in [k]\}$
and $\{ \calJ(C_i): i \in [k]\}$)
and symmetric pair counts in 
$\cup_{i \in [k]} C_i$ are
\[
Z
= \abs{  \bR{ \bigcup_{i\in[k]} \calI(C_i) } \cap 
\bR{ \bigcup_{i\in[k]} \calJ(C_i) } }
\quad\mbox{and}\quad
S= S\bR{ \bigcup_{i \in [k]} C_i }.
\]
Therefore, we have
\[
\frac{1}{n^{2k\ell}}
\I{Z=0}
\le \P{ \bigcap_{i \in [k]} \{ C_i \subset\calD \}}
\le
\frac{1}{n^{2k\ell}}\!\bR{\frac{n}{n-1}}^{ k\ell },
\]
where the lower bound follows from \eqref{anti-cycles} by setting $Z=0$, also noting that $Z=0$ implies $S=0$. The upper bound follows by noting that $Z \le k\ell$.
Similarly to the independent case, the number of such ordered tuples is $(n)_{k\ell}^2/\ell^k$, 
hence, the $k$-th factorial moment of $X_\ell$ is
\begin{align*}
\E{ (X_\ell)_k }
= \frac{(n)_{k\ell}^2}{\ell^k }
\P{ \bigcap_{i \in [k]} \{ C_i \subset\calD \}}
\to \lp1/\ell\rp^k.
\end{align*}
This completes the proof by the method of moments.
\end{proof}

We next obtain the length and number of cycles in the best-response digraph in antisymmetric games.
\begin{tcolorbox}
    \begin{theorem}
Let $\calD$ be the  
best-response digraph
of an
$n \times n$
random antisymmetric bimatrix game
and let the length of the longest cycle be $2\ell_{\max}$.
Let $T_n$ denote the total number of directed cycles in $\calD$.
Then
\[
\ell_{\max}=\Theta_\mathbb{P}(\sqrt{n}),
\quad\mbox{ and }\quad
\E{T_n}=\lp\frac{1}{2}+o(1)\rp \log n.
\]
\end{theorem}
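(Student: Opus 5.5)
The plan is to transfer the arguments of Theorem~\ref{thm:longest} and Theorem~\ref{number of cycles independent case} to the antisymmetric model. For this I use the first-moment bound \eqref{anti-bound} and the asymptotics $\E{X_\ell}=(1+O(\ell^2/n^2))\mu_\ell$ from Theorem~\ref{thm:anti_cycle_expectation}.

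\emph{Upper bound on $\ell_{\max}$.} Since $\E{X_\ell}\le 4\mu_\ell\le \frac4\ell e^{-(\ell-1)^2/n}$ by \eqref{mubound}, I will run the same dyadic summation as in the proof of Theorem~\ref{thm:longest}. This gives $\P{\ell_{\max}\ge C\sqrt n}\le 4e^{-C^2}/(1-e^{-C^2})$, and choosing $C=\sqrt{\log(1+4/\ep)}$ makes this at most $\ep$.

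\emph{Lower bound on $\ell_{\max}$.} I expect this to be the main step. I would avoid the dependence created by overlapping indices by restricting to
\[
\calT=\{C\in\calC_{2\ell}^0:\ \lceil c\sqrt n\rceil\le\ell\le\lceil\sqrt n\rceil\},
\]
that is, cycles with $Z(C)=0$. For $C\in\calT$ we have $S(C)=0$, and by Lemma~\ref{lem:cycle_probability} $\P{C\subset\calD}=n^{-2\ell}$.

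For the second moment, take two vertex-disjoint cycles $C,C'\in\calT$. Their union may have positive overlap, since $\calI(C)$ can meet $\calJ(C')$. The analogue of \eqref{anti-cycles} gives
\[
\P{C,C'\subset\calD}\le n^{-2(\ell+\ell')}\bR{\frac{n}{n-1}}^{Z(C\cup C')}.
\]
Here $Z(C\cup C')\le \ell+\ell'\le 2\lceil\sqrt n\rceil$, so $(n/(n-1))^{Z}\le e^{3/\sqrt n}=1+o(1)$. Hence $\Delta\le(1+o(1))\mu^2$, and Lemma~\ref{2rd-moment} applies with $\alpha=0$ and $\beta=o(1)$. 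Then $\P{\ell_{\max}<c\sqrt n}\le 1/\mu+o(1)$.

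It remains to bound $\mu$ from below. We have $|\calC^0_{2\ell}|=(n)_{2\ell}/\ell$, so
\[
\mu=\sum_\ell \frac{(n)_{2\ell}}{\ell\, n^{2\ell}}\sim\sum_\ell\frac1\ell e^{-2\ell^2/n}\ge(1+o(1))e^{-2}\log(1/c),
\]
exactly as in the symmetric case. Taking $c=e^{-2e^2/\ep}$ then finishes the lower bound.

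\emph{Total cycle count.} I will split $\E{T_n}=\sum_\ell\E{X_\ell}$ over the ranges $I_1,I_2,I_3$ used in Theorem~\ref{number of cycles independent case}; the $\ell=1$ term is $1-1/n=O(1)$ by Theorem~\ref{anti-cycle}.
\begin{itemize}
\item On $I_1$ we have $\ell=o(n)$, so $\E{X_\ell}=(1+O(\ell^2/n^2))\mu_\ell=(1+o(1))\mu_\ell$ uniformly. The sum is therefore $(1+o(1))\frac12\log n$, as before.
\item On $I_2$ and $I_3$, the crude bound $\E{X_\ell}\le4\mu_\ell$ shows these ranges contribute $O(\log\log n)$ and $o(1)$ respectively, by the same estimates as in the independent case.
\end{itemize}
Altogether this gives $\E{T_n}=(\frac12+o(1))\log n$.
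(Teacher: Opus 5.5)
Your proposal is correct and follows essentially the same route as the paper: the dyadic first-moment bound via \eqref{anti-bound}, the second moment method with $\alpha=0$ and $\beta=o(1)$ using the joint-occurrence formula \eqref{anti-cycles}, and the $I_1,I_2,I_3$ split for $\E{T_n}$. The only difference is that you restrict the second-moment family to cycles with $Z(C)=0$, as the paper does in the symmetric case; this makes $\mu$ exact and removes the need for the $\E{S}$ bound of Lemma~\ref{lem:overlap}, at the cost of a factor $e^{-2}$ instead of the paper's $e^{-1}$ in the lower bound on $\mu$, which only changes the constant $c$.
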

\end{tcolorbox}

\begin{proof}
Let $0<\ep<1$ be fixed,
and define
\[
c=e^{-2e/\ep}
\quad\mbox{and}\quad
C= \sqrt{\log ( 1+e^2/\ep )}.
\]
The first moment calculation is similar to the proof of Theorem \ref{thm:longest}.
By Theorem \ref{thm:anti_cycle_expectation},
the expected cycle count 
in the antisymmetric game is upper bounded by $4$ times that
in the independent case.
So the upper bound follows.

Similarly to the proof of Theorem \ref{thm:longest}, let
\[
\calT
=
\bC{
C\in \calC_{2\ell}:
\ceil{ c\sqrt n }
\le\ell\le
\ceil{ \sqrt n }
}.
\]
By
Lemma \ref{lem:cycle_probability}, we have, for $\ell\le \ceil{\sqrt n}$, that
\begin{align*}
\mu
&=
\sum_{\ell=\ceil{ c\sqrt n }}
^{\ceil{ \sqrt n }}
\sum_{C\in\calC_{2\ell}}
\E{ \frac{1}{n^{2\ell}}\!\bR{\frac{n}{n-1}}^{Z} \I{ S=0} } \\
&\ge 
\sum_{\ell= \ceil{ c\sqrt n }}^{
\ceil{ \sqrt n }}
\mu_\ell \bR{ 1 - \frac{\ell(\ell-2)}{n(n-1)} } 
\ge
\bR{ 1 - \frac{3}{n} } \sum_{\ell= \ceil{ c\sqrt n }}^{
\ceil{ \sqrt n }}
\mu_\ell
\ge
(1+o(1))
e^{-1} \log(1/c),
\end{align*}
where 
we use $\E{ \I{S=0} } = 1-\P{S\ge1}
\ge 1- \E{S}$ in combination with
Lemma \ref{lem:overlap},
and the last summation is similar to that in the independent case.

Using the probability of joint occurrences of cycles in \eqref{anti-cycles},
we have
\begin{align}
\label{anti-2cycles}
\Delta
=
\sum_{
\substack {C_i\ne C_j\in\calT \\
C_i \cap C_j=\emptyset}
}
\E{ Y_{C_i}Y_{C_j} }
&\le 
\sum_{
\substack {C_i\ne C_j\in\calT \\
C_i \cap C_j=\emptyset}
}
\frac{1}{n^{ |C_i| + |C_j| } }\!\bR{\frac{n}{n-1}}^{2 \ceil{\sqrt n }} \notag\\
&\le
\Exp{\frac{5}{\sqrt{n}}}
\bR{ \sum_{ C \in\calT}
\frac{1}{n^{ |C| } } }^2 
\le (1+o(1))\mu^2,
\end{align}
where the bound on the exponent follows by
noting that $|C| \le 2\ceil{\sqrt n}$ for any $C\in \calT$.
By choosing $\alpha=0$ and $\beta=o(1)$
in Lemma \ref{2rd-moment},
we have
\[
\P{ \ell_{\max} < c\sqrt{n}}
\le
\P{ \sum_{C\in\calT}Y_C=0} \le 
(1+o(1))/\mu + o(1)
\le \eps.
\]
This completes the proof of the first part.

To compute the expected total cycle count,
as in the proof for the independent case, we split the set $[n]\cap\mathbb{Z}$ into $I_1=[1,\sqrt{n/\log n}]$, $I_2=(\sqrt{n/\log n},\sqrt{n\log n}]$, and $I_3=(\sqrt{n\log n},n]$.
For $\ell\in I_3$, using \eqref{anti-bound}, we have
\begin{align*}
\sum_{\ell\in I_3}
\E{X_{\ell}}
\le e^2\sum_{\ell\in I_3} \frac{1}{\ell}
\Exp{ -\frac{(\ell-1)^2}{n} }
\le (1+o(1))\frac{e^2n}{ \sqrt{n\log n} }
e^{ - \log n }
\sim 
\frac{e^2}{ \sqrt{n\log n} }=o(1).
\end{align*}
For $\ell\in I_1\cup I_2$, 
we have
$\E{X_{\ell}}
=
  \left(1 + O\!\left(\ell^{2}/n^{2}\right)\right) \mu_{\ell}\sim \mu_\ell$,
and similarly to the independent case,
the proof follows.
\end{proof}

\section{Random common-interest games}

In this section, we study common-interest games,
where both players share the same random payoff matrix.
We estimate the expected number of best-response paths of length $k$, and the order of the longest directed best-response path.

\subsection{PNE and best-response paths}

Common-interest games are known to have a pure Nash equilibrium, which can be reached by the best-response dynamics.
In particular, Monderer and Shapley \cite{monderer1996potential}, showed that every finite ordinal potential game possesses a pure-strategy equilibrium. 





Next we show that
in random common-interest games, there are no cycles of length larger than $2$, 
and the best-response dynamics always lead to a pure Nash equilibrium.
\begin{tcolorbox}
\begin{lemma}
Let $\calD$ be the  
best-response digraph
of an
$n \times n$
random common-interest game. 
\begin{enumerate}
\item Along any (directed) path in $\calD$, the corresponding payoffs form a strictly increasing sequence until reaching a pure Nash equilibrium. 
\item
For each vertex in $\calD$,
there exists a unique (directed) path in $\calD$ from it
that leads to a unique pure Nash equilibrium.
\item Let
$T_n$ denote the total number of directed cycles in $\calD$.
Then
\begin{align*}
    T_n = N_{\PNE}
    \quad\mbox{ and }\quad
    \E{ N_{\PNE} } 
    = \frac{n^2}{2n-1}
    \sim \dfrac{1}{2}n\,.
\end{align*}
\end{enumerate}
\end{lemma}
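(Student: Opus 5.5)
Since $\bm A=\bm B$, the arc $r_i\to c_j$ is present exactly when $a_{i,j}$ is the maximum of row $i$, and the arc $c_j\to r_i$ is present exactly when $a_{i,j}$ is the maximum of column $j$. The plan is to attach to every vertex of a walk in $\calD$ the payoff of the matrix entry just visited, show that this sequence increases, and then count equilibria directly.

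\textbf{Monotonicity.} Take a walk $r_{i_1}\to c_{j_1}\to r_{i_2}\to c_{j_2}\to\cdots$ and record the entries $a_{i_1,j_1}, a_{i_2,j_1}, a_{i_2,j_2},\ldots$, i.e.\ the profiles $(r_{i_k},c_{j_k})$ and $(r_{i_{k+1}},c_{j_k})$ in turn. Each step changes only one coordinate. The step $c_{j_k}\to r_{i_{k+1}}$ moves to the column maximum, so $a_{i_{k+1},j_k}\ge a_{i_k,j_k}$. Ties have probability zero, so equality holds iff $i_{k+1}=i_k$, i.e.\ iff $(r_{i_k},c_{j_k})$ is a mutual best response, that is, a PNE. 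The row step is handled the same way. Hence the payoff sequence is strictly increasing until the walk hits a $2$-cycle, which is a PNE, and it is constant afterwards.

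\textbf{Paths to a PNE and cycle count.} By part~1 of Lemma~\ref{vertex-disjoint lemma}, every vertex has out-degree one, so it has a unique forward walk. This walk is eventually periodic because $\calD$ is finite. A cycle of length $2\ell\ge4$ would make the payoffs strictly increase around a closed loop, which is impossible. So every cycle is a $2$-cycle, the forward walk terminates in a unique PNE, and $T_n=N_\PNE$. For the walk starting at a row vertex $r_i$, the first recorded entry is $a_{i,\mathrm{BR}(r_i)}$. Monotonicity applies from there, and this still gives a unique path to a unique PNE.

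\textbf{Expectation.} A pair $(r_i,c_j)$ is a PNE iff $a_{i,j}$ is simultaneously the maximum of row $i$ and of column $j$. These two sets together contain $2n-1$ i.i.d.\ continuous entries, and the event is that $a_{i,j}$ is the largest of them, which has probability $1/(2n-1)$ by exchangeability. Summing over the $n^2$ pairs gives $\E{N_\PNE}=n^2/(2n-1)\sim n/2$.

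The main obstacle is bookkeeping rather than mathematics: one must set up the payoff-labelling of walk steps so that the strict inequality uses the no-ties assumption correctly. The row step and the column step have to be aligned as comparisons within a single row or a single column of $\bm A$.
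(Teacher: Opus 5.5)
Your proposal is correct and follows essentially the same route as the paper. You label each arc by the matrix entry it selects and use the no-ties assumption to get strict increase until a $2$-cycle (a PNE) is entered; this rules out cycles of length at least $4$, so $T_n=N_{\PNE}$, and $\E{N_{\PNE}}=n^2/(2n-1)$ follows from exchangeability among the $2n-1$ entries of row $i$ and column $j$. Your version simply spells out the arc-to-entry bookkeeping that the paper's short proof leaves implicit.
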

\end{tcolorbox}
\begin{proof}

A directed $2$-cycle is formed
when simultaneously the maximum in
its row and in its column are reached, hence exactly a pure Nash equilibrium. 
Along any directed
walk that has not yet entered such a $2$-cycle, the payoff at the newly selected cell is
strictly larger than at the preceding cell. Thus
$T_n=N_{\PNE}$. 

In a random payoff matrix, an entry is the largest among the $2n-1$ entries in the union of its
row and column with probability $1/(2n-1)$. Then the expected PNE count follows from the linearity of expectation.
\end{proof}

\subsection{Path count and the longest path length}

In this section, a path in digraph $\calD$ has distinct vertices, and the length is the number of edges.
Note that the maximum possible path length is $2n-1$. 
For integer $j \ge 1$, let
\begin{align}
\label{Hj-def}
    H(j) := jn - \floor{ j^2/4 }.
\end{align}
This is the number of entries in the union of $j$ distinct alternating rows and columns of the $n\times n$ payoff matrix, as there are
$\ceil{ j/2 }$ rows and $\floor{ j/2 }$ columns, or vice versa,
and this yields $\floor{ j^2/4 }$ intersections.

\begin{tcolorbox}
\begin{lemma}
For $1 \le k\le2n-1$,
let 
$v_0, v_1, \ldots, v_k$
be a sequence of distinct $k+1$ vertices 
alternating between the vertex bipartition of $\calD$.
Then
the probability that 
   this fixed sequence of $k+1$ vertices forms a path of length $k$ is
\begin{eqnarray*}
    P_k 
:= \P{v_0 \to v_1 \to \cdots \to v_k }
= \prod_{i\in[k]} \frac{1}{H(i)}
= \frac{1}{n^k k!} \prod_{i\in[k]} \bR{1 - \frac{\floor{ i^2/4 }}{in}}^{-1}.
\end{eqnarray*}
Moreover, for $k=o(n)$,
\begin{align}
\label{Pk-common-int}
    P_k
=
\frac{1}{n^kk!}
\Exp{\frac{k^2}{8n}+\frac{k^3}{96n^2}
+O\lp \frac{k}n+\frac{k^4}{n^3}\rp}\,.
\end{align}
\end{lemma}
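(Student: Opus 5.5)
The plan is to translate the path event into a statement about the relative order of payoff entries, and then use exchangeability of i.i.d.\ continuous entries. In a common-interest game, each arc of the path is a cell condition. If $v_{m-1}=r_i$ and $v_m=c_j$, the arc $r_i\to c_j$ means $a_{i,j}$ is the maximum of row $i$. If $v_{m-1}=c_j$ and $v_m=r_i$, the arc means $a_{i,j}$ is the maximum of column $j$. Let $e_m$ be the cell selected by the $m$-th arc, and let $L_m$ be its line, the row or column of the tail $v_{m-1}$. Consecutive cells $e_{m-1},e_m$ share the line determined by $v_{m-1}$. Since the vertices are distinct, the lines $L_1,\dots,L_k$ are distinct and alternate between rows and columns, and the cells $e_1,\dots,e_k$ are distinct.

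Next I will show the event is equivalent to a nested chain of maxima. Set $U_m=L_1\cup\cdots\cup L_m$, a union of $m$ alternating distinct lines, so $|U_m|=H(m)$ by \eqref{Hj-def}. The claim is
\[
\{v_0\to\cdots\to v_k\}=\bigcap_{m\in[k]}\{a_{e_m}=\max U_m\}.
\]
For "$\supseteq$": $e_m\in L_m\subseteq U_m$, so being the max of $U_m$ implies being the max of $L_m$. For "$\subseteq$", use induction. Since $e_{m-1}\in L_m$, we get $a_{e_m}>a_{e_{m-1}}$, which is the strictly increasing payoffs observation of the previous lemma. By induction $a_{e_{m-1}}=\max U_{m-1}$, hence $a_{e_m}\ge \max U_{m-1}$, and $a_{e_m}$ is also the max of $L_m$. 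So it is the max of $U_m$. One must check $e_m\notin U_{m-1}$, which holds because the head $v_m$ is new, so distinctness of vertices is used here.

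Finally I compute the probability. The entries are i.i.d.\ continuous, so the ranking of the $H(k)$ entries in $U_k$ is a uniform permutation. For nested sets $U_1\subset\cdots\subset U_k$, the events "the maximum of $U_m$ sits at the prescribed cell $e_m$, with $e_m\in U_m\setminus U_{m-1}$" are independent with probabilities $1/|U_m|$. This is the standard record argument: reveal the maxima from the largest set downward. Hence $P_k=\prod_{i\in[k]}H(i)^{-1}$. Factoring $in$ out of each $H(i)$ gives the second expression.

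For \eqref{Pk-common-int}, expand
\[
-\log\bigl(1-x_i\bigr)=x_i+x_i^2/2+O(x_i^3),\qquad x_i=\floor{i^2/4}/(in)=i/(4n)+O(1/(in)).
\]
Summing the terms:
\[
\sum_{i\le k} \frac{i}{4n}=\frac{k^2}{8n}+O\Bigl(\frac kn\Bigr),\qquad \sum_{i\le k}\frac{i^2}{32n^2}=\frac{k^3}{96n^2}+O\Bigl(\frac{k^2}{n^2}\Bigr).
\]
The floor correction contributes $O(\log k/n)$, which is absorbed into $O(k/n)$. The cubic remainder is $O(k^4/n^3)$, and the cross terms are $O(k^2/n^2)\le O(k/n)$. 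The main care point is the equivalence step, specifically that $e_m$ lies outside $U_{m-1}$ and that the chain of maxima is genuinely nested. The rest is routine.
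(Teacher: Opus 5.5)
Your proposal is correct and follows essentially the same route as the paper. The payoff of the $i$-th arc must be the maximum of the union of the first $i$ alternating lines, a set of $H(i)$ entries; these nested-maximum probabilities multiply to $\prod_{i\in[k]}H(i)^{-1}$; and a second-order expansion of $-\log(1-x_i)$ gives the stated asymptotic formula. You also make explicit three steps the paper leaves implicit: both inclusions of the equivalence with the nested chain of maxima, the check that $e_m\notin U_{m-1}$, and the top-down record argument behind the product of the probabilities $1/|U_m|$.
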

\label{path-prob}
\end{tcolorbox}

A similar result also appears in 
\cite[Eq. (7.2)]{collevecchio2024basins},
for ordinal potential games.

\begin{proof}
The distinct $k+1$ vertices form a directed path
if the $k$ successive edges all appear in $\calD$.
Note that for the common interest game, 
the payoff
corresponding to the $i$-th edge must be the largest entry in this union of
$H(i)$ entries,
in view of the definition \eqref{Hj-def},
and every entry has equal probability of being that maximum.
Taking the log of the product factor gives
\begin{align*}
\sum_{i \in [k]} \frac{\floor{ i^2/4 }}{in}
+ \frac12\sum_{i \in [k]} \bR{ \frac{
\floor{ i^2/4 }}{in} }^2
+ \O{ \frac{k^4}{n^3} }
&= 
\sum_{i \in [k]} \frac{i}{4n}
+ \sum_{i \in [k]} \frac{ i^2 }{32 n^2}
+ \O{ \frac{k}{n}+\frac{k^4}{n^3} } \\
&= 
\frac{k^2}{8n}
+ \frac{k(k+1)(2k+1)}{192n^2}
+ \O{ \frac{k}{n}+\frac{k^4}{n^3} }.
\qedhere
\end{align*}
\end{proof}

\begin{tcolorbox}
\begin{theorem}
\label{pathCount}
Let $\calD$ be the  
best-response digraph
of an
$n \times n$
random common-interest bimatrix game.
Let
$Y_k$ be the number of paths of length $k$ in $\calD$.
Then for $k=o(n)$,
\begin{align*}
    \E{ Y_k } 
    =\frac{2n}{k!}\Exp{-\frac{k^2}{8n}-\frac{k^3}{32n^2}+O\lp\frac{k}{n}+\frac{k^4}{n^3}\rp}\,.
\end{align*}
Moreover, if $k=o(n^{2/3})$, then $\E{ Y_k } \sim 2ne^{-k^2/8n}/k!$; and we have $\E{ Y_k } \sim 2n/k!$
if $k=o(\sqrt{n})$.
\end{theorem}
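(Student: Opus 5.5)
The plan is to use linearity of expectation together with Lemma~\ref{path-prob}. A path of length $k$ in $\calD$ is a sequence $v_0\to v_1\to\cdots\to v_k$ of distinct vertices alternating between $R$ and $C$. I will count these sequences by the side of $v_0$. If $v_0\in R$, the vertices $v_0,v_2,\ldots$ are $\ceil{(k+1)/2}$ distinct rows and $v_1,v_3,\ldots$ are $\floor{(k+1)/2}$ distinct columns. This gives $(n)_{\ceil{(k+1)/2}}(n)_{\floor{(k+1)/2}}$ sequences, and the case $v_0\in C$ gives the same number. Since $P_k$ does not depend on which fixed alternating sequence is chosen, we get
\[
\E{Y_k}=2\,(n)_{\ceil{(k+1)/2}}\,(n)_{\floor{(k+1)/2}}\,P_k .
\]
One point to check is that a directed path has a single orientation, so there is no double counting. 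The only exception is $k=1$ with a $2$-cycle, but a path as defined has distinct vertices and is ordered, so the formula is exact.

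Next I expand the prefactor. Write $a=\ceil{(k+1)/2}$ and $b=\floor{(k+1)/2}$, so $a+b=k+1$. Then $(n)_a(n)_b=n^{k+1}\prod_{i<a}(1-i/n)\prod_{i<b}(1-i/n)$. Using $\log(1-x)=-x-x^2/2+O(x^3)$, the logarithm of the product is
\[
-\frac{a^2+b^2}{2n}-\frac{a^3+b^3}{6n^2}+O\bR{\frac{k}{n}+\frac{k^4}{n^3}} .
\]
Since $a,b=k/2+O(1)$, we have $a^2+b^2=k^2/2+O(k)$ and $a^3+b^3=k^3/4+O(k^2)$. The $O(k^2/n^2)$ term is absorbed into $O(k/n)$. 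Hence
\[
(n)_a(n)_b=n^{k+1}\Exp{-\frac{k^2}{4n}-\frac{k^3}{24n^2}+O\bR{\frac{k}{n}+\frac{k^4}{n^3}}}.
\]

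Now combine this with \eqref{Pk-common-int}, which says $P_k=\frac{1}{n^kk!}\Exp{\frac{k^2}{8n}+\frac{k^3}{96n^2}+\cdots}$. The exponents add to $-\frac{k^2}{8n}+k^3\bR{-\frac{1}{24}+\frac{1}{96}}\frac{1}{n^2}=-\frac{k^2}{8n}-\frac{k^3}{32n^2}$, and the prefactor becomes $2n/k!$. This gives the stated formula. The two asymptotic corollaries are immediate. If $k=o(n^{2/3})$, then $k^3/n^2$, $k/n$ and $k^4/n^3=(k^3/n^2)(k/n)$ are all $o(1)$. If $k=o(\sqrt n)$, then in addition $k^2/n=o(1)$.

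The main obstacle is bookkeeping. The error terms in \eqref{Pk-common-int} and in the falling-factorial expansion must be uniform in $k=o(n)$. The $O(k^2/n^2)$ remainders also have to be shown to be dominated by $O(k/n)$, which holds since $k\le n$. Finally, the parity of $k$ must not affect the $k^2$ and $k^3$ coefficients. I would verify the parity issue by computing $a^2+b^2$ and $a^3+b^3$ separately for $k$ even and $k$ odd.
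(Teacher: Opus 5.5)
Your proposal is correct and follows the paper's proof. Both arguments count the $2(n)_{\lceil (k+1)/2\rceil}(n)_{\lfloor (k+1)/2\rfloor}$ alternating candidate sequences, which are the paper's $2(n)_{j+1}(n)_j$ for $k=2j$ and $2(n)_{j+1}^2$ for $k=2j+1$, multiply by $P_k$, and combine the falling-factorial expansion $n^{k+1}\exp\bigl(-\frac{k^2}{4n}-\frac{k^3}{24n^2}+O\bigl(\frac{k}{n}+\frac{k^4}{n^3}\bigr)\bigr)$ with \eqref{Pk-common-int}; writing the counts in terms of $a,b$ just treats both parities at once instead of splitting into even and odd cases.
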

\end{tcolorbox}

\begin{proof}
For $1 \le k\le2n-1$.
 Choose a sequence of $k+1$ distinct vertices 
$v_0, v_1, \ldots, v_k$
alternating between the vertex bipartition.
For even $k= 2j$ with integer $j\ge1$, the number of path candidate sequences is $2(n)_{j+1}(n)_{j}$.
Therefore, 
\begin{align*}
\E{ Y_{2j} }
= 2(n)_{j+1}(n)_{j} P_{2j}
&= 2n^{k+1} \bR{1-\frac{k}{2n}}
\prod_{j=1}^{k/2-1} \bR{1 - \frac{j}{n}}^2 P_k \\
&=\frac{2n}{k!}\Exp{-\frac{k^2}{8n}-\frac{k^3}{32n^2}+O\lp\frac{k}{n}+\frac{k^4}{n^3}\rp},
\end{align*}
where we use the asymptotics of $P_k$ in \eqref{Pk-common-int} and also note that, for $k=o(n)$,
\[
\prod_{j=1}^{k/2-1} \bR{1 - \frac{j}{n}}^2
= 
\Exp{-\frac{k^2}{4n}-\frac{k^3}{24n^2}
+O\lp \frac{k}n+\frac{k^4}{n^3}\rp }.
\]
For odd $k = 2j+ 1$, the number of path candidates is $2(n)_{j+1}^2$, and a similar derivation yields the same asymptotic formula.
If $k=o(n^{2/3})$, then $k^3/n^2 = o(1)$; 
if $k=o(\sqrt n)$, then $k^2/n=o(1)$. 
The two statements follow.
\end{proof}

\begin{tcolorbox}
\begin{theorem}Let $\calD$ be the  
best-response digraph
of an $n \times n$ random common-interest bimatrix game.
For every $\ep >0$, 
the length $L_n$ of the longest path in $\calD$ satisfies
\[
\lim_{n\to \infty} \P{\Big\lfloor(1-\ep) \dfrac{\log n}{\log\log n} \Big\rfloor\le L_n \le \Big\lceil(1+\ep) \dfrac{\log n}{\log\log n}\Big\rceil} =1\,.
\]
\end{theorem}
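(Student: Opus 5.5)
The upper bound follows from the first moment method, and the lower bound from the second moment method applied to the path count $Y_k$, with Theorem~\ref{pathCount} supplying the first moments. Write $k_+ = \lceil (1+\ep)\log n/\log\log n\rceil$ and $k_- = \lfloor (1-\ep)\log n/\log\log n\rfloor$. Both are $o(\sqrt n)$, so Theorem~\ref{pathCount} gives $\E{Y_k}\sim 2n/k!$ for $k\in\{k_-,k_+\}$.

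\textbf{Upper bound.} Every path of length $k\ge k_+$ contains a subpath of length exactly $k_+$, so $\{L_n\ge k_+\}\subseteq\{Y_{k_+}\ge1\}$. Markov's inequality then gives
\[
\P{L_n \ge k_+} \le \E{Y_{k_+}} \sim \frac{2n}{k_+!}.
\]
By Stirling, $\log k! = k\log k - k + O(\log k)$. Since $\log k_+ = \log\log n - \log\log\log n + O(1)$, we get
\[
\log k_+! = (1+\ep)\log n\,(1-o(1)).
\]
Hence $\E{Y_{k_+}} = n^{1-(1+\ep)(1-o(1))}\to 0$. The strict inequality in the statement is handled by using $k_+$ as the threshold; the ceiling just shifts indices by one.

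\textbf{Lower bound.} By the same Stirling computation, $\mu:=\E{Y_{k_-}} = n^{1-(1-\ep)(1+o(1))}\to\infty$. I would apply Lemma~\ref{2rd-moment} to the indicators $Y_p$ of the candidate alternating vertex sequences $p$ of length $k=k_-$, and aim for $\Delta\le (1+o(1))\mu^2 + o(\mu^2)$; this gives $\P{Y_k=0}\to0$.

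\textbf{Bounding $\Delta$.} Split the ordered pairs $(p,q)$ by how their vertex sets and the matrix entries they use interact.
\begin{enumerate}
\item \emph{Vertex-disjoint, entry-disjoint pairs.} If $p,q$ are vertex-disjoint and the row/column unions examined by the events $\{p\subset\calD\}$, $\{q\subset\calD\}$ are disjoint, the events are independent. In our setting, though, they always intersect, since a row of $p$ meets a column of $q$.
\item \emph{Vertex-disjoint pairs in general.} Here the events are determined by maxima over the unions of rows and columns. The joint probability is at most $P_k^2\,(1+O(k^2/n))$, by a direct argument. Reveal the two paths' conditions as a sequence of ``new maximum'' events: the $i$-th step of $q$ requires its entry to be maximal among at least $H(i) - O(k^2)$ entries not yet constrained by $p$, which is a fraction at least $H(i)(1-O(k/n))$. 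Because $k^2/n\to0$, this contributes $(1+o(1))\mu^2$.
\item \emph{Pairs sharing a vertex.} Sharing a vertex forces coalescence, by Lemma~\ref{vertex-disjoint lemma}(3). If $p$ and $q$ first meet at position $s$ of $q$, they coincide from there on within the overlap. The union is then a tree-like structure with $2k+1-m$ vertices ($m\ge1$ shared) and probability about $n^{-(2k-m)}/(\text{factorial-type terms})$. Summing over the configurations, the contribution is bounded by $\mu\cdot\sum_m O(\text{poly}(k)\, n\,k!/((k-m)!\,\cdots))$.
\end{enumerate}

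\textbf{Main obstacle.} I expect case~3 to be the hardest step. The cleanest route I would try is this: conditioned on the shared portion, the remaining part of $q$ is a path of length $<k$ that enters the union, and the total count is at most $\mu\cdot\sum_{j<k}\E{Y_j}\cdot\mathrm{poly}(k)/n$-type terms, which is $o(\mu^2)$ because $\mu\to\infty$ polynomially. If this becomes messy, the fallback is to restrict to a structured family $\calT$, namely paths whose vertex indices lie in disjoint prescribed blocks. This keeps the mean $n^{\ep-o(1)}$ and makes overlaps rarer. Finally, combining the upper and lower bounds gives the stated limit.
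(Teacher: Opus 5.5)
Your plan is the paper's. For the upper bound you use Markov's inequality on $Y_{k_+}$ together with Stirling. For the lower bound you apply Lemma~\ref{2rd-moment} to $Y_{k_-}$, splitting $\Delta$ into vertex-disjoint and intersecting pairs. Your case~2 is the paper's device in other words: relax $\calE_q$ to the event that each record of $q$ is maximal only among the entries of its line avoiding the lines of $p$. This larger event is independent of $\calE_p$ and has probability $(1+o(1))P_k$, so disjoint pairs contribute $(1+o(1))\mu^2$.

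The gap is case~3, where you give only heuristics. The first one, as written with the factor $n\,k!/(k-m)!$, is too large to give $o(\mu^2)$. The paper makes this step concrete. Let $u=p_a=q_b$ be the first vertex of $q$ on $p$. By Lemma~\ref{vertex-disjoint lemma}(3), $q$ is a prefix $q_0\cdots q_{b-1}$ disjoint from $p$, then a stretch of $p$, then, if $a>b$, $a-b$ new vertices beyond $p_k$. The key estimate is $\P{\calE_q\mid\calE_p}\le(1+o(1))P_b\,P_{(a-b)_+}$. There are at most $n^{b+(a-b)_+}$ such $q$ for given $(a,b)$, so $\delta_2\le(1+o(1))P_k\sum_{a,b}1/(b!\,(a-b)_+!)=O(P_k)$. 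Hence intersecting pairs contribute only $O(\mu)=o(\mu^2)$.

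What your sketch does not explain is why the merging arc $q_{b-1}\to u$ still costs a factor of order $1/(bn)$ after conditioning on $\calE_p$. Its entry $w=\calL(q_{b-1})\cap\calL(u)$ lies on a line of $p$, so the case-2 relaxation cannot be applied to it. If you simply drop that constraint you lose exactly this factor. Your bound on $\sum_q\P{\calE_q\mid\calE_p}$ then becomes of order $nk$, which is not $o(\mu)$ because $\mu=n^{\ep+o(1)}$.

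One way to supply the factor:
\begin{itemize}
\item $w$ is not a record entry of $p$, so fixing $a_w=x$ only removes one entry from the nested record structure defining $\calE_p$. Hence the conditional density of $a_w$ given $\calE_p$ is at most $\prod_{t\in[k]}H(t)/(H(t)-1)=1+o(1)$.
\item The $N\ge H(b)-O(bk)$ entries of $q$'s prefix lines that avoid $p$'s lines are independent of everything on $p$'s lines.
\item Requiring these entries to carry the relaxed record structure of the prefix and all lie below $x$ has probability at most $x^N\prod_{t<b}(H(t)-O(tk))^{-1}$. Its conditional mean given $\calE_p$ is at most $(1+o(1))P_b$.
\item The suffix beyond $p_k$ uses only fresh entries and contributes $(1+o(1))P_{(a-b)_+}$.
\end{itemize}

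Finally, your fallback does not work as stated. With one block per position, the number of candidate paths drops by a factor of order $(k/2)^{k}=n^{1-\ep+o(1)}$. The mean then becomes $n^{2\ep-1+o(1)}\to0$ for $\ep<1/2$, and intersecting pairs still occur anyway.
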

\end{tcolorbox}

\begin{proof}
Let $0<\ep<1$ be fixed.
Define
\begin{align}
\label{mM}    
m
:= \left\lfloor (1-\ep)
\dfrac{\log n}{\log\log n}
\right\rfloor
\quad\mbox{ and }\quad
M
:= \left\lceil (1+\ep)
\dfrac{\log n}{\log\log n}
\right\rceil.
\end{align}
Recall that $Y_k$ denotes the number of paths of length $k$.
By Theorem \ref{pathCount},
\[
\E{ Y_{M} }
= \frac{2n}{M!}\Exp{-\frac{M^2}{8n}-\frac{M^3}{32n^2}+O\lp
\frac{M}{n}+\frac{M^4}{n^3}\rp}
\sim \frac{2n}{M!}.
\]
Stirling's formula implies
\[
M! 
\sim \sqrt{2\pi M} \bR{\frac{M}{e}}^{M}
\ge n^{1+\ep+o(1)},
\]
where
we use 
$- M + M \log M 
\sim M \log M 
\sim (1+\ep) \log n$.
Every directed path of length at least $M$ contains a directed prefix of length $M$.
By Markov's inequality, 
\[
\P{ L_n \ge M}
\le \P{ Y_{M} \ge 1 }
\le \E{ Y_M }
\le 2n^{-\ep+o(1)}
\to 0.
\]

To prove the lower bound, 
it is useful to define 
$\calP_m$ as the
set of all paths of length $m$ in $K_{n, n}$.
By the second moment method (Lemma \ref{2rd-moment})
\begin{align}
\label{lower-comm}
\P{Y_{m}=0}
\leq \frac{ 1+\alpha}{\mu}+\beta,
\end{align}
where, with some $\alpha, \beta\ge0$ that we choose later,
\begin{align*}
\mu=\E{Y_{m}},
\quad
\Delta=\sum_{P\ne Q \in 
\calP_m}
\P{ \calE_P
\cap
\calE_Q }
\le \alpha\mu+(1+\beta)\mu^2,
~~\mbox{ with }~~
\calE_P
:= \{ P \subset \calD \}
\end{align*}
for any path $P \in\calP_m$,
and $\Delta$
is a sum over ordered pairs of distinct paths.
Similarly to the upper bound above,
it is easy to verify that
$\E{ Y_m} \to \infty$.
Fix an arbitrary path $P \in \calP_m$.
Next, we show that
\begin{align}
\delta_1
=
\delta_1(P)
&:=\sum_{Q \in 
\calP_m: P \cap Q = \emptyset}
\P{ \calE_P \cap \calE_Q }
\le (1+o(1)) \mu P_m,  \label{delta1}\\
\delta_2
=\delta_2(P)
&:=\sum_{Q \in 
\calP_m \setminus \{P\}, P \cap Q \ne \emptyset}
\P{ \calE_P \cap \calE_Q }
= O(P_m).
\label{delta2}
\end{align}
This completes the proof of the lower bound in view of \eqref{lower-comm} by 
choosing
$\alpha=O(1)$ and $\beta=o(1)$, as
\begin{align*}
\Delta\le
\sum_{P \in \calP_m} 
(\delta_1+\delta_2) 
\le (1+o(1))\mu^2
+ \O{\mu}.
\end{align*}

Hence, it remains to prove \eqref{delta1}
and \eqref{delta2}.\\
\noindent
\textbf{Proof of \eqref{delta1}.}
Note that for vertex-disjoint paths
$P$ and $Q$,
events $\calE_P$ and $\calE_Q$
are dependent if 
a row (column) in $P$ intersects a column (row) in $Q$.
Fix two vertex-disjoint paths
$P$ and $Q$. Below we define the event
$\calE_Q'$ 
with all the intersecting entries removed from the payoff matrix.
Note that 
the number of deletions is upper bounded 
by $2m^2$, and the new event
$\calE_Q'$
is independent of $\calE_P$,
see Figure \ref{fig:EQprime} for illustration.
Formally,
let $\calI(P)$ and $\calJ(P)$ be the rows and columns used by any given path $P$.
Denote the path $Q$ in $\calD$ of length $m$ by 
$Q=(q_0, q_1, q_2, \ldots, q_{m})$.
Let
\[
\calL(r_i) := \{i\} \times [n],
~
\calL(c_j) := [n] \times \{j\}
~~\mbox{and}~
\calR=\calR(P,Q) := (\calI(P)\times\calJ(Q)) \cup 
(\calI(Q)\times\calJ(P)),
\]
and define
\[
\calE_Q' := 
\bigcap_{t=0}^{m-1}\bC{ 
a_{ \calL(q_t) \cap \calL(q_{t+1}) }
=
\max_{ (i,j) \in \calL(q_t) \setminus \calR
} a_{i,j} },
~\mbox{and}~
\calE_Q := 
\bigcap_{t=0}^{m-1}
\bC{ 
a_{ \calL(q_t) \cap \calL(q_{t+1}) }
=
\max_{ (i,j) \in \calL(q_t)
} a_{i,j} }.
\]
Then $\mathcal E_Q\subseteq\mathcal E_Q'$.

\begin{figure}[htbp]
\centering
\begin{subfigure}[b]{0.48\textwidth}
\centering
\begin{tikzpicture}[x=0.8cm,y=0.8cm,line width=0.8pt]
    \foreach \r/\c in {1/3,1/4,2/3,2/4,3/1,3/2,4/1,4/2}
        \fill[pattern=north east lines,pattern color=black!45]
            (\c-1,-\r) rectangle (\c,-\r+1);

    \foreach \i in {1,...,4} {
        \draw[dashed,line width=0.4pt,black!55] (\i,0) -- (\i,-5);
        \draw[dashed,line width=0.4pt,black!55] (0,-\i) -- (5,-\i);
    }
    \draw (0,0) rectangle (5,-5);

    \foreach \i in {1,...,5} {
        \node[font=\scriptsize] at (-0.3,-\i+0.5) {$\i$};
        \node[font=\scriptsize] at (\i-0.5,0.3) {$\i$};
    }

    \fill (0.5,-0.5) circle (3.5pt);
    \fill (0.5,-1.5) circle (3.5pt);
    \fill (1.5,-1.5) circle (3.5pt);
    \draw[-{Stealth[length=2.2mm]},shorten <=4pt,shorten >=4pt]
        (0.5,-0.5) -- (0.5,-1.5);
    \draw[-{Stealth[length=2.2mm]},shorten <=4pt,shorten >=4pt]
        (0.5,-1.5) -- (1.5,-1.5);

    \draw[fill=white] (2.5,-2.5) circle (3.5pt);
    \draw[fill=white] (2.5,-3.5) circle (3.5pt);
    \draw[fill=white] (3.5,-3.5) circle (3.5pt);
    \draw[-{Stealth[length=2.2mm]},shorten <=4pt,shorten >=4pt]
        (2.5,-2.5) -- (2.5,-3.5);
    \draw[-{Stealth[length=2.2mm]},shorten <=4pt,shorten >=4pt]
        (2.5,-3.5) -- (3.5,-3.5);
\end{tikzpicture}
\end{subfigure}
\hfill
\begin{subfigure}[b]{0.48\textwidth}
\centering
\begin{tikzpicture}[x=1cm,y=0.78cm,
    brv/.style={circle,draw,line width=0.8pt,fill=white,
                inner sep=0pt,minimum size=6.4mm},
    rarr/.style={-{Stealth[length=2.6mm]},line width=0.8pt,
                 shorten >=0.5pt,shorten <=0.5pt},
    carr/.style={-{Stealth[length=2.6mm]},line width=0.8pt,
                 densely dashed,shorten >=0.5pt,shorten <=0.5pt}
]
    \foreach \i in {1,...,5} {
        \node[brv] (r\i) at (0,{0.5-\i}) {$r_{\i}$};
        \node[brv] (c\i) at (3.6,{0.5-\i}) {$c_{\i}$};
    }

    \draw[rarr] (c1) -- (r2);
    \draw[rarr] (r2) -- (c2);

    \draw[carr] (c3) -- (r4);
    \draw[carr] (r4) -- (c4);
\end{tikzpicture}
\end{subfigure}
\caption{Two vertex-disjoint paths $P=(c_1,r_2,c_2)$ and $Q=(c_3,r_4,c_4)$ of length $2$.
Solid and hollow dots correspond to the payoffs of $P$ and $Q$, respectively.
The shaded entries are the deleted entries, 
that are the intersecting entries of a row of $P$ and a column of $Q$, or vice versa.
The event $\mathcal E_Q$ requires each hollow dot to be the maximum of all entries in a row or a column,
while $\mathcal E_Q'$ only requires it to be the maximum of the unshaded entries of its row or column. Hence $\mathcal E_Q\subseteq\mathcal E_Q'$, and $\mathcal E_Q'$ depends only on entries that are disjoint from the entries on rows or columns of $P$, 
so it is independent of $\mathcal E_P$.}
\label{fig:EQprime}
\end{figure}
Similarly to Lemma \ref{path-prob},
we have
\[
\P{ \calE_Q' }
\le
\prod_{i\in[m]} \frac{1}{in-\floor{i^2/4}-2m^2}
\sim 
\prod_{i\in[m]} \frac{1}{in-\floor{i^2/4}}
=
\P{ \calE_Q },
\]
where the asymptotic equivalence
follows from the definition of $m$
in \eqref{mM}.
Hence,
we have
\begin{align*}
\delta_1
&=
\P{ \calE_P }
\sum_{Q \in 
\calP_m: P \cap Q = \emptyset}
\P{ \calE_Q \mid \calE_P } \\
&\le
\P{ \calE_P }
\sum_{Q \in 
\calP_m: P \cap Q = \emptyset}
\P{ \calE_Q' \mid \calE_P }
\sim 
\sum_{Q \in 
\calP_m: P \cap Q = \emptyset}
\P{ \calE_P }^2
\le \abs{ \calP_m }
\P{ \calE_P }^2
= \mu P_m,
\end{align*}
where the first inequality 
holds because 
the arc is more likely to appear as the intersecting entries are deleted.
\\
\noindent
\textbf{Proof of \eqref{delta2}.}
For two distinct intersecting paths $P, Q \in \calP_m$,
suppose $P$ and $Q$ meet at vertex $u=P_a=Q_b$ 
with $0\le a, b \le m$
and $a+b>0$,
as in Figure \ref{fig:2p}.
\begin{figure}[htb]
\centering
\resizebox{0.9\linewidth}{!}{%
\begin{tikzpicture}[
  Pline/.style={-{Stealth[length=5pt]}, thick},
  Qline/.style={-{Stealth[length=5pt]}, thick, dashed},
  vert/.style={circle, fill=black, inner sep=1.5pt},
  lbl/.style={font=\normalsize},
]

\node[vert] (P0)   at (0,0)    {};
\node[vert] (P1)   at (1.8,0)  {};
\node[vert] (Pam1) at (4.6,0)  {};
\node[vert] (u)    at (6.6,0)  {};
\node[vert] (Pap1) at (8.6,0)  {};
\node[vert] (Pmm1) at (13.4,0) {};
\node[vert] (Pm)   at (15.4,0) {};

\node[lbl, below=3pt of P0]   {$P_0$};
\node[lbl, above=3pt of P1]   {$P_1$};
\node[lbl, above=3pt of Pam1] {$P_{a-1}$};
\node[lbl, above=3pt of u]    {$u=P_a=Q_b$};
\node[lbl, below=3pt of Pap1] {$P_{a+1}{=}Q_{b+1}$};
\node[lbl, above=3pt of Pmm1] {$P_{m-1}{=}Q_{b+m-a-1}$};
\node[lbl, below=4pt of Pm]   {$P_m=Q_{b+m-a}$};

\draw[Pline] (P0)   -- (P1);
\draw[Pline] (P1)   -- node[lbl, above] {$\cdots$} (Pam1);
\draw[Pline] (Pam1) -- (u);

\draw[Pline, shorten >=3pt, shorten <=3pt] ($(u)+(0,0.10)$)    -- ($(Pap1)+(0,0.10)$);
\draw[Qline, shorten >=3pt, shorten <=3pt] ($(u)+(0,-0.10)$)   -- ($(Pap1)+(0,-0.10)$);
\draw[Pline, shorten >=3pt, shorten <=3pt] ($(Pap1)+(0,0.10)$) -- ($(Pmm1)+(0,0.10)$);
\draw[Qline, shorten >=3pt, shorten <=3pt] ($(Pap1)+(0,-0.10)$) -- ($(Pmm1)+(0,-0.10)$);
\node[lbl] at ($(Pap1)!0.5!(Pmm1)$) [above=1pt] {$\cdots$};
\draw[Pline, shorten >=3pt, shorten <=3pt] ($(Pmm1)+(0,0.10)$)  -- ($(Pm)+(0,0.10)$);
\draw[Qline, shorten >=3pt, shorten <=3pt] ($(Pmm1)+(0,-0.10)$) -- ($(Pm)+(0,-0.10)$);

\node[vert] (Q0)   at (0.6,-2.0) {};
\node[vert] (Q1)   at (2.4,-2.0) {};
\node[vert] (Qbm1) at (5.6,-2.0) {};
\node[lbl, below=3pt of Q0]   {$Q_0$};
\node[lbl, below=3pt of Q1]   {$Q_1$};
\node[lbl, below=3pt of Qbm1] {$Q_{b-1}$};

\draw[Qline] (Q0) -- (Q1);
\draw[Qline] (Q1) -- node[lbl, below] {$\cdots$} (Qbm1);
\draw[Qline] (Qbm1) to[out=45, in=225] (u);

\node[vert] (Qs1)  at (16.4,2.0) {};
\node[vert] (Qmm1) at (21.2,2.0) {};
\node[vert] (Qm)   at (23.0,2.0) {};
\node[lbl, above=3pt of Qs1]  {$Q_{b+m-a+1}$};
\node[lbl, above=3pt of Qmm1] {$Q_{m-1}$};
\node[lbl, above=3pt of Qm]   {$Q_m$};

\draw[Qline] (Pm)  to[out=45, in=225] (Qs1);
\draw[Qline] (Qs1) -- node[lbl, above] {$\cdots$} (Qmm1);
\draw[Qline] (Qmm1) -- (Qm);

\end{tikzpicture}%
}
\caption{Distinct paths $P$ and $Q$ of length $m$ intersect at vertex 
$u=P_a=Q_b$.}
\label{fig:2p}
\end{figure}

We show that
\begin{align}
\label{3parts}
\P{ \calE_Q \mid \calE_P }
\le (1+o(1)) P_b P_{(a-b)_+},
\end{align}
where $(a-b)_+=\max(a-b,0)$,
and $P_{(a-b)_+}=1$ if $b\ge a$.
Then
$\P{ \calE_Q \cap \calE_P }
\le
(1+o(1))
P_m
P_b P_{(a-b)_+}$,
and therefore, by Lemma \ref{path-prob},
\begin{align*}
\delta_2
\le
(1+o(1)) P_m
\sum_{Q \in 
\calP_m \setminus \{P\}, P \cap Q \ne \emptyset}
P_b P_{(a-b)_+}
&\sim 
P_m
\sum_{Q \in 
\calP_m \setminus \{P\}, P \cap Q \ne \emptyset}
\frac{1}{n^b b!}
\frac{1}{n^{(a-b)_+} (a-b)_+!} \\
&\le
P_m
\sum_{a,b\ge0}
\frac{1}{b!}
\frac{1}{(a-b)_+!}
= O(P_m).
\end{align*}
It remains to prove \eqref{3parts}.
Note that 
conditioning on the occurrence of $P$, 
in order to have $Q$ in $\calD$,
we need the two subpaths to appear: the prefix subpath of $Q$ before the merging vertex $u$, and the suffix subpath after the last vertex $P_m$ of $P$, that is,
\begin{align*}
\P{ \calE_Q \mid \calE_P }
= 
\P{ \calE_{Q_0\ldots Q_{b} } 
\mid \calE_P }
\P{
\calE_{Q_{b+m-a}\ldots Q_{m}} 
\mid \calE_P, 
\calE_{Q_0\ldots Q_{b} }}.
\end{align*}
Note that the vertices $P_{a-1}$ and
$Q_{b-a}$ 
before the merging vertex 
$u$ must be on the same row or column in the payoff matrix in order to have the merging paths.
Conditioning on the occurrence of arc $P_{a-1}P_a$ in $\calD$, arc
$Q_{b-1}Q_b$ is more likely to appear, so we have
\begin{align*}
\P{ \calE_{Q_0\ldots Q_{b} } 
\mid \calE_P }
\le 
(1+o(1)) P_b,
\end{align*}
where we note that 
the subpath
$Q_0Q_1\cdots Q_{b-1}$ is vertex-disjoint from $P$,
so it is similar to the disjoint case.
Combining with 
a similar argument for 
$\P{
\calE_{Q_{b+m-a}\ldots Q_{m}} 
\mid \calE_P, 
\calE_{Q_0\ldots Q_{b} }}$ yields \eqref{3parts}.
The proof is now complete.
\end{proof}

\section{Random zero-sum games}
In this section we study random zero-sum games, 
where
the two random payoff matrices satisfy $\bm A+\bm B=\mathbf{0}$.
We estimate the expected number of cycles of a fixed length, the expected total number of cycles, and the typical length of the longest cycle.

The case $\ell=1$ 
corresponds to the PNE,
and it is exponentially rare.
By Goldman \cite{Goldman1957ThePO},
we have
$
\E{ N_\PNE } 
= (n!)^2 / (2n-1)! $.
Using $\binom{2n}{n}\sim 4^n/\sqrt{\pi n}$
gives the following asymptotics:

\begin{tcolorbox}
\begin{proposition}
[Goldman \cite{Goldman1957ThePO}]
\label{prop:saddle}
\begin{align}
\label{eq:saddle}
\E{ N_\PNE } 
= \frac{ (n!)^2 }{ (2n-1)! }
= \frac{2n}{\binom{2n}{n}} \sim 2\sqrt{\pi}\, n^{3/2} 4^{-n}~.
\end{align}
\end{proposition}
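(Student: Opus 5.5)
The plan is to compute $\P{(r_i,c_j)\text{ is a PNE}}$ for a fixed cell and then use linearity of expectation over the $n^2$ cells. In a zero-sum game with $\bm B=-\bm A$, the arc $c_j\to r_i$ is present iff $a_{i,j}=\max_k a_{k,j}$. The arc $r_i\to c_j$ is present iff $b_{i,j}=\max_k b_{i,k}$, that is, iff $a_{i,j}=\min_k a_{i,k}$. So $(r_i,c_j)$ is a PNE iff $a_{i,j}$ is a saddle point: the maximum of its column and the minimum of its row.

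The event concerns only the $2n-1$ entries in the union of row $i$ and column $j$. These are i.i.d. continuous, so their relative order is a uniform random permutation. The event requires $a_{i,j}$ to exceed the other $n-1$ column entries and to lie below the other $n-1$ row entries. In the ranking of the $2n-1$ values, this forces $a_{i,j}$ into rank exactly $n$, with the $n-1$ column entries below it and the $n-1$ row entries above it. Among all $(2n-1)!$ orderings, the favourable ones number $(n-1)!^2$: the cell takes the middle position, and the column and row entries are arranged freely below and above it. Hence
\[
\P{(r_i,c_j)\text{ is a PNE}}=\frac{(n-1)!^2}{(2n-1)!},
\qquad
\E{N_\PNE}=n^2\frac{(n-1)!^2}{(2n-1)!}=\frac{(n!)^2}{(2n-1)!}.
\]

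Next rewrite this as $\frac{(n!)^2\,2n}{(2n)!}=\frac{2n}{\binom{2n}{n}}$. Then apply the central binomial estimate $\binom{2n}{n}\sim 4^n/\sqrt{\pi n}$, which follows from Stirling's formula. This gives $\E{N_\PNE}\sim 2n\sqrt{\pi n}\,4^{-n}=2\sqrt\pi\,n^{3/2}4^{-n}$.

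The only point needing care is the orientation of the arcs, so that the max/min conditions are assigned to the correct row and column. Once that is right, no step is a real obstacle: the count of favourable orderings is elementary, and everything after it is routine algebra.
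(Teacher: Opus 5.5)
Your proof is correct. It differs from the paper mainly in being self-contained. The paper does not derive the exact identity: it quotes Goldman for $\E{N_\PNE}=(n!)^2/(2n-1)!$ and only carries out your last step, rewriting it as $2n/\binom{2n}{n}$ and applying $\binom{2n}{n}\sim 4^n/\sqrt{\pi n}$.

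What you add is a proof of Goldman's formula itself, in three steps:
\begin{itemize}
\item You identify a PNE with a saddle point of $\bm A$, a row minimum that is also a column maximum, and you get the arc orientations right.
\item You use linearity of expectation over the $n^2$ cells.
\item You use the fact that the relative order of the $2n-1$ entries in row $i$ and column $j$ is uniform. This forces $a_{i,j}$ into the middle rank and gives $(n-1)!^2/(2n-1)!$ per cell.
\end{itemize}
This is the same kind of direct computation the paper does for antisymmetric PNE. It removes the dependence on the citation at essentially no cost.

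One remark connects your statement to Goldman's original one. With distinct entries, a zero-sum game has at most one saddle point. If $a_{i,j}$ and $a_{k,l}$ are both saddle points, then $a_{i,j}\le a_{i,l}\le a_{k,l}\le a_{k,j}\le a_{i,j}$. Hence $N_\PNE\in\{0,1\}$ almost surely, and your expectation equals $\P{N_\PNE\ge1}$, which is how Goldman stated the result.
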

\end{tcolorbox}


For $\ell \ge 2$, 
a $2\ell$-cycle 
$
C = (r_{i_1}, c_{j_1}, \ldots,
r_{i_k}, c_{j_k}, \ldots,
r_{i_\ell}, c_{j_\ell}, r_{i_1})
$
appears in $\calD$
if and only if
the following $2\ell$ events hold
\[
\bC{
b_{i_k,j_k}
=\textstyle\displaystyle\max_{m\in[n]}b_{i_k,m}
}_{k=1}^{\ell}\quad\text{and}\quad
\bC{
a_{i_{k+1},j_k}
=\textstyle\displaystyle\max_{m\in[n]}a_{m,j_k}
}_{k=1}^{\ell}.
\]
For zero-sum games, it is equivalent to
a cycle of alternating extrema
in the payoff matrix $\bm A$.

\begin{figure}[htb]
\centering
\begin{subfigure}[b]{0.48\textwidth}
\centering

\begin{tikzpicture}[x=.6cm,y=.6cm,line width=0.8pt]

\def\LL{3}\def\NN{7}
\pgfmathtruncatemacro{\Lm}{\LL-1}
\draw (0,0) rectangle (\NN,-\NN);
\foreach \i in {1,...,\Lm}{
  \draw[dashed] (\i,0) -- (\i,-\NN);
  \draw[dashed] (0,-\i) -- (\NN,-\i);}
\draw (\LL,0) -- (\LL,-\NN);
\draw (0,-\LL) -- (\NN,-\LL);

\foreach \i in {1,...,\LL}{\fill (\i-0.5,-\i+0.5) circle (3.5pt);}

\foreach \j in {1,...,\LL}{
  \pgfmathtruncatemacro{\rw}{mod(\j,\LL)+1}
  \draw (\j-0.5,-\rw+0.5) circle (3.5pt);}
\draw[decorate,decoration={brace,amplitude=4pt}]
  (0,0.45) -- (\LL,0.45) node[midway,above=4pt] {$\ell$};
\draw[decorate,decoration={brace,amplitude=4pt}]
  (\LL,0.45) -- (\NN,0.45) node[midway,above=4pt] {$n-\ell$};
\draw[decorate,decoration={brace,amplitude=4pt,mirror}]
  (-0.45,0) -- (-0.45,-\LL) node[midway,left=4pt] {$\ell$};
\draw[decorate,decoration={brace,amplitude=4pt,mirror}]
  (-0.45,-\LL) -- (-0.45,-\NN) node[midway,left=4pt] {$n-\ell$};
\end{tikzpicture}
\end{subfigure}
\begin{subfigure}[b]{0.48\textwidth}
\centering
\begin{tikzpicture}[x=1cm,y=1cm,line width=0.8pt,
  vtx/.style={circle,draw,inner sep=0pt,minimum size=7mm},
  rarr/.style={-{Stealth[length=2.6mm]}},
  carr/.style={-{Stealth[length=2.6mm]},densely dashed}]
  
\def\LL{3}\def\dy{1.75}\def\dx{3.2}

\foreach \i in {1,...,\LL}{
  \node[vtx] (r\i) at (0,-\dy*\i+\dy) {$r_{\i}$};
  \node[vtx] (c\i) at (\dx,-\dy*\i+\dy) {$c_{\i}$};}

\foreach \i in {1,...,\LL}{\draw[rarr] (r\i) -- (c\i);}

\foreach \j in {1,...,2}{
  \pgfmathtruncatemacro{\nx}{\j+1}
  \draw[carr] (c\j) -- (r\nx);}
\draw[carr] (c\LL) -- (r1);

\path (0,0.78); \path (0,-4.28);
\end{tikzpicture}
\end{subfigure}
\caption{
In the payoff matrix $\bm A$ of the row player, row minima are solid circles and column maxima are hollow circles.
The corresponding $6$-cycle in $\calD$ is shown on the right-hand side.}
\label{fig:block-digraph}
\end{figure}

\begin{tcolorbox}
    \begin{theorem}
\label{0sumThm}
Let $\calD$ be the  
best-response digraph
of an
$n \times n$
random zero-sum bimatrix game. For $2\le\ell\le n$, $n\ge3$ and any $C \in \calC_{2\ell}$,
\begin{align}
\label{0sum-cycle-prob}
\frac{1}{(n-1)^{2\ell}}\lp1-\frac{\ell^2}{(n-2)^2}- \frac{8n^2\ell^2}{4^n}\rp
\le 
\P{C \subset \calD}
\le \frac{1}{(n-1)^{2\ell}}.
\end{align}
Let $X_\ell$ be the number of $2\ell$-cycles in $\calD$.
Then for all $2\le\ell\le n$,
\begin{align}
\label{0sum-cycle-upper}
\E{X_\ell}
\le \mu_{\ell}e^{2\ell/(n-1)}
\le e^3 \mu_{\ell}.
\end{align}
Moreover, $\E{X_\ell}
\sim
\mu_{\ell}
$ if $\ell=o(n)$.
\end{theorem}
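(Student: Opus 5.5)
In a zero-sum game $\bm B=-\bm A$, so the arc $r_i\to c_j$ is present iff $a_{i,j}$ is the minimum of row $i$ of $\bm A$. The arc $c_j\to r_i$ is present iff $a_{i,j}$ is the maximum of column $j$. After relabelling rows and columns, a cycle $C\in\calC_{2\ell}$ corresponds, as in Figure~\ref{fig:block-digraph}, to an $\ell\times\ell$ block of $\bm A$ with $2\ell$ designated cells. These are the ``min'' cells $x_k=a_{i_k,j_k}$ and the ``max'' cells $y_k=a_{i_{k+1},j_k}$, so that each block row contains exactly one $x$ and one $y$, and each block column contains exactly one $x$ and one $y$. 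Since the distribution is continuous, we may take the entries uniform on $[0,1]$. Then $\{C\subset\calD\}$ is the intersection of two families of events. The row events $R_k$ say that $x_k$ is below the other $n-1$ entries of row $i_k$, which include $y_{k-1}$. The column events $K_k$ say that $y_k$ is above the other $n-1$ entries of column $j_k$, which include $x_k$.

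\textbf{Upper bound.} First condition on the $2\ell$ designated values. The remaining entries are independent. Each entry outside the block rows and columns affects at most one event. Each non-designated cell in the block (block row $\cap$ block column) affects one row event and one column event in opposite directions: $R_k$ is increasing in it and $K_m$ is decreasing in it. The $R$'s are all increasing and the $K$'s all decreasing in the non-designated entries, so the Harris inequality gives
\[
\P{\textstyle\bigcap R_k\cap\bigcap K_k\mid x,y}\le \prod_k \P{R_k\mid x,y}\prod_k\P{K_k\mid x,y}.
\]
This product equals $\prod_k(1-x_k)^{n-2}\I{x_k<y_{k-1}}\,y_k^{n-2}\I{y_k>x_k}$. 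Dropping the indicators and integrating each variable separately gives $\bigl(\int_0^1 (1-x)^{n-2}\,dx\bigr)^{\ell}\bigl(\int_0^1 y^{n-2}\,dy\bigr)^{\ell}=(n-1)^{-2\ell}$, which is the upper bound in \eqref{0sum-cycle-prob}.

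\textbf{Lower bound.} I expect this to be the main obstacle, because both losses, from the indicators and from the negative dependence, must be shown to be small. I would run the conditional computation exactly rather than via Harris. Let $G$ be the event that every designated max exceeds every designated min, i.e.\ $\min_k y_k>\max_k x_k$. On $G$ the indicators are automatically $1$. Given the designated values, a non-designated block cell $z$ violates the pair $(R_k,K_m)$ only if it lies below $x_k$ or above $y_m$. Using $\P{A\cap B}\ge \P{A}\P{B}-\P{\text{shared failure}}$ cell by cell, the total correction over the at most $\ell^2$ shared cells should be bounded by $\ell^2/(n-2)^2$ times the product $(n-1)^{-2\ell}$. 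I expect this because each shared cell adds one extra ``competitor'' to two constraints, and integrating $(1-x)^{n-2}$ against an extra factor $x$ costs about $1/(n-2)$ per constraint.

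The complement of $G$ requires some designated min to exceed some designated max. This resembles a saddle-point configuration, so by a Goldman-type count (Proposition~\ref{prop:saddle}) it has relative probability $O(n^2\cdot 4^{-n})$ per pair $(k,m)$. A union bound over the at most $\ell^2$ pairs then yields the term $8n^2\ell^2/4^n$. I would first try to bound this by $\P{\text{row min}\ge\text{column max}}$ for $n-1$-sized sets, since that probability is $\approx\binom{2n-2}{n-1}^{-1}$.

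\textbf{Expectations.} Summing over $\calC_{2\ell}$ via \eqref{cycleCount} gives
\[
\E{X_\ell}\le \frac{(n)_\ell^2}{\ell}(n-1)^{-2\ell}=\mu_\ell\Bigl(\frac{n}{n-1}\Bigr)^{2\ell}\le \mu_\ell e^{2\ell/(n-1)}.
\]
Since $\ell\le n$ and $n\ge3$, we have $2\ell/(n-1)\le 2n/(n-1)\le 3$, which proves \eqref{0sum-cycle-upper}. If $\ell=o(n)$, the factor $e^{2\ell/(n-1)}$ tends to $1$. Moreover $\ell^2/(n-2)^2\to0$ and $n^2\ell^2 4^{-n}\to0$, so the lower bound in \eqref{0sum-cycle-prob} matches the upper bound and $\E{X_\ell}\sim\mu_\ell$.
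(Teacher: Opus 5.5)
Your proposal is correct and follows essentially the paper's proof. Conditioning on the $2\ell$ designated values, your Harris-inequality upper bound is equivalent to the paper's observation that every free-cell factor $1-q(Y_i)q(Z_j)$ lies in $(0,1]$; in your notation this is the pointwise inequality $(y_m-x_k)_+\le(1-x_k)y_m$. Your lower-bound plan is also the paper's argument: restrict to $G$, which is exactly the paper's $\calU_\ell$; lose a relative $(n-2)^{-2}$ per free cell; and use a union bound over the $\ell^2$ min--max pairs for $G^c$.

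To make the lower bound rigorous, write each free-cell factor as $y_m-x_k=(1-x_k)y_m\bigl(1-q(x_k)q(1-y_m)\bigr)$ with $q(t)=t/(1-t)$, and apply $\prod_c(1-a_c)\ge 1-\sum_c a_c$. Compute the $G^c$ term under the tilted density proportional to $\prod_k(1-x_k)^{n-2}y_k^{n-2}$, where each pair contributes exactly $\binom{2n-2}{n-1}^{-1}\le 8n^2 4^{-n}$. Do not get it from Proposition~\ref{prop:saddle}, which concerns a different configuration.
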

\end{tcolorbox}

\begin{proof}
Let $n\times n$ random payoff matrix $\bm A$
filled with independent uniform random variables on $[0, 1]$. 
For simplicity, we relabel indices so that a $2\ell$-cycle uses rows $1,\ldots,\ell$ and columns $1,\ldots,\ell$. We define
$p_{n,\ell}$ to be the probability that 
the minimum in row $i$ is at $(i, i)$ for $i \in [\ell]$, and the maximum in column $j$ is at $ (\sigma(j), j)$ for $j \in [\ell]$, where
$$
\sigma(j) = j \bmod \ell + 1;
$$
or equivalently,
\[
p_{n,\ell}
:=\P{\bigcap_{i=1}^{\ell}\left(
\bC{a_{ii}=\min_{m\in[n]}a_{im}}
\cap\bC{a_{\sigma(i),i}=\max_{m\in[n]}a_{mi}}
\right)}
=\P{C\subset\calD}.
\]
See Figure \ref{fig:block-digraph} for an illustration.

Denote the values in the minimum cells by $\bm y=\{ y_i: i \in [\ell]\}$, and the values in the maximum cells by $
\bm 1-\bm z
= \{ 1-z_j : j \in [\ell]\}$. 
For any $i, j\le\ell$, the entry $a_{i,j}$ in cell $(i, j)$ must be between the minimum of its row and the maximum of its column,
$y_i \le a_{ij}\le1 - z_j$.
The two extrema occupy distinct cells, so almost surely $y_i < 1-z_j$. All these inequalities together specify the 
possible payoffs necessary for the occurrence of the cycle,
$$
\calU_{\ell} :=\left\{(\bm y, \bm z)\in[0, 1]^{2\ell} : \max_{i\in[\ell]}
y_i + \max_{j\in[\ell]} z_j < 1\right\}.
$$

Define the index set of free cells by
$$\calF_{\ell} := \{(i, j) \in [\ell]^2 : i \neq j, i \neq \sigma(j)\},
$$
that is, the set of cells in the upper-left $\ell\times\ell$ square that are neither minima nor maxima. 
Then
\begin{align*}
p_{n,\ell}
&=
\int_{\calU_{\ell}}
\bR{ \prod_{ i \in [\ell] } (1-y_i) \cdot \prod_{ j \in [\ell] } (1-z_j) 
}^{n-\ell}
\prod_{(i,j)\in \calF_{\ell}} (1-y_i-z_j)
d\bm y d\bm z\\
&=\int_{\calU_{\ell}}   
\bR{ \prod_{ i \in [\ell] } (1-y_i) \cdot \prod_{ j \in [\ell] } (1-z_j) 
}^{n-2} \prod_{(i,j)\in \calF_{\ell}} (1-q(y_i)q(z_j))
d\bm y d\bm z,
\end{align*}
where we use
\begin{align*}
    1-y-z=(1-y)(1-z)\lp 1-q(y)q(z)\rp
\quad\mbox{with}\quad
q(t):=\frac{t}{1-t}
\end{align*}
defined for $t\in[0,1)$.
Note that every row and column of $E$ contains $\ell-2$ free cells, so we get $\ell-2$ extra factors $(1-y_i)$ for each $i$ and $\ell-2$ extra factors $(1-z_j)$ for each $j$.

Let independent random vectors $\bm Y =  ( Y_i : i \in [\ell] )$
and $\bm Z = ( Z_j : j \in [\ell] )$ 
have independent coordinates 
with each
defined to be the minimum of $n-1$ independent uniform variables on $[0,1]$, that is, each has probability density, for $0 < t < 1$,
$$
f_n(t) = (n-1)(1-t)^{n-2}.
$$
Then the joint density of $(\bm Y, \bm Z)$ is
\[
(n-1)^{2\ell}
\bR{ \prod_{ i \in [\ell] } (1-y_i) \cdot \prod_{ j \in [\ell] } (1-z_j) 
}^{n-2}
\]
and we have
\begin{align}\label{new pnl}
    p_{n,\ell}
    =\frac{1}{(n-1)^{2\ell}}\E{\I{(\bm Y, \bm Z)\in \calU_{\ell}}
    \prod_{(i,j)\in \calF_{\ell}}\lp 1-q(Y_i)q(Z_j)\rp}.
\end{align}

\noindent
\textbf{Upper bound.} 
On $\calU_{\ell}$, we have $y_i+z_j<1$, for every $i,j$, so every factor $1-q(Y_i)q(Z_j)$ of the product in \eqref{new pnl} is in $(0,1]$.
Then the expectation is at most $1$ in \eqref{new pnl}, and therefore, 
$p_{n,\ell}\le (n-1)^{-2\ell}$.

\noindent
\textbf{Lower bound.} 
We use the inequality $\prod_k(1-a_k) \ge 1-\sum_k a_k$ for $0 \le a_k\le 1$, and get for all $(\bm Y, \bm Z)$,
\begin{align*}
\I{(\bm Y, \bm Z)\in \calU_{\ell}}\prod_{(i,j)\in \calF_{\ell}}\lp1-q(Y_i)q(Z_j)\rp
&\ge
\I{(\bm Y, \bm Z)\in \calU_{\ell}}
\bR{ 1-\sum_{(i,j)\in \calF_{\ell}}q(Y_i)q(Z_j) } \\
&= 1-\sum_{(i,j)\in \calF_{\ell}}q(Y_i)q(Z_j)-\I{(\bm Y, \bm Z)\notin \calU_{\ell}},
\end{align*}
noting that outside $\calU_{\ell}$, 
the last term is negative,
so the above still holds.
Taking expectations, by independence,
we have
\[
\E{q(Y_i)q(Z_j)}
=\E{q(Y_i)}\E{q(Z_j)}
= \bR{(n-1)\int_0^1t(1-t)^{n-3}dt}^2
=(n-2)^{-2}
\] 
for each of the $|\calF_{\ell}|=\ell(\ell-2)$ cells,
and then by \eqref{new pnl},
\[
p_{n,\ell}\ge\frac{1}{(n-1)^{2\ell}}\lp1-\frac{\ell(\ell-2)}{(n-2)^2}
-\P{(\bm Y, \bm Z)\notin \calU_{\ell}}
\rp.
\]
The complement of $\calU_{\ell}$ is the union of the $\ell^2$ regions $\{y_i+z_j \ge 1\}$,
so by the union bound, 
\begin{align*}
    \P{(\bm Y, \bm Z)\notin \calU_{\ell}}
    &\le \ell^2 \cdot \P{Y_1+Z_1\ge 1} \notag\\
    &\le 
\ell^2 (n-1)^2\int_{
(y,z)\in [0,1]^2:
y+z\ge1 }
\bR{ (1-y)(1-z) }^{n-2}dy dz \notag\\
    &\le
    \frac{1}{2} \ell^2 (n-1)^2\cdot4^{-(n-2)},
\end{align*}
where for $(\bm y, \bm z) \in \calU_{\ell}^c$,
we use
$(1-y_i)(1-z_j)\le
1/4$, and the area of $\{(y, z) \in [0, 1]^2 : y+z \ge 1\}$ is $1/2$.
Combining upper and lower bounds finishes the proof of the first part.

Note that 
$\E{ X_\ell }
= \Abs{ \calC_{2\ell} }
\cdot \P{ C \subset\calD }$.
Using $1\le\lp\frac{n}{n-1}\rp^{2\ell}\le e^{2\ell/(n-1)}$, 
we have $\E{X_\ell}\sim\mu_\ell$ for $2\le \ell=o(n)$.
\end{proof}

\begin{tcolorbox}
\begin{theorem}
\label{zero-sum asympt distr of 2ell cycles}
In a
random zero-sum bimatrix game,
we have the following.
\begin{enumerate}
    \item For any fixed $\ell\ge2$, 
    $X_\ell\overset{d}{\to} \Po\lp1/\ell\rp$;
    \item the length of the longest cycle is
$
2\ell_{\max}=\Theta_\mathbb{P}\lp\sqrt{ n}\rp;
$
\item 
the total number of directed cycles $T_n$ in $\calD$ satisfies
$
    \E{ T_n } =\lp\frac{1}{2}+o(1)\rp  \log n\,.
$
\end{enumerate}



\end{theorem}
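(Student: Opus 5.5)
The plan is to follow the templates of the independent and antisymmetric cases, with Theorem~\ref{0sumThm} supplying the per-cycle probabilities.

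\textbf{Part 1 (Poisson limit).} I would use the method of factorial moments, as in Theorem~\ref{2lcyclesInd}. By Lemma~\ref{vertex-disjoint lemma}, only $k$-tuples of pairwise vertex-disjoint cycles contribute to $\E{(X_\ell)_k}$. Fix such a tuple $C_1,\ldots,C_k\in\calC_{2\ell}$. Their union uses $k\ell$ distinct rows and $k\ell$ distinct columns, and it prescribes one row-minimum and one column-maximum cell in each, following a cyclic pattern on a $k\ell\times k\ell$ block. After relabelling, the integral representation \eqref{new pnl} goes through verbatim with $\ell$ replaced by $k\ell$, the constraint set $\calU_{k\ell}$, and a free-cell set $\calF$ of size at most $(k\ell)^2$. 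This yields
\[
\frac{1}{(n-1)^{2k\ell}}\Bigl(1-\frac{k^2\ell^2}{(n-2)^2}-\frac{8n^2k^2\ell^2}{4^n}\Bigr)\le \P{\textstyle\bigcap_i\{C_i\subset\calD\}}\le\frac{1}{(n-1)^{2k\ell}}.
\]
Multiplying by the number $(n)_{k\ell}^2/\ell^k$ of ordered disjoint tuples gives $\E{(X_\ell)_k}\to\ell^{-k}$ for fixed $k,\ell$.

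The one thing to check here is that the union of several cycles still has exactly one minimum per used row and one maximum per used column, placed in distinct cells. This holds because the cycles are disjoint, so the factorization and the bound on $\E{q(Y_i)q(Z_j)}$ apply unchanged.

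\textbf{Part 2 (longest cycle).} For the upper bound I would combine \eqref{0sum-cycle-upper}, namely $\E{X_\ell}\le e^3\mu_\ell$, with \eqref{mubound} and the same dyadic summation as in Theorem~\ref{thm:longest}. Choosing $C=\sqrt{\log(1+e^3/\ep)}$ gives $\P{\ell_{\max}\ge C\sqrt n}\le\ep$.

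For the lower bound I would take $\calT$ to be all cycles with $\ceil{c\sqrt n}\le\ell\le\ceil{\sqrt n}$. The lower bound in \eqref{0sum-cycle-prob} gives $\E{Y_C}\ge(1-O(1/n))n^{-2\ell}$ uniformly on $\calT$, since $\ell^2/n^2=O(1/n)$. Hence $\mu\ge(1+o(1))e^{-1}\log(1/c)$, as before.

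For $\Delta$, take disjoint $C,C'\in\calT$. The two-cycle version of the upper bound in \eqref{0sum-cycle-prob} gives
\[
\P{C,C'\subset\calD}\le (n-1)^{-|C|-|C'|}\le e^{O(1/\sqrt n)}\,n^{-|C|}n^{-|C'|}.
\]
Combined with the per-cycle lower bound, this shows $\Delta\le(1+o(1))\mu^2$. Lemma~\ref{2rd-moment} with $\alpha=0$ and $\beta=o(1)$ then finishes this part.

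The main obstacle is exactly this two-cycle upper bound. The two cycles share no rows or columns, but they are not independent: the free cells of $C$'s rows lying in $C'$'s columns couple their extrema. My first attempt is the observation already used in \eqref{new pnl}: every extra factor $1-q(y_i)q(z_j)$ lies in $(0,1]$ on the admissible region. So the joint integral for the $2(\ell+\ell')$ constraints is bounded by $(n-1)^{-2(\ell+\ell')}$, with no independence needed. This is just the Part~1 upper bound applied with $k=2$ and cycles of different lengths.

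\textbf{Part 3 (total cycle count).} I would split the range of $\ell$ into $I_1,I_2,I_3$ exactly as in Theorem~\ref{number of cycles independent case}. The PNE term $\ell=1$ is $o(1)$ by Proposition~\ref{prop:saddle}. On $I_3$, the bound $\E{X_\ell}\le e^3\mu_\ell$ together with \eqref{mubound} gives $o(1)$. On $I_1\cup I_2$, where $\ell=o(n)$, Theorem~\ref{0sumThm} gives $\E{X_\ell}=(1+O(\ell/n))\mu_\ell$. The error term is uniform since $\ell\le\sqrt{n\log n}$, so the sum is $(1+o(1))\sum\mu_\ell=(\tfrac12+o(1))\log n$.
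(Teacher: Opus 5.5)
Your proposal is correct and follows the paper's proof essentially step by step. The matching steps are: factorial moments over vertex-disjoint tuples, using \eqref{new pnl} extended to the union of the cycles; the first-moment bound $\E{X_\ell}\le e^3\mu_\ell$ with the same dyadic sum and the same constant $C$; $\Delta\le(1+o(1))\mu^2$ via the joint bound $(n-1)^{-|C|-|C'|}$, with $\alpha=0$, $\beta=o(1)$; and the same $I_1,I_2,I_3$ split for $\E{T_n}$. You also spell out two points the paper leaves implicit in ``similarly to \eqref{0sum-cycle-prob}'': that the integral representation carries over to unions of disjoint cycles, and that the joint upper bound needs no independence because every factor $1-q(y_i)q(z_j)$ lies in $(0,1]$.
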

\end{tcolorbox}

\begin{proof}
\textbf{Part 1}.
For vertex-disjoint cycles $C_1, \ldots, C_k \in\calC_{2\ell}$,
following a similar proof of Theorem
\ref{0sumThm}, we have
\begin{align}
\label{0sum-cycles}
\P{ \bigcap_{i \in [k]} \{ C_i \subset\calD \}}
\sim (n-1)^{-2k\ell}.
\end{align}
Since only vertex-disjoint cycles contribute and there are $(n)_{k\ell}^2/\ell^k$ choices,
\[
\E{ (X_{\ell})_k }
\sim \frac{(n)_{k\ell}^2}{\ell^k}(n-1)^{-2k\ell}
\sim (1/\ell)^k,
\]
which yields the Poisson convergence.

\textbf{Part 2}.
Let $0<\ep<1$ be fixed.
Define $c= e^{-2e/\ep}
\quad\mbox{and}\quad 
C= \sqrt{\log ( 1+e^3/\ep )}$.
As in the proof of Theorem \ref{thm:longest}, it suffices to show that, as $n\to\infty$, $\P{c\sqrt{n}\leq \ell_{\text{max}}\le C\sqrt{n}}\ge 1-2\varepsilon.$

The first moment calculation is similar to the proof of Theorem \ref{thm:longest}. By Theorem \ref{0sumThm}, we have that the expected cycle count in the zero-sum game is upper bounded by $e^3$ times that in the independent case, and the upper bound follows.

Similarly to the proof of Theorem \ref{thm:longest}, let
\[
\calT
=
\bC{
C\in \calC_{2\ell}:
\ceil{ c\sqrt n }
\le\ell\le
\ceil{ \sqrt n }
}
\quad\mbox{and}\quad
\mu=\sum_{C\in\calT}\E{Y_C}.
\]
Bounding the joint occurrence of two vertex-disjoint cycles, similarly to \eqref{0sum-cycle-prob}, we have
\begin{align*}
\Delta
=
\sum_{
C_i \cap C_j=\emptyset}
\E{ Y_{C_i}Y_{C_j} }
&\le 
\sum_{
C_i \cap C_j=\emptyset}
\frac1{ (n-1)^{|C_i|+|C_j|} } \\
&\le
\bR{\frac{n}{n-1}}^{4\ceil{\sqrt n}}
\sum_{
C_i \cap C_j=\emptyset}
\frac1{ n^{|C_i|+|C_j|} } 
\le 
e^{5/\sqrt n}
\mu^2 = (1+o(1)) \mu^2,
\end{align*}
where the last inequality is similar to that in \eqref{anti-2cycles}.
Then by choosing $\alpha=0$ and $\beta=o(1)$
in Lemma \ref{2rd-moment},
we have
\[
\P{ \ell_{\max} < c\sqrt{n}}
\le
\P{ \sum_{C\in\calT}Y_C=0} \le 
(1+o(1))/\mu
\le \eps,
\]
where noting that $\ell \le \ceil{\sqrt n}$, we have
\begin{align*}
\mu 
\ge 
\sum_{\ell=\ceil{ c\sqrt n }}
^{\ceil{ \sqrt n }}
\sum_{C\in\calC_{2\ell}}
\frac{1}{(n-1)^{2\ell}}\lp1-\frac{\ell^2}{(n-2)^2}- \frac{8n^2\ell^2}{4^n}\rp
&\ge (1-o(1)) \sum_{\ell= \ceil{ c\sqrt n }}^{
\ceil{ \sqrt n }}
\frac1\ell e^{-\ell^2/n} \\
&\ge (1-o(1))e^{-1}\log(1/c).
\end{align*}
This completes the proof of Part 2.

\textbf{Part 3}.
Define $I_1, I_2, I_3$ as in the proof of Theorem \ref{number of cycles independent case}.
By \eqref{eq:saddle} in Proposition \ref{prop:saddle}, we have $\E{ N_\PNE }=o(1)$, then
\[
\E{T_n} = \E{N_{\PNE}} + \sum_{\ell\ge2} \E{X_\ell}
\sim \sum_{\ell\in (I_1\cup I_2)\setminus\{1\}} \E{X_\ell}
+\sum_{\ell\in I_3} \E{X_\ell}.
\]
For $2\le\ell\le\sqrt{n\log n}$, 
Theorem \ref{0sumThm} gives
\[
\lp1-\frac{n\log n}{(n-2)^2}-\frac{8n^3\log n}{4^n}\rp\mu_\ell\le\E{X_\ell}\le e^{2\sqrt{n\log n}/(n-1)}\mu_\ell,
\]
so $\E{X_\ell}=(1+o(1))\mu_\ell$ uniformly. 
The proof of Theorem \ref{number of cycles independent case} gives $\sum_{\ell\in I_1\cup I_2}\mu_\ell=(\frac12+o(1))\log n$ and $\sum_{\ell\in I_3}\mu_\ell=o(1)$.
Noting, by \eqref{0sum-cycle-upper}, that $\E{X_\ell}\le e^3\mu_\ell$ on $I_3$ completes the proof.
\end{proof}

    

\section{Concluding remarks}
We study the non-converging behavior of the best-response dynamics in two-player bimatrix games where payoffs are drawn from continuous distributions. 
By introducing structural constraints on the payoff matrices, specifically, symmetry, anti-symmetry, common-interest, zero-sum properties, and independence, 
we gain insight into how different game classes affect the trajectory and convergence behavior of best-response dynamics.
We provide results on the component structure of the best-response digraphs. 
In particular, our findings include results on the number of short cycles and their asymptotic distribution, on the total number of cycles, and on the longest cycle. 

Some interesting questions for future research include:
\begin{enumerate}
\item \textit{Non-i.i.d.\ or correlated payoffs.}
Our results rely on the i.i.d.\ structure of payoff entries.
Extending the analysis to payoffs with some general correlation structure,
as initiated by Mimun, Quattropani, and Scarsini~\cite{mimun2024best},
is an important direction. In particular, what is the cycle structure
when player payoffs are correlated with coefficient $\rho \in [-1,1]$?
\item \textit{More than two players.}
The best-response digraph formalism extends naturally to $m$-player games
as a directed graph with $n^m$ vertices. Characterizing the cycle structure
in this regime, even for $m=3$, is an interesting problem.
\item \textit{Convergence time.}
Our results determine the limiting structural results, but do not address how long the best-response dynamics take to reach a cycle
or a PNE from a random initial condition. The time of the walk on $\calD$ to its limit
cycle is an interesting quantity connected to the component sizes studied here.
\end{enumerate}

\section{Acknowledgments}
We thank Brendan McKay for a discussion with R.-R.~Z. in 2024, during which the two discovered that the symmetric pair count in a uniformly random
cycle follows Poisson asymptotically, which motivated Lemma~\ref{lem:overlap}.

Nothing in the text is directly from large language models (LLMs).
ChatGPT and Claude were used to
\begin{itemize}
    \item[-] help to edit some TikZ figures and tables;
    \item[-] generate Python simulation code of random games to verify the claimed asymptotics;
    \item[-] audit some proof arguments, check grammar, and proofread.
\end{itemize}

\printbibliography
\addcontentsline{toc}{section}{References}

\end{document}